\documentclass[11pt]{article}

\usepackage{fullpage}
\usepackage[T1]{fontenc}
\usepackage{mathpazo}
\usepackage{microtype}
\usepackage{needspace}
\usepackage{amsmath,amssymb,amsfonts,amsthm,mathtools}
\usepackage{mathrsfs}
\usepackage{bm}
\usepackage{enumitem}
\usepackage{booktabs}
\usepackage[dvipsnames,svgnames,x11names]{xcolor}
\usepackage{hyperref}
\usepackage[nameinlink,capitalize,noabbrev]{cleveref}
\usepackage{todonotes}

\definecolor{DarkRed}{RGB}{170,0,0}
\hypersetup{
  pdftitle={When Unitary Designs Are Not Pseudorandom},
  pdfauthor={Jesko Dujmovic, Jonas Haferkamp, Alexander Poremba},
  colorlinks=true,
  linkcolor=DarkRed,
  citecolor=DarkRed,
  urlcolor=DarkRed,
  breaklinks=true
}

\usepackage{cleveref} 

\usepackage{tikz}
\usetikzlibrary{arrows.meta}

\usepackage{hyperref}
\usepackage{aliascnt}                 
\usepackage[nameinlink,capitalize,noabbrev]{cleveref}

\newtheorem{theorem}{Theorem}[section]

\newaliascnt{lemma}{theorem}
\newtheorem{lemma}[lemma]{Lemma}
\aliascntresetthe{lemma}

\newtheorem{proposition}[theorem]{Proposition}
\newtheorem{definition}[theorem]{Definition}
\newtheorem{conjecture}[theorem]{Conjecture}

\newcommand{\C}{\mathbb C}
\newcommand{\F}{\mathbb F}
\newcommand{\E}{\mathbb E}

\newcommand{\poly}{\operatorname{poly}}
\newcommand{\negl}{\operatorname{negl}}

\newcommand{\Tr}{\operatorname{Tr}}

\newcommand{\Span}{\operatorname{span}}

\newcommand{\bit}{\{0,1\}}
\newcommand{\proj}[1]{\ket{#1}\!\bra{#1}}
\renewcommand{\vec}[1]{\mathbf{#1}}

\newcommand{\ket}[1]{\lvert #1\rangle}
\newcommand{\bra}[1]{\langle #1\rvert}

\newcommand{\ketbra}[2]{\lvert #1\rangle\!\langle #2\rvert}
\newcommand{\norm}[1]{\left\lVert #1\right\rVert}
\newcommand{\abs}[1]{\left\lvert #1\right\rvert}

\newcommand{\HaarU}{\mu_H}

\newcommand{\da}[1]{\delta_{#1}}

\newcommand{\Binf}{\beta_\infty}

\newif\ifnotes
\notesfalse      

\title{On the pseudorandomness of simple quantum processes}

\author{Jesko Dujmovic\footnote{Boston University. \href{mailto:mail@ind-jesko.net}{\texttt{mail@ind-jesko.net}}} \and
Jonas Haferkamp\footnote{Ruhr-University Bochum. \href{mailto:jonas.haferkamp@rub.de}{\texttt{jonas.haferkamp@rub.de}}}
\and Alexander Poremba\footnote{Boston University. \href{mailto:poremba@bu.edu}{\texttt{poremba@bu.edu}}}}
\date{}

\begin{document}

\maketitle

\begin{abstract}
Can simple processes appear highly complex?
Gowers (\emph{Comb. Prob. Comp. '96}) conjectured that repeatedly composing local random reversible operations can yield global permutations that are indistinguishable from random.
In this work, we study the unitary \emph{quantum} analog of this question, in an attempt to make new progress on this longstanding conjecture.

Our first result shows that statistical moment matching in the form of unitary designs does not generically lead to pseudorandomness---even for the simplest quantum processes: for every fixed $t$, we give an efficiently samplable family $\{\nu_n\}_n$ of distributions on one- and two-qubit gates such that, after $T=O_t(n^2\log^2 n)$ independent steps, the resulting $n$-qubit ensemble is an approximate unitary $t$-design with negligible error $\exp(-\Omega(\log^2 n))$, yet an efficient quantum algorithm distinguishes it from random using only $O_t(\log^2 n)$ queries.
This refutes the \emph{unitary} analog of the Hoory--Magen--Myers--Rackoff conjecture (\emph{ICALP '04}) for permutations.

Our second result is a stronger separation between unitary designs and pseudorandom unitaries at polynomially bounded moments; our counterexample, however, requires highly structured ensembles, in contrast with the simple \emph{local} walks from before. This suggests caution when using unitary designs to model information scrambling in black-hole physics, as even \emph{maximally scrambled} systems can exhibit structure which is accessible to efficient experiments.
Motivated by these findings, we then propose new conjectures for how pseudorandomness can plausibly emerge
within simple quantum processes, such as random quantum circuits.
\end{abstract}

\section{Introduction}

The seminal work of Gowers~\cite{Gowers_1996,gowers2013how} shows that random reversible circuits can achieve remarkable mixing properties: repeatedly composing random three-bit scramblers can produce global permutations on $n$-bit strings that match statistical properties of truly random permutations  (see \Cref{fig:random-circuit}). Specifically, they become
approximately $t$-wise independent after $\widetilde O\!\left(n^3t(n^2+t)(n^3+t)\right)$ rounds: for any $t$ distinct inputs, the joint distribution of their
images is close to uniform over ordered tuples of $t$ distinct outputs. These findings have had many applications across many areas of computer science; for example, informing the design of block ciphers~\cite{noauthororeditor}, cryptographic protocols~\cite{canetti2025towards,canetti2026encrypt}, hash functions~\cite{BroderCharikarFriezeMitzenmacher2000}, sampling-without-replacement methods~\cite{Thorup2013BottomK} and derandomization techniques~\cite{mohanty2020explicit}.

Gowers conjectured~\cite{gowers2013how} that these mixing properties suggest an even stronger possibility: a sufficiently deep random reversible
circuit may actually be indistinguishable from a uniformly random permutation to \emph{every} efficient observer.
In other words, they could form ensembles of cryptographic \emph{pseudorandom permutations}~\cite{LubyRackoff1988}.
 This is known as the \emph{Gowers conjecture}.


While approximate $t$-wise independence only constrains
the joint outputs on a prescribed number of inputs in a statistical sense, pseudorandomness requires security against \emph{all} polynomial-time distinguishers,
each of which may make its own polynomial number of queries. Importantly, passing statistical tests that show limited statistical independence, by itself, does not establish this stronger guarantee.
Despite substantial recent progress on the statistical mixing of random circuits~\cite{Brand_o_2016,chen2025incompressibility,gay2025pseudorandomnesspropertiesrandomreversible,baer2026random}, the computational
conjecture still remains wide open as a positive resolution would imply the existence of one-way functions, and hence $\mathsf{P}\neq \mathsf{NP}$.

\begin{figure}[t]
    \centering
    \begin{tikzpicture}[
        scale=0.4,
        x=1.25cm,
        y=0.75cm,
        wire/.style={
            line width=0.9pt,
            draw=black!80
        },
        scrambler/.style={
            draw=DarkRed!70!black,
            fill=DarkRed!15,
            rounded corners=3pt,
            line width=0.95pt
        },
        nonlocalbox/.style={
            scrambler,
            minimum width=0.56cm,
            minimum height=0.40cm,
            inner sep=0pt
        },
        nonlocallink/.style={
            draw=DarkRed!70!black,
            line width=0.95pt
        }
    ]

    \def\circuitwidth{10.8}
    \def\localboxwidth{0.70}

    \foreach \i in {1,...,9} {
        \draw[wire] (0,-\i) -- (\circuitwidth,-\i);
    }

    %
    \newcommand{\localgate}[2]{%
        \draw[scrambler]
            (#1-\localboxwidth/2,-#2+0.38)
            rectangle
            (#1+\localboxwidth/2,-#2-2.38);
    }

    %
    \newcommand{\nonlocalgate}[5]{%
        \node[nonlocalbox] (#1a) at (#2,-#3) {};
        \node[nonlocalbox] (#1b) at (#2,-#4) {};
        \node[nonlocalbox] (#1c) at (#2,-#5) {};

        \draw[nonlocallink]
            (#1a.west)
            .. controls ({#2-0.59},-{#3+0.8})
                       and ({#2-0.59},-{#4-0.8}) ..
            (#1b.west)
            .. controls ({#2-0.59},-{#4+0.8})
                       and ({#2-0.59},-{#5-0.8}) ..
            (#1c.west);
    }


    \localgate{0.85}{2}

    \nonlocalgate{g2}{2.10}{1}{5}{8}

    \localgate{3.35}{5}

    \nonlocalgate{g4}{4.60}{2}{4}{9}

    \localgate{5.85}{1}

    \nonlocalgate{g6}{7.10}{3}{6}{8}

    \localgate{8.35}{4}

    \nonlocalgate{g8}{9.60}{1}{4}{7}

    \end{tikzpicture}

    \caption{A classical random reversible circuit consisting of simple
 $3$-bit scrambling operations. 
    }
    \label{fig:random-circuit}
\end{figure}

Hoory, Magen, Myers, and Rackoff~\cite{HMMR2005} conjectured that approximate \emph{4-wise} independence should already suffice for pseudorandomnness provided one uses ``locally composed''
permutations. Their rationale was that $4$-wise independence is the first threshold at which no obvious distinguishers are known. A common formulation of the conjecture assumes \emph{round locality}: each output bit of a round depends on only
constantly many input bits, although the round may act on all wires. Thus, the conjecture may be stated as follows:

\begin{conjecture}[HMMR Conjecture]
Let $\{\nu_n\}_n$ be an efficiently samplable distribution on reversible rounds on $n$ bits, each of whose
output bits depends on at most a constant number of input bits. If the composition of $T=\poly(n)$
independent samples from $\{\nu_n\}_n$ is approximately $4$-wise independent with exponentially small
error, then the resulting ensemble is a pseudorandom permutation ensemble.
\end{conjecture}

The fact that 3-wise independence is insufficient follows from algebraic examples~\cite{HMMR2005}.
The motivation behind the 4th-order threshold comes from group theory. Any set of permutations that satisfies 4-transitivity will, after enough compositions, at least result in the alternating group \cite{dixion1996permutation,cameron1999permutation}.
This made the HMMR conjecture highly plausible.
In a recent breakthrough, however, Dujmovic, Pelecanos, and Tessaro~\cite{cryptoeprint:2025/2282} refuted this round-local
formulation, even with any constant order of independence.

\paragraph{Quantum pseudorandomness.}
The classical questions above have direct quantum analogues: can simple \emph{quantum} processes, likewise, appear highly complex? Investigating these questions in the more general quantum case could potentially shed new light on the classical setting. Random quantum circuits are among the simplest models of complex quantum dynamics (see \Cref{fig:random-quantum-circuit}) and have a wide range of applications across quantum information science, such as in learning theory~\cite{Aharonov_2023,schuster2025random}, many-body experiments~\cite{schuster2025hardness,mao2025randomunitariesconserveenergy}, and especially in black-hole physics and quantum gravity~\cite{Hayden_Preskill_2007,Lashkari_2013,Roberts_2017,Liu_2018,Kim_2023,poremba_et_al:LIPIcs.ITCS.2026.109,schuster2025strong}. 

One may naturally ask whether the same conjectures hold if we replace reversible gates by local \emph{quantum} gates
and uniform permutations by \emph{unitaries} drawn from the \emph{Haar} measure---the uniform probability measure over the unitary group~\cite{Haar1933,Mele2024introductiontohaar}. In particular, one may ask:

\begin{center}
\emph{Are sufficiently deep random quantum circuits pseudorandom?}
\end{center}

The statistical counterpart of $t$-wise independence in the quantum setting is what is known as a \emph{unitary $t$-design}~\cite{dankert2009exact,gross2007evenly,cleve2015near}---an ensemble
matching the first $t$ unitary moments of the Haar measure.
A pseudorandom unitary~\cite{10.1007/978-3-319-96878-0_5,DBLP:journals/corr/abs-2404-12647,ma2025construct}, by contrast, only needs to fool observers with limited computational resources.
In the statistical setting, it is well known that local random circuits form approximate
designs in depth polynomial in $n$ and $t$~\cite{Brand_o_2016, DBLP:journals/quantum/Haferkamp22,chen2024incompressibilityspectralgapsrandom,baer2026random}. 
As in the classical case, however, the depth depends on $t$, whereas pseudorandomness requires
a \emph{single} polynomial-depth ensemble which remains secure against \emph{all} efficient quantum distinguishers. 
This distinction is precisely the driving motivation for the \emph{quantum Gowers conjecture}\footnote{To the best of our knowledge, explicit formulations of the conjecture first appeared in~\cite{bostanci2024efficientquantumpseudorandomnesshamiltonian,fefferman2025hardnesslearningquantumcircuits}.} which we hinted at earlier; it asks whether random
quantum circuits (see Figure~\ref{fig:random-quantum-circuit}) can form \emph{pseudorandom unitaries} after some sufficient polynomial depth.

Recent developments make this question especially timely. Advances in quantum complexity theory~\cite{kretschmer:LIPIcs.TQC.2021.2} provide strong evidence that pseudorandom unitaries can exist without one-way functions---the de-facto minimal assumption in theoretical cryptography~\cite{noauthororeditor}. This striking observation has led to the pursuit of quantum cryptography from new assumptions which may plausibly hold even if 
one-way function exist do not exist~\cite{khurana2024foundingquantumcryptographyquantum,bostanci2024efficientquantumpseudorandomnesshamiltonian,hiroka2024computationalcomplexitylearningefficiently,Hiroka_2025,fefferman_et_al:LIPIcs.ITCS.2026.56,carrasco2026instantiatingmicrocryptobstaclesopportunities,gulati2026buildmicrocrypt}.
The quantum Gowers conjecture is central among these assumptions, as it is arguably the simplest and most natural way to generate quantum pseudorandomness; in particular, it may plausibly hold even in a world in which $\mathsf{BQP} =\mathsf{QMA}$~\cite{kretschmer:LIPIcs.TQC.2021.2}.
At the same time, the conjecture is also widely believed to be true among quantum information theorists. Despite substantial recent progress on constructing \emph{pseudorandom unitaries} from standard cryptographic assumptions~\cite{10.1007/978-3-319-96878-0_5,cryptoeprint:2024/1470,ananth2024pseudorandom,DBLP:journals/corr/abs-2404-12647,Chen_2024,ma2025construct,schuster2025random,foxman2026quantumlazysamplingpath}, the quantum analog of the Gowers conjecture still remains wide open---just like its classical counterpart.


\begin{figure}[t]
    \centering
    \begin{tikzpicture}[
        scale=0.45,
        x=1.25cm,
        y=0.75cm,
        wire/.style={line width=0.9pt,draw=black!80},
        scrambler/.style={
            draw=DarkRed!70!black,
            fill=DarkRed!15,
            rounded corners=3pt,
            line width=0.95pt
        }
    ]
    \def\circuitwidth{10.8}
    \def\quantumboxwidth{0.70}

    \foreach \i in {1,...,9} {
        \draw[wire] (0,-\i) -- (\circuitwidth,-\i);
    }
    \newcommand{\quantumgate}[2]{%
        \draw[scrambler]
            (#1-\quantumboxwidth/2,-#2+0.32)
            rectangle
            (#1+\quantumboxwidth/2,-#2-1.32);
    }
    \foreach \x in {0.85,3.35,5.85,8.35} {
        \foreach \i in {1,3,5,7} {\quantumgate{\x}{\i}}
    }
    \foreach \x in {2.10,4.60,7.10,9.60} {
        \foreach \i in {2,4,6,8} {\quantumgate{\x}{\i}}
    }
    \end{tikzpicture}
    \caption{A simple nearest-neighbor random quantum circuit, where each box is a two-qubit quantum gate
    sampled independently from the Haar measure over the unitary group $\mathrm{U}(4)$.}
    \label{fig:random-quantum-circuit}
\end{figure}

\paragraph{The Quantum HMMR conjecture.}
To what extent do random quantum circuits mix just like random classical circuits?
Interestingly, the quantum setting has a strikingly similar threshold behavior as in the classical HMMR setting~\cite{HMMR2005}. The Clifford group~\cite{nielsen2001quantum} forms an exact unitary
$3$-design~\cite{webb2016cliffordgroupformsunitary}, yet is efficiently distinguishable from a genuine Haar random unitary:
its action on a computational-basis state produces a stabilizer state, which can be recognized
efficiently~\cite{montanaro2017learningstabilizerstatesbell,GIKL2024}. 
Moreover, classification results for unitary $t$-groups show that $t=4$ is an inherent algebraic threshold at which
more complex behavior must emerge: in dimension $d \geq 3$, no finite subgroup of the unitary group $\mathrm{U}(d)$ forms an exact unitary $t$-design for $t \geq 4$~\cite{bannai2020unitary}.
The absence of a group design for the fourth moment raises an important question which is implicit in many prior works in the literature~\cite{zhu2016clifford,CHAMON2022169086,Homeopathy2023,Kaposi_2026,hunterjones2026idealrandomquantumcircuits}; namely
\begin{center}
\emph{Does Haar-like behavior of unitary designs already appear at the fourth moment?}
\end{center}
This motivates the following natural \emph{quantum analog} of the HMMR conjecture~\cite{HMMR2005}.

\begin{conjecture}[Quantum HMMR Conjecture]\label{conj-HMMR}
Let $\{\nu_n\}_n$ be an efficiently samplable family of distributions on local quantum gates, each acting on a constant number of qubits independent of $n$. Let $\nu_n^{*T}$ denote the $n$-qubit ensemble obtained by independently composing $T=\poly(n)$ gates drawn from $\nu_n$.  
If $\{\nu_n^{*T}\}_n$ forms an approximate unitary $4$-design with negligible error, then it is a pseudorandom unitary ensemble.    
\end{conjecture}

Just like its classical counterpart, the quantum HMMR conjecture asks whether limited statistical properties can be viewed as evidence of computational
pseudorandomness. The locality of the ensemble, likewise, serves as an important purpose as in the classical case: it ensures some amount of ``simplicity'' in contrast to an artificially engineered ensemble which may fail due to intricate algebraic structure~\cite{HMMR2005}. Given that the HMMR conjecture has been refuted in the classical setting, one may ask to what extent random quantum circuits admit similar counterexamples, or whether there are differences between classical and quantum that are yet to be discovered. 


\subsection{Our contributions}\label{sec:our-contributions}

Our limited progress on these conjectures reflects many important gaps in our understanding of how pseudorandomness arises within simple quantum processes.
In this work, we develop the techniques to address them; specifically, we investigate when simple, local ensembles fail to become pseudorandom despite achieving strong statistical moment properties.
This brings the study of pseudorandomness and statistical mixing of quantum processes on roughly the same theoretical footing as those of random classical circuits. 

Building on these no-go results, we then ask when genuine pseudorandomness can emerge; we explore how the physics notions of thermalization, scrambling and entanglement saturation can serve as new proxies for the emergence of pseudorandomness.
We hope that this connection can inform the Gowers conjecture in both the classical and the quantum setting.

We now give an overview of our main results. In what follows, we assume basic familiarity with standard concepts from the literature on unitary $t$-designs and Haar integration; we refer to the preliminary section in \Cref{sec:prelim} for additional background on these topics.

\paragraph{Refuting the Quantum HMMR conjecture.}

Our first contribution focuses on the natural quantum analog of HMMR conjecture~\cite{HMMR2005} which deals with \emph{local} quantum processes. We refute the Quantum HMMR conjecture, as described earlier, in full generality---essentially to the same extent as in the classical setting by~\cite{cryptoeprint:2025/2282}. This is captured by the following theorem.

\begin{theorem}[Refutation of Quantum HMMR]
\label{thm:intro-walk}\label{thm:walk}
For every fixed integer $t\geq4$, there exists an efficiently samplable family $\{\nu_n\}_n$ of distributions on one- and two-qubit gates such that, for all sufficiently large $n$, the ensemble $\nu_n^{*T}$ obtained from $T=O_t(n^2\log^2 n)$ independent steps is an approximate adaptive unitary $t$-design with negligible error $\exp(-\Omega_t(\log^2 n))$. Nevertheless, an efficient quantum algorithm distinguishes this ensemble from Haar with constant advantage using $O_t(\log^2 n)$ parallel forward queries.
\end{theorem}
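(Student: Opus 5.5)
The plan is to build a local walk whose gates all lie in a fixed, highly structured subgroup $G_n\le \mathrm{U}(2^n)$. The subgroup is chosen so that its first $t$ moments agree with Haar up to $\exp(-\Omega_t(\log^2 n))$, while some invariant of degree $m=\Theta_t(\log^2 n)$ in $U$ is preserved by $G_n$ and not by Haar.

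For concreteness I would take $G_n$ to be generated by the $n$-qubit Clifford gates ($H$, $S$, $\mathrm{CNOT}$) together with the single diagonal gate $\mathrm{diag}(1,\omega)$, where $\omega$ is a primitive $2^k$-th root of unity and $k=\Theta_t(\log n)$. The distribution $\nu_n$ picks a uniformly random position (one or two qubits) and applies a uniformly random generator there, possibly lazy. Sampling is clearly efficient.

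The proof then splits into three quantitative steps.

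\textbf{Step 1: mixing to the group.} I would bound the spectral gap of the $t$-th moment operator of $\nu_n$ restricted to the $G_n$-relevant subspace. The aim is to show that $\nu_n^{*T}$ is $\exp(-\Omega_t(\log^2 n))$-close, in adaptive (diamond) norm, to the $t$-th moment of the uniform measure on the closure of $G_n$. The gap should be $\Omega_t(1/n^2)$, obtained by adapting the known local random circuit gap arguments (Brand\~ao--Harrow--Horodecki, Haferkamp) to this gate set via a detectability or Knabe-type lemma. Multiplying by $T=O_t(n^2\log^2 n)$ then gives the required exponent. The factor $d^{2t}$ that arises when converting to adaptive error is $2^{O(nt)}$, which is too large to absorb directly. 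I would handle it by working with the relative-error formulation, or by following the standard route from the literature that converts a spectral gap into adaptive-design error.

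\textbf{Step 2: the group is nearly a $t$-design.} This is the main obstacle. I must show that the commutant of $\{U^{\otimes t}\otimes \bar U^{\otimes t}:U\in G_n\}$ differs from the permutation span only by operators whose contribution to the twirl is at most $\exp(-\Omega(k^2))$. The first route I would try follows Gross--Nezami--Walter for Clifford commutants. Elements of that commutant are stochastic Lagrangian (code) subspaces, and adding the phase gate of order $2^k$ should kill every code subspace except those compatible with $2^k$-th power residues. For $t\ll 2^k$ the only survivors are the permutations. If the group is only dense (not finite), the survivors must instead be shown to have exponentially small weight in $k^2$, for instance through a Weyl-sum or Gauss-sum bound. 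If this fails, my fallback is a wrapped construction: a Haar-random circuit composed with a hidden length-$m$ structured block, analysed through a Schur--Weyl comparison of moment operators.

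\textbf{Step 3: the distinguisher.} Given parallel forward queries, I would prepare $m=O_t(\log^2 n)$ copies of $U\ket{0^n}$, or $U\otimes I$ applied to Bell states. On these I would run a generalized Bell-sampling or symmetry test that projects onto the degree-$m$ invariant preserved by $G_n$: for example, an estimator of the $2^k$-phase stabilizer structure, akin to Gross--Nezami--Walter stabilizer testing lifted to level $k$ of the hierarchy. The test should have acceptance probability $1$ or a constant under $G_n$, and $1/2+o(1)$ under Haar; the Haar side follows from a Weingarten estimate at degree $m\ll 2^{n/2}$. Because Step 2 ties both the design error and the test degree to $k^2$, the parameters are consistent: moments up to $t$ match, while moments of order $\Theta(\log^2 n)$ do not.
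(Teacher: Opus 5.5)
There is a genuine gap at the core of your plan: the structure you want to detect cannot be a group invariant of $G_n$. Since $\omega$ is a primitive $2^k$-th root of unity, $\mathrm{diag}(1,\omega)^{2^{k-3}}=T^{r}$ for some odd $r$. Because $r$ is invertible mod $8$, $G_n$ contains $T$ itself. So for $k\geq 3$, $G_n\supseteq\langle\text{Clifford},T\rangle$ is dense in $\mathrm{U}(2^n)$ up to global phases. By continuity, the commutant of $\{U^{\otimes m}\otimes\overline U^{\otimes m}:U\in G_n\}$ then equals the Haar commutant in \emph{every} degree $m$.

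This makes Step~2 trivial, but leaves Step~3 with nothing to test. No invariant of degree $\Theta_t(\log^2 n)$, or of any degree, separates $G_n$ from Haar, and the moments of your walk simply converge to the Haar moments. For $k\leq2$ you only get the Clifford group, which the six-copy stabilizer test already distinguishes with constant advantage. So ``matches $t$ moments'' and ``preserves a low-degree invariant'' cannot both be properties of the group.

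Your walk also picks the non-Clifford generator with constant probability. A sample of $T$ steps therefore contains $\Theta(T)$ non-Clifford gates, and no stabilizer-type structure survives even at finite time.

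Step~1 has a second, independent gap, which you acknowledge but do not resolve. A gap of $\Omega_t(1/n^2)$ over $T=O(n^2\log^2 n)$ steps gives only $\|(\mathcal M^{(t)}_{\nu_n})^{T}-\mathcal M^{(t)}_{\HaarU}\|_\infty\leq e^{-\Omega(\log^2 n)}$. Converting this to adaptive error costs a factor $\|\Lambda_{\mathcal A}\|_1\leq d^{2t}=2^{2nt}$. The standard route therefore needs on the order of $nt/\mathrm{gap}$ steps, roughly $n^3$ with your gap estimate. Even in a walk where non-Clifford gates are rare (overall gap $\Theta(1/n^2)$), that many steps would contain $\Theta(n)$ of them, too many for any stabilizer-based test.

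The paper resolves this tension with three ideas missing from your proposal.
\begin{enumerate}
\item The non-Clifford $Z$-rotations are made \emph{rare}, with probability $1/m$ for $m=A_tn^2$. With high probability only $h=O(\log^2 n)$ occur, so $\dim S(U\ket{0^n})\geq n-h$, and the Bell-difference test of \cref{lem:walk-bell-test} with $12h+10$ copies distinguishes. The structure is a finite-time feature of the walk, not a group invariant.
\item $\mathcal M^{(t)}_{\overline{\mathcal C}_n}$ has rank $D_t$, independent of $n$. Hence, with $M$ the one-step moment operator, all but $D_t$ eigenvalues of $M^m-\mathcal M^{(t)}_{\HaarU}$ are at most $e^{-\gamma m/(2n)}$, small enough to absorb $d^{2t}$. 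The remaining $D_t$ slow modes decay at the constant rate $e^{-a/2}$ per segment, by the doped-Clifford insertion gap.
\item The slow modes are controlled without any dimension factor by \cref{lem:walk-scalar-decay}. It uses only that the first $D_t$ values of the expansion are bounded acceptance-probability differences.
\end{enumerate}
Running $3\lceil\log^2 n\rceil$ segments then gives error $e^{-\Omega_t(\log^2 n)}$.
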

This rigorously shows that statistical moment properties, in general, fail to translate into computational pseudorandomness---even for simple processes that comprise local interactions.
Interestingly, our approach deviates significantly from the classical counterexample of~\cite{cryptoeprint:2025/2282} and instead relies on inherently quantum techniques which have no classical counterpart.\footnote{Interestingly, our counterexample does not require the use of a \emph{counter} as in the classical setting~\cite{cryptoeprint:2025/2282}.}


Our approach is to apply a random nearest-neighbor Clifford gate at each step, except with some small probability $1/m$ with $m=\Theta_t(n^2)$, when we instead apply a random non-Clifford single-qubit $Z$-rotation. 
The Clifford gates provide enough mixing without destroying stabilizer structure, while the occasional rotations allow the moments to approach those of the Haar measure. This approach is inspired by the structured ensembles in~\cite{Homeopathy2023,BittelLeone26}; in our case, however, we analyze a \emph{single i.i.d.\ local walk}. A standard spectral gap analysis does not suffice for us: converting a moment bound into an adaptive-query bound incurs an exponential blow-up in $n$, and running the walk long enough to compensate for this factor would introduce too many non-Clifford gates.

The key is to use more information about the spectrum of the moment operator. We know that the local Clifford walk converges rapidly to the Clifford twirl, but this alone does not give Haar moments since the Clifford twirl fixes more operators than the Haar twirl does. However, for fixed $t$, the fixed-point space of the $t$-fold Clifford twirl has dimension bounded independently of $n$~\cite{gross2021schur}. Thus, Clifford mixing supresses most of the spectrum and only constantly many eigenvalues need to be handled by the rare non-Clifford rotations. 
In this subspace of constant dimension it then suffices to apply known spectral bounds on doped Clifford circuits~\cite{baer2026random}. 

To prove the adaptive design property, we fix an adaptive $t$-query algorithm and express the acceptance probability using the spectral decomposition of the moment operator. The exponentially small eigenvalues already compensate for the dimension factor in the usual bound. The remaining contribution is a sum of constantly many decaying exponentials, one for each of the remaining eigenvalues. Instead of bounding the coefficients separately, we exploit the fact that the first few values are bounded because they are differences of acceptance probabilities. 
This strategy is similar in spirit to the polynomial method, where the boundedness of the outcome probabilities themselves is exploited to argue about the distinguishability of two quantum algorithms.
The boundedness of the outcome probability was also directly used in~\cite{BittelLeone26}.

\paragraph{Separating unitary designs from pseudorandom unitaries.}

The above counterexample to the quantum HMMR conjecture relies crucially on the fact that the ensemble only matches \emph{constant} moments of the Haar measure. Could an ensemble that matches much higher (say, polynomial) moments of the Haar measure be sufficient to achieve pseudorandomness? Clearly, matching increasingly higher moments seems to rule out more and more statistical tests. We prove that that the answer is \emph{no} in general; however, to show this we must deviate from the prior setting of \emph{local} ensembles and explore new constructions. While this statement is well-known classically, the quantum case appears much more subtle and has remained open for quite some time.

Our second contribution is the (to the best of our knowledge) first rigorous separation between arbitrary polynomial-order unitary designs and pseudorandom unitaries.
\begin{theorem}[Separating Adaptive Unitary Designs from Pseudorandom Unitaries]
\label{thm:intro-phase}\label{thm:phase}
Let $n\geq64$ and $1\leq t\leq 2^{n/4-4}$ be an integer (depending on $n$). For every
$0<\varepsilon\leq1/2$, there exists an adaptive unitary $t$-design in depth $\poly(n,t,\log(1/\varepsilon))$ with error $\frac{2t(t-1)}{2^n}+\varepsilon$, and yet an efficient quantum algorithm using $O(nt)$ forward queries
and $\poly(n,t)$ time can distinguish it from Haar
with constant advantage.
\end{theorem}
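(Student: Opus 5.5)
The plan is to build the ensemble as $U = C\,D_f\,C'$: a structured diagonal phase oracle sandwiched between exact or approximate designs of low order that carry a hidden algebraic signature. The intended ensemble is $U = P\, D_f\, H^{\otimes n} D_g$, or more concretely the ``PFC''-type construction $U = P F C$. Here $P$ is a random permutation, $F$ is a random binary phase $\sum_x (-1)^{f(x)}\proj{x}$, and $C$ is a random Clifford. For the design property I would use $2t$-wise independent functions $f$ and permutations $P$ drawn from an (approximately) $t$-wise independent permutation family. I would then invoke the known analysis of the PFC ensemble (Metger--Poremba--Sinha--Yuen, and Ma--Huang for adaptive security), which shows that PFC with truly random $P,F$ is an adaptive $t$-design up to error $O(t^2/2^n)$. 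That analysis only uses $t$-th moments of $P$ and $F$, so it transfers to $2t$-wise independent $f$ and to $P$ that is $t$-wise independent up to $\varepsilon$.

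The errors combine as follows. The error $2t(t-1)/2^n$ comes from the distinct-subspace (collision) step of the path-recording argument. The additive $\varepsilon$ comes from the approximate $t$-wise independence of $P$, which is realised with depth $\poly(n,t,\log(1/\varepsilon))$ reversible circuits, for example Gowers-type random reversible circuits or the Feistel/Kassabov constructions. The function $f$ is a degree-$(2t-1)$ polynomial over $\F_{2^n}$, computable in $\poly(n,t)$ depth. The restriction $t\le 2^{n/4-4}$ is needed so that a polynomial of degree about $2t$ still leaves room for distinguishing, and so that the moment analysis's requirement $t^2\ll 2^{n}$ holds.

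For distinguishability, the idea is to exploit that $f$ is a low-degree polynomial over $\F_{2^n}$ (or over $\F_2^n$ of low algebraic degree). A distinguisher must undo $P$ and $C$, but these are unknown. So instead I would choose the structure so that the attack works despite them. Concretely, I would drop $P$ in favour of a $2t$-wise independent permutation family with algebraic structure, for example affine or low-degree maps whose output remains a low-degree polynomial phase state $\sum_x (-1)^{q(x)}\ket{x}$ after applying $U$ to a suitable input. The attack queries $U$ on $O(nt)$ copies of a fixed input and collects the resulting phase-state copies. It then learns $q$ by Bernstein--Vazirani/Fourier sampling of derivatives: a derivative of a degree-$d$ phase is a degree-$(d-1)$ phase, so one obtains $O(n)$ linear equations per degree level, which gives the $O(nt)$ total. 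Finally it verifies consistency on fresh copies. A Haar state passes this test with probability $2^{-\Omega(n)}$, giving constant advantage with $\poly(n,t)$ time.

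The main obstacle is reconciling the two requirements. The ensemble must retain enough algebraic structure (low-degree phases visible after the Clifford/permutation layers) for efficient learning, while still being exactly $t$-wise independent enough for the path-recording adaptive-design proof. I would first try $U = F_f\, H^{\otimes n} F_g\, C$ with $f,g$ random degree-$2t$ polynomials. The design argument would rely on $F_f H F_g$ being an adaptive design-like ``random phase then Fourier then random phase'' twirl, and I would check whether the $2t(t-1)/2^n$ collision bound survives replacing random functions by $2t$-wise independent ones. The attack would then invert the known $H^{\otimes n}$ layer and run derivative-based polynomial learning on $f$.
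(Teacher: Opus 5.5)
There is a genuine gap: you never reach a concrete ensemble that both satisfies your design argument and has a fixed, publicly known input on which every member acts as a learnable phase oracle. The two halves of your plan also work against each other.

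In $U=PFC$, and equally in your fallback $U=F_fH^{\otimes n}F_gC$, the random Clifford $C$ acts first. So every fixed input is mapped to an unknown, key-dependent state before any phase is applied. The step ``invert the known $H^{\otimes n}$ layer and learn $f$'' cannot be carried out, because $H^{\otimes n}$ is sandwiched between the unknown $F_f$ and the unknown $F_gC$. Indeed, obtaining $F_f\ket{+}^{\otimes n}$ would require an input proportional to $C^\dagger\ket{0^n}$, so no fixed input yields copies of it.

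Your alternative is to replace $P$ by affine or other algebraic permutations so that low-degree phases survive. That removes exactly the approximate $t$-wise independence of $P$ on which your transfer of the PFC analysis relies. Uniformly random affine permutations of $\F_2^n$ are not even $4$-wise independent: four distinct points with $x_1+x_2+x_3+x_4=0$ keep this relation under every affine map. Even then, the query output would be a permuted, Clifford-scrambled state rather than a phase state. Your last paragraph names this tension but does not resolve it, so neither the design bound nor the distinguisher is established for any specific ensemble.

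The missing idea is to plant the structure on a fixed vector that the mixing layer leaves invariant. The paper takes $U=FB$, where $F=F_a$ is the $2t$-wise independent phase $f_a(x)=\mathrm{tr}(p_a(x))$. The factor $B$ is (approximately) Haar on $(\ket{+}^{\otimes n})^\perp$ and the identity on $\ket{+}^{\otimes n}$. Then $U\ket{+}^{\otimes n}=\ket{\psi_{f_a}}$ for every $B$, so forward queries on $\ket{+}^{\otimes n}$ directly give phase-state copies.

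The design bound survives because every computational-basis permutation $P$ fixes $\ket{+}^{\otimes n}$. Hence $PB_\infty$ has the same law as $B_\infty$, and $FB_\infty$ has the law of $FPB_\infty$. This is a PFC-type ensemble with $B_\infty$ as the collision-breaking layer. A direct computation gives $\eta_q(B_\infty)\le q(q-1)/d$ despite the fixed vector. The positive-operator argument then converts $\eta_q(B_\infty)+\eta_q(\HaarU)$ into the $2t(t-1)/2^n$ term. The additive $\varepsilon$ comes from replacing $B_\infty$ by a walk on $\ket{-}$-controlled constant-depth generators, using the $(2n-1)$-factor decomposition plus Kazhdan-constant bounds. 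So no $t$-wise independent permutation is ever implemented.

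Finally, your count of ``$O(n)$ equations per degree level'' is not justified. The $O(nt)$ bound comes from the fact that $f_a$ is $\F_2$-linear in the $2nt$ seed bits. Each derivative sample, a uniform pair $\{x,y\}$ together with the bit $f_a(x)+f_a(y)$, is therefore one linear equation on the seed. The Weil bound, which is where $2t\le 2^{n/4}$ is used, shows that every seed difference giving a nonconstant function violates a random equation with probability at least $3/8$. A union bound over $2^{2nt}$ seeds then needs only $O(nt)$ samples.
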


The basic idea behind the specific circuit ensemble is to engineer a special fixed \emph{subspace} in which the ensemble retains a controlled amount of detectable structure, while exhibiting strong mixing behavior on the orthocomplement of the subspace.

We now describe our circuit ensemble in more detail. First, we pick the one-dimensional subspace $S = \Span\{ \ket{+}^{\otimes n}\}$. Our construction then uses what we call the ``$FB$ ensemble'': we construct unitaries of the form $U=FB$, where $F$ is a diagonal phase function circuit, and where $B$ is a block-diagonal circuit that leaves $S$ invariant and mixes its orthogonal complement.
Before we explain the full construction in detail, we will first consider a simpler ensemble of the form $U = F B_\infty$, for an idealized unitary $B_\infty$. This is a product of two unitary operators, where
\begin{itemize}
    \item $F=\sum_{x }(-1)^{f(x)}\proj{x}$ is a phase function operator that maps $\ket{+}^{\otimes n}$ to
a binary phase state $\ket{\psi_f} = 2^{-n/2} \sum_x (-1)^{f(x)} \ket{x}$, for a $2t$-wise independent function\footnote{Concretely, we choose $f(x)=\mathrm{tr}_{\F_{2^n}/\F_2}\!\left(\sum_{j=0}^{2t-1}a_jx^j\right)$ for uniformly random and independent coefficients $a_j\in\F_{2^n}$. As we explain in \Cref{sec:phase-construction}, this ensemble requires merely $2nt$ seed bits and is $2t$-wise independent.}  $f$, and

\item $B_\infty$ is a Haar random unitary which leaves the subspace $S$ invariant; in other words, $B_\infty$ is a block-diagonal unitary such that $B_\infty = 1_S \oplus B_{S^\perp}$, where $ B_{S^\perp}\sim\HaarU$, i.e., $B_{S^\perp}$ is sampled according to the Haar measure over the unitary group acting on $S^\perp =\Span\{ \ket{+}^{\otimes n}\}^\perp$.
\end{itemize}
We now explain why this ensemble remains distinguishable from a Haar random unitary. Suppose that we query the unitary $m$ many times on input $\ket{+}^{\otimes n}$ within $S$. This results in $m$ copies of a binary phase state $\ket{\psi_f}^{\otimes m}$. By design, these binary phase states are chosen precisely so that they match strong statistical properties of the Haar measure; for example, when $m=t$, we know that $2t$-wise independence ensures that these states form a state $t$-design with error $O(t^2/2^n)$, and thus approximately match $t$ moments of the Haar measure~\cite{brakerski2019pseudo}.
However, for a slightly larger number of copies, say $m=O(nt)$, we can exploit the fact that binary phase states become \emph{learnable}, as shown by Arunachalam, Bravyi, Dutt, and Yoder~\cite{arunachalam2023optimalalgorithmslearningquantum}. Using their derivative-measurement technique, we find that each copy reveals a relative phase $f(x)+f(y)$ for a uniformly random pair of distinct inputs, which is a linear equation in the $2nt$ seed bits. A simple finite-field character-sum bound shows that only $m=O(nt)$ such equations determine the phase function up to an additive constant with high probability. Gaussian elimination then allows us to recover a candidate, and testing it on fresh copies distinguishes our ensemble from Haar with constant advantage.

Next, we explain why the invariance of $U = F B_\infty$ on the special subspace $S=\Span\{ \ket{+}^{\otimes n}\}$ does not spoil the overall unitary $t$-design property. Our key insight is that $U = F B_\infty$ can be viewed as a special case of the  "PFC ensemble" of Metger, Poremba, Sinha and Yuen~\cite{10756150}. 

To see why this is the case, notice that every computational basis permutation $P$ fixes $\ket{+}^{\otimes n}$, and hence Haar invariance on $S^\perp$ tells us that $PB_\infty$ has the same distribution as $B_\infty$. We can therefore insert an independent uniform permutation without changing the ensemble:
\[
    F B_\infty
   \quad \mapsto \quad F PB_\infty
    =P(P^\dagger F P)B_\infty.
\]
For each fixed $P$, the conjugated phases $P^\dagger F P$ remain $2t$-wise independent. Moreover, $B_\infty$ supplies the other crucial role of the Clifford \emph{collision-breaking operator}: on $t$ copies, it places all but $O(t^2/2^n)$ expected weight in the \emph{distinct subspace}, spanned by computational basis tuples with pairwise distinct labels, uniformly over the input. On this subspace, phase averaging removes all matrix entries except those relating permutations of the same tuple, while the random basis permutation makes these tuples uniform. To formally carry out the proof of the approximate adaptive unitary $t$-design property, we rely on a recent
positive-operator analysis of the PFC construction due to Cui, Schuster, Brand\~ao, and Huang~\cite{CuiEtAl2025}. 
Their proof technique allows us to turn this collision bound into an $O(t^2/2^n)$ unitary $t$-design error---even with respect to \emph{adaptive} $t$-query experiments, without a square-root loss which is inherent in~\cite{DBLP:journals/corr/abs-2404-12647}. This completes our proof overview of the idealized $U = F B_\infty$ ensemble, and shows that it achieves the statistical moment property, despite its efficiently detectable structure on the special subspace $S=\Span\{ \ket{+}^{\otimes n}\}$.

Finally, we replace the idealized operator $B_\infty$ by an explicit random circuit that fixes $\ket{+}^{\otimes n}$. The key step is a decomposition of every special unitary on $S^\perp$ into $2n-1$ operations, each applying a unitary to $n-1$ qubits conditioned on the remaining qubit being in the state $\ket{-}$. We combine this decomposition with controlled versions of the constant-depth generators of~\cite{anshu2026depth}. Next, we use bounds on Kazhdan constants which result in a walk on $O(n)$ efficiently implementable circuits with an inverse-polynomial spectral gap. This gap lets us approximate the $t$ moments of $B_\infty$ in depth $\poly(n,t,\log(1/\varepsilon))$, with error small enough to control adaptive $t$-query experiments. Since every step fixes $\ket{+}^{\otimes n}$, the phase-learning distinguisher still applies to $U=FB$. Unlike our first construction, however, the walk samples \emph{controlled} circuits rather than individual local gates, and hence the separate phase circuit $F$ remains essential for the construction.

\paragraph{Pseudorandomness and maximal scrambling.}

Our two no-go results suggest that both locality and the number of matched moments play a role in the emergence of pseudorandomness:
\begin{itemize}
    \item \emph{locality} constrains the kinds of algebraic structure that can arise, and in turn limits what an efficient distinguisher can exploit, as in the $FB$ ensemble; the more structure we build into the ensemble, the more opportunities we give a distinguisher to take advantage of it. 
    
    \item \emph{moment matching} is equally important to defeat broad classes of statistical tests, as repeated local dynamics do not provide a guarantee on randomness. Our quantum HMMR counterexample shows that local iterated ensembles can still retain efficiently detectable stabilizer structure when it matches only a constant number of moments of the Haar measure.
\end{itemize}
We should therefore try to match more moments while keeping the process local. But how many moments should we match before we can reasonably expect pseudorandomness to kick in? 

In \Cref{sec:scrambling}, we explore an explicit connection to the notion of \emph{scrambling} from many-body physics. Scrambling describes how initially local information spreads across many degrees of freedom until it becomes inaccessible to local observations. This is closely related to \emph{diffusion} in classical block ciphers, where changes in a few input bits spread across many output bits~\cite{HMMR2005,noauthororeditor}. The connection to unitary designs provides a natural scale for our question: Liu, Lloyd, Zhu, and Zhu~\cite{Liu_2018} showed that higher moment matching leads to saturation at progressively higher orders in a hierarchy of entanglement entropies. Eventually, this process reaches a limit: when $t= \Theta(n)$ all entanglement entropies are saturated, which is known as \emph{maximal scrambling}.

Our results suggest, however, that even maximal scrambling does not imply pseudorandomness, at least \emph{generically}. The $FB$ ensemble can reach this regime, with expected min-entropy within a constant of the maximum across every fixed balanced crossed cut of its Choi state, yet remain efficiently distinguishable from Haar with constant advantage using only $O(n^2)$ forward queries. Increasing the depth of $B$ improves mixing without changing the relation $U\ket{+}^{\otimes n}=F\ket{+}^{\otimes n}$, so the same phase-learning distinguisher continues to work. This suggests caution when using unitary designs to model scrambling in quantum many-body physics and black-hole evolution: even \emph{maximally scrambled} systems may exhibit some amount of structure that is accessible to efficient experiments. This challenges the notion that maximal scrambling is synonymous with Haar behavior.  Pseudorandomness, on the other hand, still remains a reliable proxy. Can maximal scrambling still teach us something about pseudorandomness?

Neither counterexample rules out pseudorandomness when independent \emph{local} dynamics match moments up to the maximal scrambling scale, which suggests that a genuine threshold behavior may occur. We therefore conjecture that an ensemble obtained by independently composing $T=\poly(n)$ local gates drawn from the same efficiently samplable distribution is pseudorandom whenever it forms an approximate unitary $t$-design for $t=\Theta(n)$ with negligible additive error. 

Known results on local random quantum circuits already suffice to achieve this regime, giving polynomial-depth designs at $t=\Theta(n)$~\cite{Brand_o_2016}, so this conjecture offers a possible route towards the quantum Gowers conjecture. Recent results showing uniformly small outcome probabilities at the same design order provide further evidence for the significance of this threshold~\cite{hunterjones2026idealrandomquantumcircuits}. More generally, this connection suggests studying growing-order independence in local reversible circuits as a way to move beyond the classical HMMR counterexamples, and may provide a new perspective on the Gowers conjecture in both the classical and the quantum setting.

\section{Related work}
\label{sec:related-work}

The closest classical precursor to our work is that of Dujmovic, Pelecanos, and Tessaro~\cite{cryptoeprint:2025/2282}, which refuted the conjecture of Hoory, Magen, Myers, and Rackoff~\cite{HMMR2005} by constructing locally composed permutations with approximate independence of any prescribed constant order that fail to become pseudorandom. Their notion of locality allows a round to act on all wires, provided each output bit depends on only constantly many input bits. Our first result gives a quantum analogue under the requirement that each step acts on at most two qubits. Our proof, however, uses inherently quantum techniques to balance moment matching and stabilizer structure.

\cite{Homeopathy2023} shows that unitary designs can be obtained with a number of non-Clifford gates independent of the system size, for fixed design order and error. Bittel and Leone~\cite{BittelLeone26} strengthen this to adaptive queries and give an efficient test for the stabilizer structure that survives at low non-Clifford count. These results provide distinguishable designs beyond the Clifford $3$-design example~\cite{webb2016cliffordgroupformsunitary,GIKL2024}. Our construction deviates from these two works as it obtains a negligible error adaptive design via a homogenous i.i.d.\ walk on only one- and two-qubit gates. 

Our higher-order separation is related to constructions and learning algorithms for quantum phase states. Brakerski and Shmueli~\cite{brakerski2019pseudo} showed that binary phase states defined by $2t$-wise independent functions form approximate state $t$-designs. Arunachalam, Bravyi, Dutt, and Yoder~\cite{arunachalam2023optimalalgorithmslearningquantum} developed derivative measurements for learning polynomial phase states, which we use to recover the seed of our phase family. Our second main result extends these learning algorithms into a fine-grained query setting. The analysis of our second ensemble crucially builds on the PFC construction of Metger, Poremba, Sinha, and Yuen~\cite{10756150} and the adaptive-query methods of Cui et al.~\cite{CuiEtAl2025}. Our work shows how to embed structure into the PFC ensemble in a way that does not spoil the design property.

A different obstruction to pseudorandomness was established by Cui et al.~\cite{cui2025randomunitarieshamiltoniandynamics}, who showed that evolution under constant-local, time-independent Hamiltonians is efficiently distinguishable from Haar measure, irrespective of the evolution time. Their obstruction already prevents sufficiently accurate unitary $2$-designs. Importantly, it does not separate moment matching from pseudorandomness: both properties fail in their setting. Our local-circuit counterexample instead achieves any prescribed constant design order with negligible additive error, so its distinguishability cannot be explained by such a low-moment obstruction.

Liu, Lloyd, Zhu, and Zhu~\cite{Liu_2018} showed that statistical moments can be viewed as a proxy for Haar randomness through the lens of entanglement entropies: they showed that increasing design order controls progressively higher R\'enyi entanglement entropies, with sufficiently accurate designs of order $\Theta(n)$ giving near-maximal expected min-entropy. In our work, we  discuss explicit connections to computational pseudorandomness.

Concurrent work of Raza, Eisert, and Fefferman~\cite{raza2026distinctness} studies the relationship between statistical and computational pseudorandomness from a complementary direction: they show that design-level randomness is not necessary for constructing pseudorandom unitaries, identifying a substantially weaker ``distinctness'' property that suffices in the non-adaptive setting. Our results show that even strong unitary-design guarantees need not be sufficient for pseudorandomness, so together the two works delineate the gap between statistical and computational notions of quantum randomness from opposite sides.

\paragraph{AI Statement.}

The conceptual ideas behind all of the main results are due to the authors. Prior to the release of ChatGPT Astra, we had obtained preliminary versions of \Cref{thm:intro-walk} and \Cref{thm:phase} without substantial use of AI tools. After its release, we used ChatGPT Astra to improve and simplify our constructions further, so as to obtain the best bounds and parameters for all of our theorems. The model was used to assist with some parts of the technical write-up, for devising figures and for the literature review.  We give special credit to ChatGPT Astra for its contribution to the formulation and proof of \Cref{lem:walk-scalar-decay}. The authors checked the results and take full responsibility for their correctness.

\paragraph{Acknowledgements.}

We thank Ran Canetti and Angelos Pelecanos for many useful discussions.

\paragraph{Organization.} In \cref{sec:prelim}, we present mathematical preliminaries and a minimal background on quantum information.
The proof behind the counterexample to the quantum HMMR conjecture in \Cref{thm:intro-walk} appears in \Cref{sec:walk-proof}. Our separation between unitary designs and pseudorandom unitaries in \Cref{thm:phase} is shown in~\Cref{sec:phase}. In \Cref{sec:scrambling}, we explore connections to many-body physics and scrambling, and present new conjectures for when pseudorandomness can emerge.

\section{Preliminaries}
\label{sec:prelim}

 We write $[n]=\{1,\ldots,n\}$ and
$|S|$ for the cardinality of a finite set $S$. If $X$ is distributed
according to a probability measure $\mu$, we write $X\sim\mu$ and
$\E_{X\sim\mu}$ for the corresponding expectation. We use
$O_t(\cdot)$, $\Omega_t(\cdot)$, and $\Theta_t(\cdot)$ when the implicit
constants may depend on $t$. A function $\negl:\mathbb{N}\to\mathbb{R}_{\ge 0}$ is \emph{negligible} if, for every polynomial $p$, we have $\negl(n)<\frac{1}{p(n)}$,
for all sufficiently large $n$.

\paragraph{Operators, norms, and vectorization.}

We use $\mathrm{L}(V)$ to denote the set of linear operators acting on a finite-dimensional complex vector spaces $V$. The $d$-dimensional identity operator is $I_d$.
For $X \in \mathrm{L}(\mathbb{C}^d)$, we use $\|X\|_1=\Tr\sqrt{X^\dagger X}$ for the trace norm,
$\|X\|_2=(\Tr X^\dagger X)^{1/2}$ for the Frobenius norm,
$\|X\|_\infty$ for the operator norm.

\paragraph{Haar measure and moment operators.}
The Haar measure $\HaarU$ is the unique (left and right) invariant uniform probability measure over the unitary group $\mathrm{U}(d)$; for any integrable function $f$, it comes with the invariance~\cite{Haar1933,simon1996representations} of the form
\begin{align}
\int_{\mathrm{U}(d)} f(UV) \mathrm{d}\,\HaarU(U)  = \int_{\mathrm{U}(d)} f(U) \mathrm{d}\,\HaarU(U)   = \int_{\mathrm{U}(d)} f(VU) \mathrm{d}\,\HaarU(U).
\end{align}
For brevity, we often write $\int_{\mathrm{U}(d)} f(U) \mathrm{d}\,\HaarU(U) = \underset{U \sim \HaarU}{\E}[f(U)]$, where we use the notation $U \sim \HaarU$ to describe that the unitary $U$ is sampled uniformly according to the Haar measure over $\mathrm{U}(d)$. More generally, for a unitary ensemble $\nu$, the notation $U\sim\nu$ means that $U$ is sampled according to $\nu$.
For $U\sim\nu$ and $V\sim\eta$ sampled independently, $\nu*\eta$
is the distribution of $UV$.
The notation $\nu^{*r}$ describes a product of $r$ independent samples $(U_r U_{r-1} \dots U_1)$, where $U_i \sim \nu$ for $i\in[r]$.

Given a unitary ensemble $\nu$ and $t \in \mathbb{N}$, we define the $t$-fold twirl $\mathcal{T}_{\nu}^{(t)}$ over $\nu$ as the map
\begin{equation}
 X \quad \mapsto \quad \mathcal{T}_{\nu}^{(t)}(X)=\underset{U \sim \nu} {\E}\left[U^{\otimes t}X(U^\dagger)^{\otimes t}\right].
 \label{eq:twirl}
\end{equation}
We denote the corresponding vectorized moment operator of order $t$ as
\begin{equation}
\mathcal{M}_{\nu}^{(t)}=\underset{U \sim \nu} {\E} \left[
 U^{\otimes t}\otimes\overline U^{\otimes t} \right].
 \label{eq:moment}
\end{equation}
Thus $\mathrm{vec}(\mathcal T_\nu^{(t)}(X))
=\mathcal M_\nu^{(t)}\mathrm{vec}(X)$ for every $X$.
We use $\mathcal{T}_{\mu_H}^{(t)}$ and $\mathcal{M}_{\mu_H}^{(t)}$ to denote the $t$-fold twirl and moment operator for the Haar measure. 

\paragraph{Adaptive unitary designs.}

A $t$-query quantum oracle-algorithm $\mathcal{A}^U$ with access to an $n$-qubit unitary oracle $U \in \mathrm{U}(2^n)$ is a sequence of $(n+m)$-qubit unitaries of the form
\begin{align}
A_t \cdot U \cdot A_{t-1} \cdot U \cdot A_{t-2} \cdots A_1 \cdot U \cdot A_0  
\end{align}
where $A_t,A_{t-1},\dots,A_1,A_0$ act on registers $AB$, for an $n$-qubit register $A$ and $m$-qubit ancilla register $B$, for $m \in \mathbb{N}$. The \emph{state} of a $t$-query oracle algorithm $\mathcal{A}^U$ on input $\ket{0^{n+m}}$ is defined as
\begin{align}
\ket{\mathcal{A}_t^U} = A_t \cdot U \cdot A_{t-1} \cdot U \cdot A_{t-2} \cdots A_1 \cdot U \cdot A_0  \ket{0^{n+m}}. 
\end{align}

An (adaptive) approximate unitary $t$-design approximately matches the first $t$ moments of the Haar measure. We give a formal definition below.




\begin{definition}[Approximate unitary $t$-design]
Let $n, t \in \mathbb{N}$. An ensemble of $n$-qubit unitaries $\nu$ is an $\epsilon$-approximate adaptive unitary $t$-design if any adaptive $t$-query quantum oracle algorithm $\mathcal{A}^U$ can distinguish $U \sim \nu$ from a Haar random unitary $U \sim \HaarU$ with advantage at most $\epsilon$; taking the supremum over all such $\mathcal{A}^U$ with two-outcome measurements $(M_{\mathcal{A}},I-M_{\mathcal{A}})$, this means
\begin{align}
\da{t}(\nu,\mu_H):=\sup_{\mathcal A,M_{\mathcal A}}\left|\underset{U \sim \nu} {\E} \Tr\left[ M_{\mathcal{A}} \cdot \proj{\mathcal{A}_t^U}\right]  - \underset{U \sim \HaarU} {\E} \Tr\left[ M_{\mathcal{A}} \cdot\proj{\mathcal{A}_t^U}\right] \right| \leq \epsilon.
\end{align}
\end{definition}

Next, we introduce the computational analog of \emph{pseudorandom unitaries}~\cite{10.1007/978-3-319-96878-0_5,DBLP:journals/corr/abs-2404-12647,ma2025construct}. This requires us to formally define the notion of a quantum polynomial-time algorithm.

\paragraph{Pseudorandom unitaries.} 

A quantum polynomial-time (QPT) algorithm $\mathcal{A}$ is a uniform family of circuits $\{\mathcal{A}_n\}_{n \in \mathbb{N}}$ such that there is a polynomial $p(n)$ with the property that the circuit size of $\mathcal{A}_n$ is bounded by $p(n)$; here, uniform means that there exists a polynomial-time classical algorithm which, on input $1^n$, outputs a classical description of $\mathcal{A}_n$. 
A QPT algorithm, may also initialize some number of
ancilla qubits to $\ket{0}$ and trace them out at the end of the computation.

\begin{definition}[Pseudorandom unitary]\label{def:PRU} A unitary ensemble of $\{\mathcal{U}_n\}_{n \in \mathbb{N}}$ is called a pseudorandom unitary if, for all $n$, $\mathcal{U}_n = \{U_k\}_{k \in \mathcal{K}_n}$, is a set of $n$-qubit unitaries with key space $\mathcal{K}_n$ such that
\begin{itemize}
    \item (Efficient generation:) There exists a QPT algorithm running in time $\poly(n)$ that implements the $n$-qubit unitary $U_k$ on input $k$, for any key $k \in \mathcal{K}_n$.

    \item (Pseudorandomness:) For any QPT oracle-algorithm $\{\mathcal{A}_n^{(\cdot)}\}_{n \in \mathbb{N}}$ which runs in time $\poly(n)$ and makes $\poly(n)$ queries, followed by an efficient two-outcome measurement $(M_{\mathcal{A}_n},I-M_{\mathcal{A}_n})$,
$$
\left|\underset{k \sim \mathcal{K}_n} {\E} \Tr\left[ M_{\mathcal{A}_n} \cdot \proj{\mathcal{A}_n^{U_k}}\right]  - \underset{U \sim \HaarU} {\E} \Tr\left[ M_{\mathcal{A}_n} \cdot\proj{\mathcal{A}_n^U}\right] \right| \leq \negl(n).
$$
    
\end{itemize}
\end{definition}

\paragraph{Distinct subspace.}
The set of $t$-fold tuples over $[d]=\{1,2,\dots,d\}$ is denoted by $[d]^t$. We call a tuple $\vec x = (x_1,\dots,x_t) \in [d]^t$ distinct if $x_i \neq x_j$ for all $i \neq j$. We denote the set of distinct tuples in $[d]^t$ by $\mathrm{dist}(d,t)$.
We define the projector onto the $t$-fold distinct subspace as
\begin{equation}
 \Pi_{\mathrm{dist}(d,t)} =\sum_{\substack{x_1,\ldots,x_t\in[d]\\
                         x_i\ne x_j\ (i\ne j)}}
 \ketbra{x_1,\ldots,x_t}{x_1,\ldots,x_t}.
 \label{eq:distinct-projector}
\end{equation}
For example, the projector onto the two-fold distinct subspace is $\Pi_{\mathrm{dist}(d,2)}=I-\sum_x\ketbra{x,x}{x,x}$.

\theoremstyle{plain}
\newcommand{\maintheoremname}{}
\newtheorem*{mainresult}{\maintheoremname}
\newenvironment{maintheorem}[2]{%
  \renewcommand{\maintheoremname}{Theorem~\ref{#1}}%
  \begin{mainresult}[#2]}{\end{mainresult}}
\theoremstyle{definition}

\medskip

\section{Refuting the Quantum HMMR conjecture}
\label{sec:walk}

We prove \cref{thm:intro-walk} by adding \emph{rare} single-qubit
non-Clifford gates to a local Clifford walk. The Clifford updates produce
rapid mixing, while the non-Clifford gates handle the remaining
low-order moment obstructions. Because non-Clifford gates appear infrequently,
the resulting circuits still retain enough stabilizer structure to be
efficiently distinguished from the Haar measure.

Let $\overline{\mathcal{C}}_n$ denote the $n$-qubit Clifford group modulo
global phase, equipped with the uniform distribution whenever it is used
as an ensemble. Fix a finite inverse-closed generating set
$\mathcal G\subseteq\overline{\mathcal{C}}_2$ containing the identity.
We arrange the qubits on a ring, with indices understood modulo $n$.
For an integer $m\geq2$, one update is sampled as follows:
\begin{itemize}
 \item With probability $1-1/m$, choose an edge $(i,i+1)$ uniformly
 among the $n$ edges and apply a random gate from $\mathcal G$
 to those two qubits. We denote this local Clifford distribution by
 $\chi_n$.
 \item With probability $1/m$, apply a uniformly random rotation\footnote{Note that, for fixed $t$, a simple discretization of the rotation angles reproduces the continuous
$t$-th moment exactly, while finite-precision implementation
introduces only negligible total error. We omit these implementation details.} $e^{i\phi Z}$
to the first qubit. We denote this single-qubit
 distribution by $\xi_n$.
\end{itemize}
Thus, the one-step distribution of our walk is of the form
\begin{equation}
 \nu_n=(1-m^{-1})\chi_n+m^{-1}\xi_n.
 \label{eq:walk-step-distribution}
\end{equation}
We call a choice of the $\xi_n$ branch an \emph{insertion}, including
its identity outcome. Every update acts on at most two neighboring
qubits, and all updates are independent samples from the same
distribution.

\begin{maintheorem}{thm:intro-walk}{Refutation of Quantum HMMR}
For each fixed integer $t\geq4$, there are an integer $A_t\geq1$ and a
constant $\kappa_t>0$ such that, for sufficiently large $n$, setting
\begin{equation}
 m=A_tn^2,\qquad
 T=3m\lceil\log^2 n\rceil=O_t(n^2\log^2 n)
 \label{eq:walk-length}
\end{equation}
results in an iterated ensemble $\nu_n^{*T}$ based on the local ensemble $\nu_n$ in \eqref{eq:walk-step-distribution}, and which forms an adaptive unitary $t$-design with negligible error
\begin{equation}
 \da t(\nu_n^{*T},\HaarU)\leq e^{-\kappa_t\log^2 n}.
 \label{eq:walk-design}
\end{equation}
Nevertheless, a uniform polynomial-time algorithm distinguishes this
ensemble from Haar with constant advantage using
$O(\log^2 n)$ parallel forward queries.
\end{maintheorem}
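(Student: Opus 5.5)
The proof splits into two independent parts: a distinguisher that exploits the rarity of insertions, and an adaptive design bound coming from the spectral structure of the one-step moment operator $\mathcal M^{(t)}_{\nu_n}$.

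\emph{Distinguisher.} The number of insertions $K$ in $T$ steps is $\mathrm{Bin}(T,1/m)$ with mean $3\lceil\log^2 n\rceil$. By a Chernoff bound, $K\le 6\log^2 n$ except with probability $n^{-\Omega(\log n)}$. Conditioned on this event, $U=C_K R_K C_{K-1}\cdots R_1 C_0$ with Clifford $C_i$ and $Z$-rotations $R_i$ on one qubit. Hence $U\ket{0^n}$ is a superposition of at most $2^K$ stabilizer states, and its stabilizer nullity is at most $K=O(\log^2 n)$. Equivalently, $U\ket{0^n}$ is stabilized by a group of at least $2^{n-K}$ commuting Pauli operators up to phase. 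The test runs Bell-difference sampling (as in the cited stabilizer-testing works \cite{GIKL2024} and the test of \cite{BittelLeone26}) on $O(K)=O(\log^2 n)$ parallel copies of $U\ket{0^n}$, and accepts if the sampled Paulis lie in a subspace of dimension bounded away from $2n$. For Haar states, Bell samples are nearly uniform over $\F_2^{2n}$, so $O(\log^2 n)$ samples span a space whose dimension matches the sample count, with high probability. For our states, Bell samples concentrate on the symplectic complement of the large stabilizer group, which has dimension at most $n+K$, with constant probability. To keep the query count at $O(\log^2 n)$, I would use the cheaper variant: estimate the stabilizer fidelity, or the Pauli-spectrum weight on the unsigned stabilizer group, by checking whether $O(K)$ samples admit a common commuting completion. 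The detailed test can be imported from \cite{BittelLeone26}.

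\emph{Design bound.} Write $M=\mathcal M^{(t)}_{\nu_n}=(1-1/m)M_\chi+(1/m)M_\xi$. First, the local Clifford walk $M_\chi$ has spectral gap $\Omega(1/n^{c})$ on the complement of its fixed space, which equals the Clifford commutant $V$; $\dim V=O_t(1)$ independent of $n$ by \cite{gross2021schur}. I would obtain this gap from known Clifford-circuit gap results, extended to the ring. Second, on $V$ the compressed operator $P_V M_\xi P_V$, the doped-Clifford step, has a gap of order $\Omega_t(1)$ relative to the Haar fixed space, by \cite{baer2026random}. Combining the two, the effective eigenvalues of $M$ split into three groups: the Haar eigenvalue $1$; $O_t(1)$ eigenvalues of the form $1-\Theta_t(1/m)$; and all remaining eigenvalues bounded by $1-\Omega(n^{-c})$. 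The choice $m=A_t n^2$ is meant to make the perturbation $1/m$ small relative to the Clifford gap, so that first-order perturbation theory stays valid. Raised to the power $T$, the bulk contributes $(1-n^{-c})^{T}\le e^{-\Omega(n^{2-c}\log^2n)}$. This is small enough to absorb the $d^{2t}=2^{2tn}$ blow-up in the adaptive conversion, provided the Clifford gap exponent satisfies $c<2$ with margin; arranging this is the point of choosing $A_t$ and $m$.

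For the $O_t(1)$ slow modes, I would follow the plan sketched in the introduction. Fix an adaptive algorithm and write its acceptance difference at depth $s$ as $g(s)=\sum_{j=1}^{r}c_j\lambda_j^{s}+(\text{bulk})$, with $r=O_t(1)$ and $\lambda_j\le 1-\kappa/m$. The coefficients $c_j$ could be exponentially large. However, $|g(s)|\le 2$ for every $s$, since $g$ is a difference of probabilities. Apply this at $r$ well-separated depths $s_1<\dots<s_r$ of order $m$. Inverting the resulting Vandermonde-type system, whose condition number depends only on $t$ after rescaling by $m$, bounds the $c_j$ by $O_t(1)$ up to the negligible bulk. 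Then $g(T)\le O_t(1)(1-\kappa/m)^{T}=e^{-\Omega_t(\log^2n)}$. This step is the main obstacle. The eigenvalues may be nearly degenerate or non-normal, which can produce Jordan blocks and polynomial-in-$s$ prefactors and make the Vandermonde system badly conditioned. I would handle this by treating $g$ as an exponential polynomial of bounded degree in a variable scaled by $s/m$, and applying a Remez/Markov-type inequality for such polynomials on an interval to obtain the uniform coefficient bound.
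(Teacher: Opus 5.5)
Your overall architecture matches the paper's. That includes the Clifford-walk gap $\gamma/n$ off the $D_t$-dimensional Clifford commutant, and an $\Omega_t(1)$ insertion gap on that commutant giving an overall gap of order $1/m$. It also includes absorbing the $d^{2t}$ adaptive blow-up into the fast modes and using bounded acceptance probabilities for the slow modes. Finally, your distinguisher (stabilizer nullity at most the insertion count, then the Bell-difference test of \cite{BittelLeone26}) is the paper's.

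The genuine gap is the slow-mode step, which you flag as the obstacle but do not resolve. Nothing controls the separation of the at most $D_t$ slow eigenvalues of $M$; they are only known to lie in $(1-(1-p)\gamma/n,\,1-ap/2]$. So the Vandermonde system at depths $s_1<\dots<s_r$ can be arbitrarily ill-conditioned, and its condition number does not depend only on $t$. More fundamentally, the target itself is false. Since $M$ is Hermitian, $c_j=\langle e_j|\Lambda_{\mathcal A}^{\dagger}|e_j\rangle$, which a priori is bounded only by $\|\Lambda_{\mathcal A}\|_\infty\leq d^{t}$. Individually huge coefficients are compatible with $|g|\leq 2$ through cancellation: for instance $c_1=-c_2=K$ contributes roughly $Ks|\lambda_1-\lambda_2|\lambda_1^{s-1}$. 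Hence no Remez- or Markov-type inequality can deliver the $O_t(1)$ coefficient bound your patch aims for. Bounding $g(T)$ by $O_t(1)(1-\kappa/m)^T$ via coefficients therefore does not go through.

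The missing idea is to bound the sequence directly rather than its coefficients.
\begin{itemize}
\item Group the walk into segments of length $m$, set $B=M^m-\mathcal M_{\HaarU}^{(t)}$, and let $u_j$ be the slow part of $\Tr[\Lambda_{\mathcal A}^\dagger B^j]$.
\item Its generating function is $N(z)/q(z)$ with $q(z)=\prod_{i\leq D}(1-\lambda_i z)$ and $\deg N\leq D-1$.
\item The coefficients of $q$ are elementary symmetric functions, bounded by $\binom Dk\rho^k$ however the $\lambda_i$ cluster. So the coefficients of $N$ are bounded by $B2^D$ using only $|u_1|,\dots,|u_D|\leq B$.
\item Expanding $1/q$ then gives $|u_j|\leq BD2^D\binom{j+D-2}{D-1}\rho^{j-D}$ (Lemma~\ref{lem:walk-scalar-decay}). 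With $\rho=e^{-a/2}$ and $j=3\lceil\log^2 n\rceil$ this yields $e^{-\Omega_t(\log^2 n)}$.
\end{itemize}

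Smaller points:
\begin{itemize}
\item You need the Clifford gap exponent $c\leq1$, not just $c<2$. For any fixed $c>1$, even $(1-n^{-c})^{T}$ cannot beat $2^{2tn}$.
\item Jordan blocks cannot arise, because both step distributions are inverse-closed and so $M$ is Hermitian.
\item The perturbative heuristic is made rigorous in the paper by the quadratic-form bound $\langle v,(I-M)v\rangle\geq\frac{ap}{2}\|x\|^2+\bigl((1-p)\frac{\gamma}{n}-2p\bigr)\|y\|^2$ plus min--max. No lower bound on the slow gaps is needed.
\item Your heuristic acceptance criterion, ``span dimension bounded away from $2n$,'' is vacuous with $O(\log^2 n)$ samples. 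The imported test instead examines the part of the span commuting with all samples, and checks a random element as a stabilizer on fresh copies.
\end{itemize}
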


We need the following standard lemma that writes the expected outcome probability of a quantum experiment as a linear function in the moment operator: 

\begin{lemma}[Acceptance probabilities from moment operators]
\label{lem:walk-query-functional}
Let $t \in \mathbb{N}$ and let $\mathcal A^U$ be any adaptive $t$-query algorithm $\mathcal A^U$ with forward access to
an $n$-qubit unitary, and which performs a two-outcome measurement
$(M_{\mathcal A},I-M_{\mathcal A})$. Let $d=2^n$. Then, there exists a matrix
$\Lambda_{\mathcal A}$, depending only on the algorithm $\mathcal{A}$ and its measurement,
such that, for every unitary ensemble $\nu$,
\begin{equation}
 \underset{U\sim\nu}{\E}  \Tr\bigl[M_{\mathcal A}\cdot \ketbra{\mathcal A^U_t}{\mathcal A^U_t}\bigr]=\Tr\!\left[\Lambda_{\mathcal A}^{\dagger} \cdot
                   \mathcal M_\nu^{(t)}\right],
 \qquad \|\Lambda_{\mathcal A}\|_1\leq d^{2t}.
 \label{eq:walk-query-trace}
\end{equation}
\end{lemma}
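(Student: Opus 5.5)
The plan is to expand the final state entrywise, using the standard tensor-network or "Choi" bookkeeping. Write $\ket{\mathcal A^U_t}=A_t(U\otimes I)A_{t-1}\cdots(U\otimes I)A_0\ket{0^{n+m}}$. Fix a computational basis for the $n$-qubit register. Insert resolutions of the identity around each copy of $U$, writing $U=\sum_{a,b\in[d]}U_{ab}\ketbra{a}{b}$. This gives
\[
\ket{\mathcal A^U_t}=\sum_{\vec a,\vec b\in[d]^t}\Bigl(\prod_{i=1}^t U_{a_ib_i}\Bigr)\ket{\phi_{\vec a,\vec b}},
\]
where $\ket{\phi_{\vec a,\vec b}}=A_t(\ketbra{a_t}{b_t}\otimes I)A_{t-1}\cdots(\ketbra{a_1}{b_1}\otimes I)A_0\ket{0}$ does not depend on $U$. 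Then
\[
\Tr[M_{\mathcal A}\proj{\mathcal A^U_t}]=\sum_{\vec a,\vec b,\vec a',\vec b'}\prod_i U_{a_ib_i}\overline{U_{a'_ib'_i}}\;\bra{\phi_{\vec a',\vec b'}}M_{\mathcal A}\ket{\phi_{\vec a,\vec b}}.
\]
Next I identify the product of matrix entries with an entry of the moment operator: $\prod_i U_{a_ib_i}\overline{U_{a'_ib'_i}}=\bra{\vec a,\vec a'}U^{\otimes t}\otimes\overline U^{\otimes t}\ket{\vec b,\vec b'}$. Taking expectation over $U\sim\nu$ and using linearity gives $\sum c_{\vec a\vec a',\vec b\vec b'}(\mathcal M_\nu^{(t)})_{\vec a\vec a',\vec b\vec b'}$. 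Here the coefficients are $c=\bra{\phi_{\vec a',\vec b'}}M_{\mathcal A}\ket{\phi_{\vec a,\vec b}}$, and these are independent of $\nu$. I then define $\Lambda_{\mathcal A}$ by $\overline{(\Lambda_{\mathcal A})_{\vec a\vec a',\vec b\vec b'}}=c_{\vec a\vec a',\vec b\vec b'}$, so the expression reads $\Tr[\Lambda_{\mathcal A}^\dagger\mathcal M_\nu^{(t)}]$, up to transposition conventions that I fix consistently.

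The main obstacle is the norm bound $\|\Lambda_{\mathcal A}\|_1\le d^{2t}$. A naive entrywise bound fails: there are $d^{4t}$ entries, each of modulus at most $1$, which is too many. I would bound the trace norm by duality instead, $\|\Lambda\|_1=\sup_{\|W\|_\infty\le1}|\Tr[\Lambda^\dagger W]|$, and then use the following structural fact. Let $\mathcal L$ be the linear map $X\mapsto\Tr[\Lambda^\dagger X]$ on $\mathrm{L}((\C^d)^{\otimes 2t})$. This map is exactly what one gets by substituting, at the $i$-th query, the $i$-th tensor slot of $X$ for the pair ($U$ on the ket side, $\overline U$ on the bra side).

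Write $X$ in product form $X=\bigotimes_i (X_i\otimes\overline{Y_i})$ with $X_i,Y_i$ rank-one $\ketbra{a}{b}$, and expand $W$ in the matrix-unit basis. Each term of $\mathcal L$ on a matrix unit has modulus at most $\|M\|_\infty\prod\|\cdot\|_\infty\le1$. Summing over the $d^{2t}$ choices on one side, via Cauchy--Schwarz over the $\vec b,\vec b'$ (input) indices, I aim for $|\Tr[\Lambda^\dagger W]|\le d^{2t}\|W\|_\infty$.

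If that bookkeeping gets messy, I fall back on a cruder argument. Since $\Lambda$ has $d^{4t}$ entries of modulus at most $1$, we get $\|\Lambda\|_2\le d^{2t}$. Then $\|\Lambda\|_1\le\sqrt{\operatorname{rank}}\,\|\Lambda\|_2$, which only gives $d^{3t}$. To recover the exponent $2t$, I would instead show $\|\Lambda\|_2\le d^{t}$, using orthonormality of $\sum_{\vec a}\ket{\phi_{\vec a,\vec b}}$ structures: for fixed $\vec b$, the vectors $\sum_{\vec a}$ reassemble isometries. That bound, combined with rank at most $d^{2t}$, yields the claim, and this orthonormality step is where I expect the real work to lie.
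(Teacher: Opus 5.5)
\textbf{The representation step is fine; the norm bound has a gap.} Your representation step is correct and coincides with the paper's. You use the same vectors $\ket{\phi_{\vec a,\vec b}}$ and the same identification $\prod_i U_{a_ib_i}\overline{U_{a'_ib'_i}}=\bra{\vec a,\vec a'}U^{\otimes t}\otimes\overline U^{\otimes t}\ket{\vec b,\vec b'}$. Your $\Lambda_{\mathcal A}$, with entries $\bra{\phi_{\vec a,\vec b}}M_{\mathcal A}\ket{\phi_{\vec a',\vec b'}}$, is exactly the paper's realignment of the Gram matrix $G$. Your fallback reduction is also the paper's: $\Lambda_{\mathcal A}$ is $d^{2t}\times d^{2t}$, so $\|\Lambda_{\mathcal A}\|_1\le d^t\|\Lambda_{\mathcal A}\|_2$, and it suffices to prove $\|\Lambda_{\mathcal A}\|_2\le d^t$.

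That inequality, however, is the whole content of the norm bound, and neither of your routes establishes it. The duality sketch cannot work as described. Expanding $W$ in matrix units and bounding each contribution by $1$ only recounts the $d^{4t}$ entries. Cauchy--Schwarz over the input indices only helps once you have a global bound on the norms $\|\phi_{\vec a,\vec b}\|$, which the sketch never supplies. The ``isometry for fixed $\vec b$'' claim is also not right as stated. The sum $\sum_{\vec a}\|\phi_{\vec a,\vec b}\|^2$ equals $d^t$ times the probability of seeing $\vec b$ when every query is replaced by a computational-basis measurement followed by preparation of a uniformly random basis state. So it is not normalized per $\vec b$.

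\textbf{The missing ingredient} is the global identity $\sum_{\vec a,\vec b}\|\phi_{\vec a,\vec b}\|^2=d^t$. The paper gets it by noting that the map $X\mapsto\sum_{a,b}(\ketbra{a}{b}\otimes I_B)X(\ketbra{b}{a}\otimes I_B)=I_A\otimes\Tr_A X$ multiplies the trace by exactly $d$, while the $A_i$ preserve it. Applying this at each of the $t$ oracle positions gives $d^t$. From there the argument closes in two steps:
\begin{itemize}
\item $G\succeq0$ is the Gram matrix of the vectors $M_{\mathcal A}^{1/2}\ket{\phi_{\vec a,\vec b}}$, so $\|G\|_2\le\|G\|_1=\Tr G\le d^t$.
\item Realignment only permutes entries, so $\|\Lambda_{\mathcal A}\|_2=\|G\|_2\le d^t$.
\end{itemize}
Alternatively, once you have the identity, entrywise Cauchy--Schwarz $|\bra{\phi'}M_{\mathcal A}\ket{\phi}|\le\|\phi'\|\,\|\phi\|$ gives $\|\Lambda_{\mathcal A}\|_2^2\le(\sum_{\vec a,\vec b}\|\phi_{\vec a,\vec b}\|^2)^2=d^{2t}$ directly. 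Your uniform bound of $1$ on each entry discards exactly this weighting by $\|\phi_{\vec a,\vec b}\|\,\|\phi_{\vec a',\vec b'}\|$, which is why it only reaches $d^{3t}$.
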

\begin{proof}
For strings $\mathbf a=(a_1,\dots,a_t),\mathbf b=(b_1,\dots,b_t)\in[d]^t$ define
\[
  \ket{\phi_{\mathbf a,\mathbf b}}
  := A_t\,(\ketbra{a_t}{b_t}\otimes I_B)\,A_{t-1}\cdots A_1\,
     (\ketbra{a_1}{b_1}\otimes I_B)\,A_0\ket{0^{n+m}} .
\]
Writing $U\otimes I_B=\sum_{a,b\in[d]}U_{ab}\ketbra{a}{b}\otimes I_B$ in each
oracle call yields
\[
  \ket{\mathcal A^U_t}=\sum_{\mathbf a,\mathbf b\in[d]^t}
  \Bigl(\prod_{i=1}^t U_{a_ib_i}\Bigr)\ket{\phi_{\mathbf a,\mathbf b}} .
\]
Let $G$ be the $d^{2t}\times d^{2t}$ matrix with entries
$G_{(\mathbf a,\mathbf b),(\mathbf a',\mathbf b')}
=\bra{\phi_{\mathbf a,\mathbf b}}M_{\mathcal A}\ket{\phi_{\mathbf a',\mathbf b'}}$. Then
\[
  \Tr\bigl[M_{\mathcal A}\ketbra{\mathcal A^U_t}{\mathcal A^U_t}\bigr]
  =\sum_{\mathbf a,\mathbf b,\mathbf a',\mathbf b'}
   \Bigl(\prod_{i}U_{a_ib_i}\overline{U_{a'_ib'_i}}\Bigr)
   G_{(\mathbf a',\mathbf b'),(\mathbf a,\mathbf b)} .
\]
Since
$\bra{\mathbf a,\mathbf a'}U^{\otimes t}\otimes\overline U^{\otimes t}\ket{\mathbf b,\mathbf b'}
=\prod_i U_{a_ib_i}\overline{U_{a'_ib'_i}}$ and $G$ is Hermitian, this equals
$\Tr[\Lambda_{\mathcal A}^\dagger\,U^{\otimes t}\otimes\overline U^{\otimes t}]$ for the
realigned matrix
\[
  (\Lambda_{\mathcal A})_{(\mathbf a,\mathbf a'),(\mathbf b,\mathbf b')}
  := G_{(\mathbf a,\mathbf b),(\mathbf a',\mathbf b')} .
\]
Averaging over $U\sim\nu$ proves formula.

For the norm bound, $G\succeq0$ is the Gram matrix of the vectors
$M_{\mathcal A}^{1/2}\ket{\phi_{\mathbf a,\mathbf b}}$, and $M_{\mathcal A}\preceq I$ gives
$\Tr G\le\sum_{\mathbf a,\mathbf b}\|\phi_{\mathbf a,\mathbf b}\|^2$.
The map $X\mapsto\sum_{a,b}(\ketbra ab\otimes I_B)X(\ketbra ba\otimes I_B)
=I_A\otimes\Tr_A X$ multiplies the trace by $d$, and the $A_i$ preserve it.
Applying this at each of the $t$ oracle positions gives
$\sum_{\mathbf a,\mathbf b}\|\phi_{\mathbf a,\mathbf b}\|^2=d^t$, so
$\|G\|_1=\Tr G\le d^t$. Realignment only permutes entries, so
$\|\Lambda_{\mathcal A}\|_2=\|G\|_2\le\|G\|_1\le d^t$. Finally,
$\Lambda_{\mathcal A}$ is a $d^{2t}\times d^{2t}$ matrix, hence
\[
  \|\Lambda_{\mathcal A}\|_1\le d^t\|\Lambda_{\mathcal A}\|_2\le d^{2t}.
\]
\end{proof}

The next lemma summarizes results which are shown in~\cite{gross2021schur} and~\cite{baer2026random}:

\begin{lemma}[Spectral gaps of the Clifford walk]
\label{lem:walk-homeopathy}
\label{lem:walk-clifford-input}
Fix $t\geq4$. The moment operators
$\mathcal M_{\overline{\mathcal{C}}_n}^{(t)}$ and
$\mathcal M_{\HaarU}^{(t)}$ are orthogonal projectors onto the Clifford-
and Haar-invariant subspaces, respectively, with the latter contained
in the former. For all sufficiently large $n$,
\begin{equation}
 \operatorname{rank}\mathcal M_{\overline{\mathcal{C}}_n}^{(t)}
 =D_t:=\prod_{j=0}^{t-2}(2^j+1).
 \label{eq:walk-clifford-rank}
\end{equation}
There are universal constants $\gamma,a>0$ such that the following holds: The local Clifford moment operator has absolute
spectral gap at least $\gamma/n$ outside the Clifford-invariant
subspace:
\begin{equation}
 \left\|\mathcal M_{\chi_n}^{(t)}
       -\mathcal M_{\overline{\mathcal{C}}_n}^{(t)}\right\|_\infty
 \leq1-\frac{\gamma}{n}.
 \label{eq:walk-clifford-inputs}
\end{equation}
On the Clifford-invariant subspace, the insertion averaged between
Clifford twirls has spectral gap at least $a$ outside the
Haar-invariant subspace. Equivalently,
\begin{equation}
 \mathcal M_{\overline{\mathcal{C}}_n}^{(t)}
 \bigl(I-\mathcal M_{\xi_n}^{(t)}\bigr)
 \mathcal M_{\overline{\mathcal{C}}_n}^{(t)}
 \succeq a \bigl(
     \mathcal M_{\overline{\mathcal{C}}_n}^{(t)}
     -\mathcal M_{\HaarU}^{(t)}\bigr).
 \label{eq:walk-insertion-gap}
\end{equation}
\end{lemma}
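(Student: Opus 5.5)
We treat the four claims separately, since each reduces to a known ingredient. \emph{Projector structure.} For any compact group $G$ with Haar measure (the finite group $\overline{\mathcal C}_n$ with uniform measure, or $\mathrm U(2^n)$), the map $U\mapsto U^{\otimes t}\otimes\overline U^{\otimes t}$ is a unitary representation. By invariance of the measure, $\mathcal M_G^{(t)}$ is idempotent, and it is Hermitian because the measure is inverse-invariant. Hence it is the orthogonal projector onto the trivial isotypic component, i.e.\ onto the commutant of $\{U^{\otimes t}\}_{U\in G}$ after vectorization. Since $\overline{\mathcal C}_n\subseteq\mathrm U(2^n)$ (modulo phase, which cancels between $U$ and $\overline U$), every Haar-invariant vector is Clifford-invariant. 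This gives $\mathcal M_{\HaarU}^{(t)}\preceq\mathcal M_{\overline{\mathcal C}_n}^{(t)}$ and $\mathcal M_{\overline{\mathcal C}_n}^{(t)}\mathcal M_{\HaarU}^{(t)}=\mathcal M_{\HaarU}^{(t)}$.

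\emph{Rank.} We invoke~\cite{gross2021schur}. For $n\ge t-1$, the commutant of the $t$-th tensor power of the Clifford group is spanned by the linearly independent operators $R(\mathcal T)$ indexed by stochastic Lagrangian subspaces $\mathcal T\subseteq\F_2^{2t}$. The number of such subspaces is $\prod_{j=0}^{t-2}(2^j+1)=D_t$. Because linear independence holds once $n\ge t-1$, the rank is exactly $D_t$ for all sufficiently large $n$.

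\emph{Local Clifford gap.} We view $\mathcal M_{\chi_n}^{(t)}=\frac1n\sum_{i}P_i$, where $P_i$ is the moment operator of the uniform distribution on $\mathcal G$ applied to edge $(i,i+1)$. Since $\mathcal G$ is inverse-closed, contains the identity, and generates $\overline{\mathcal C}_2$, each $P_i$ is Hermitian. Its spectrum lies in $[-1+c,1]$ because of the identity component, and on each edge it has a gap $c>0$ to the $\overline{\mathcal C}_2$-invariant projector $Q_i$. That gap depends only on $\mathcal G$ and $t$, and it is positive because the walk on a finite group with a generating set is ergodic. This reduces the problem to bounding $\|\frac1n\sum_i Q_i-\mathcal M^{(t)}_{\overline{\mathcal C}_n}\|_\infty\le 1-\gamma'/n$, where the common fixed space of the $Q_i$ equals the Clifford-invariant space because nearest-neighbour two-qubit Cliffords on a ring generate $\overline{\mathcal C}_n$. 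This bound on the local Clifford walk is exactly what~\cite{baer2026random} (building on the Clifford-walk analyses) provides, with $\gamma$ independent of $n$ for fixed $t$. Since we only need to cite it, the remaining work is to transfer the bound from the $Q_i$ back to the $P_i$. We do this with the elementary comparison $P_i\preceq(1-c)I+cQ_i$, together with $P_i\succeq-(1-c)I$, which controls the negative end and yields an absolute gap. I expect this to be the main obstacle if one wants to avoid simply citing the result, since it requires an $n$-independent gap for a local Clifford walk, for which frustration-free Hamiltonian techniques would be needed.

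\emph{Insertion gap.} The claim \eqref{eq:walk-insertion-gap} concerns the operator $\mathcal M_{\overline{\mathcal C}_n}^{(t)}\mathcal M_{\xi_n}^{(t)}\mathcal M_{\overline{\mathcal C}_n}^{(t)}$ restricted to the $D_t$-dimensional Clifford-invariant space. This is the moment operator of the doped ensemble ``Clifford, random $Z$-rotation, Clifford''. It is positive semidefinite, because the rotation distribution is symmetric so that $\mathcal M_{\xi_n}^{(t)}$ is Hermitian, and indeed PSD as the average of $e^{i\phi Z}$ powers over the full circle. Its eigenvalue-$1$ space is exactly the Haar-invariant subspace, because the Clifford group plus any non-Clifford $Z$-rotation generates a dense subgroup of $\mathrm U(2^n)$. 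The quantitative statement that the remaining eigenvalues are at most $1-a$ uniformly in $n$ is the doped-Clifford spectral bound of~\cite{baer2026random} (cf.\ \cite{Homeopathy2023}). It is plausible because the relevant matrix elements of $R(\mathcal T)$ are ratios converging to $n$-independent limits as $n\to\infty$. Rewriting ``eigenvalues $\le1-a$ on the complement'' as an operator inequality gives \eqref{eq:walk-insertion-gap}.
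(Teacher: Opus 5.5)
Your proposal is correct and takes essentially the same approach as the paper, which gives no argument of its own and simply cites \cite{gross2021schur} for the rank $D_t$ and \cite{baer2026random} for the local-Clifford gap and the doped-Clifford insertion gap. The elementary steps you add are sound and fill in details the paper leaves implicit: the projector and containment argument, the comparisons $P_i\preceq(1-c)I+cQ_i$ and $P_i\succeq-(1-c)I$ that transfer the gap from the full $\overline{\mathcal C}_2$ twirl to the finite generating set $\mathcal G$, and the observation that $\mathcal M_{\xi_n}^{(t)}$ is a diagonal projector. Your constants depend on $t$ and $\mathcal G$, which is all the subsequent proof uses.
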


After subtracting the Haar moment from a suitable power of the walk's
moment operator, all but constantly many eigenvalues will be
exponentially small in $n$. 
The contribution of the remaining
eigenvalues to an algorithm's acceptance-probability difference is a
finite spectral expansion of the form $u_j=\sum_i b_i\lambda_i^j$,
where $j$ counts repetitions. 
Bounding the
individual coefficients $b_i$ could introduce an unwanted dependence
on the dimension.
Instead, we use that acceptance probabilities are
bounded: the following combinatorial lemma bounds the entire expansion from its
first few values.

\begin{lemma}
\label{lem:walk-scalar-decay}
Let $D\geq1$ and $0<\rho<1$. Suppose
\[
 u_j=\sum_{i=1}^{D}b_i\lambda_i^j\qquad(j\geq1),\quad |\lambda_i|\leq\rho.
\]
If $|u_1|,\ldots,|u_D|\leq B$, then, for every $j\geq D$,
\begin{equation}
 |u_j|\leq B \cdot D \cdot 2^D\binom{j+D-2}{D-1} \cdot \rho^{j-D}.
 \label{eq:walk-scalar-decay}
\end{equation}
\end{lemma}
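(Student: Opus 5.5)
The plan is to use generating functions: read the sequence as the Taylor coefficients of a rational function whose denominator is fixed by the $\lambda_i$, and whose numerator has degree at most $D$ and is fixed entirely by $u_1,\ldots,u_D$. Work with formal power series in $z$ and set $U(z)=\sum_{j\geq1}u_jz^j$. Termwise,
\[
 U(z)=\sum_{i=1}^D b_i\frac{\lambda_i z}{1-\lambda_i z}.
\]
Put $Q(z)=\prod_{i=1}^D(1-\lambda_i z)=\sum_{l=0}^D q_l z^l$. Then $N(z):=Q(z)U(z)$ is a polynomial of degree at most $D$ with $N(0)=0$. This holds even when some $\lambda_i$ coincide or vanish, since each term $b_i\lambda_i z\prod_{i'\neq i}(1-\lambda_{i'}z)$ has degree at most $D$. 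So no diagonalizability or distinctness assumption is needed.

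\textbf{Step 1: bound the numerator.} Comparing coefficients of $z^k$ for $1\leq k\leq D$ in $N=QU$ gives
\[
 n_k=\sum_{l=0}^{k-1}q_l\,u_{k-l}.
\]
Only $u_1,\ldots,u_D$ appear here. The coefficients of $Q$ are signed elementary symmetric polynomials, so $|q_l|=|e_l(\lambda)|\leq\binom{D}{l}\rho^l\leq\binom Dl$. Hence $|n_k|\leq B\sum_l\binom Dl\leq 2^D B$.

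\textbf{Step 2: expand the denominator.} Since $1/Q(z)=\prod_i(1-\lambda_iz)^{-1}$, its coefficient of $z^r$ is the complete homogeneous symmetric polynomial $h_r(\lambda_1,\ldots,\lambda_D)$. This is a sum of $\binom{r+D-1}{D-1}$ monomials of degree $r$, so $|h_r|\leq\binom{r+D-1}{D-1}\rho^r$. From $U=N/Q$ we then get, for every $j\geq D$,
\[
 u_j=\sum_{k=1}^D n_k\,h_{j-k}(\lambda).
\]

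\textbf{Step 3: combine.} For $j\geq D$ and $1\leq k\leq D$, the index $r=j-k$ satisfies $j-D\leq r\leq j-1$. Therefore $\binom{r+D-1}{D-1}\leq\binom{j+D-2}{D-1}$, and $\rho^{r}\leq\rho^{j-D}$ because $\rho<1$. Summing the $D$ terms, each at most $2^DB\binom{j+D-2}{D-1}\rho^{j-D}$, gives exactly \eqref{eq:walk-scalar-decay}.

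\textbf{Main obstacle.} The key point is to bound the entire sequence through $N$ rather than through the individual $b_i$. The $b_i$ can be huge when eigenvalues nearly collide, whereas $N$ is controlled purely by the boundedness of $u_1,\ldots,u_D$. Once that choice is made, the remaining steps are routine symmetric-function estimates.
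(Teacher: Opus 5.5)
Your proof is correct and follows the paper's argument essentially verbatim. The paper works with the shifted series $\sum_{j\geq0}u_{j+1}z^j=N(z)/q(z)$, so its $N_\ell$ is your $n_{\ell+1}$; it bounds these numerator coefficients by $B2^D$ using only $u_1,\ldots,u_D$, and bounds the coefficients of $1/q$ by $\binom{r+D-1}{D-1}\rho^r$ exactly as you do, with your remark that no distinctness of the $\lambda_i$ is needed being a valid point the paper leaves implicit.
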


\begin{proof}
 The generating function of the sequence is
\[
 \sum_{j\geq0}u_{j+1}z^j
 =\frac{N(z)}{q(z)},\qquad
 q(z)=\prod_{i=1}^{D}(1-\lambda_i z),\qquad
 \deg N\leq D-1.
\]
Writing $q(z)=\sum_{k=0}^{D}q_kz^k$, the elementary symmetric
coefficient bound gives $|q_k|\leq\binom Dk\rho^k$. Consequently,
the coefficient of $z^\ell$ in $N(z)$, for $0\leq\ell\leq D-1$, is
\[
 N_\ell=\sum_{k=0}^{\ell}q_k u_{\ell-k+1},
 \qquad |N_\ell|\leq B\sum_{k=0}^{\ell}\binom Dk\rho^k
 \leq B2^D.
\]
Thus $\sum_{\ell=0}^{D-1}|N_\ell|\leq BD2^D$. On the other hand,
expanding the $D$ geometric series in $1/q(z)$ shows that its
coefficient of $z^r$ has absolute value at most
\[
 \binom{r+D-1}{D-1}\rho^r.
\]
For $j\geq D$ and $0\leq\ell\leq D-1$, the index
$r=j-1-\ell$ lies between $j-D$ and $j-1$. Each corresponding
coefficient is therefore bounded by
$\binom{j+D-2}{D-1}\rho^{j-D}$. Taking the coefficient of
$z^{j-1}$ in $N(z)/q(z)$ proves
\eqref{eq:walk-scalar-decay}.
\end{proof}

\subsection{Proof of Theorem~\ref{thm:walk}}
\label{sec:walk-proof}

We will now put everything together to obtain a proof of Theorem~\ref{thm:walk}.
To obtain the desired efficient distinguisher, will use the following stabilizer test. For
$x=(a,b)\in\mathbb F_2^{2n}$, write
$W_x=\bigotimes_i i^{a_ib_i}X^{a_i}Z^{b_i}$ and let
$S(\psi)=\{x:W_x\ket\psi=\pm\ket\psi\}$ be the binary
subspace of Pauli stabilizer labels.

\begin{lemma}[Testing states with many stabilizers]
\label{lem:walk-bell-test}
For $19\leq h\leq n-2$, a uniform polynomial-time test using
$12h+10$ copies accepts every pure state with
$\dim S(\psi)\geq n-h$ with probability at least $5/8$, and
accepts a Haar-random state with probability at most
$1/2+2^{-\Omega(n)}$.
\end{lemma}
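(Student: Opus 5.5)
We use Bell difference sampling in the style of Gross--Nezami--Walter and Grewal--Iyer--Kretschmer--Liang. Given four copies of $\ket\psi$, one runs Bell sampling on two pairs and adds the outcomes. The result is a sample $x\in\mathbb F_2^{2n}$ drawn from the distribution $q_\psi(x)=\sum_y p_\psi(y)p_\psi(x+y)$, where $p_\psi(x)=2^{-n}|\bra\psi W_x\ket{\bar\psi}|^2$ is the characteristic distribution. The key facts we will use are:
\begin{itemize}
\item[(i)] For every $x\in S(\psi)$ we have $|\bra\psi W_x\ket\psi|=1$. Moreover, $q_\psi$ is supported on, and in fact concentrates on, $S(\psi)^\perp$, the symplectic complement.
\item[(ii)] For a general state, $\Pr_{x\sim q_\psi}\bigl[|\bra\psi W_x\ket\psi|^2\ge \tfrac12\bigr]$ is controlled by the stabilizer fidelity and the Pauli spectrum.
\end{itemize}

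The test is as follows. Draw $k=3h+2$ Bell difference samples, which uses $4k=12h+8$ copies, and let $V$ be their span. Compute a basis of the symplectic complement $V^{\perp}$. This is a subspace of $\mathbb F_2^{2n}$ of dimension $\ge 2n-(3h+2)$, and it contains $S(\psi)$ whenever the samples fall in $S(\psi)^\perp$. If the state had many stabilizers, the subspace $V^\perp\cap V^{\perp\perp}$, i.e.\ the isotropic part, should contain almost all of $S(\psi)$. We then pick a uniformly random $x$ from the isotropic radical of $V^\perp$. If this radical is trivial, or its dimension is below $n-h$, we reject. Otherwise we measure $W_x$ on one copy and then on the remaining copy, accepting iff both outcomes agree, which is a Hermitian Pauli two-copy consistency check. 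That uses the final $2$ copies, for $12h+10$ in total.

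For completeness, take $\dim S(\psi)\ge n-h$. By (i) every sample lies in $S(\psi)^\perp$, which has dimension $\le n+h$. Since $S(\psi)$ is isotropic, $S(\psi)\subseteq S(\psi)^\perp$, and $q_\psi$ is uniform-like on cosets of $S(\psi)$ inside $S(\psi)^\perp$. Hence $3h+2$ samples span a subspace whose complement meets $S(\psi)$ in a large isotropic part, except with probability $2^{-\Omega(h)}$. The constant $h\ge19$ is where one tunes this to give failure at most $3/8$. A random $x$ in the recovered subspace is then an actual stabilizer, so measuring $W_x$ twice on $\ket\psi$ gives equal outcomes with certainty.

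For soundness, take $\psi$ Haar random. Then $p_\psi$ is exponentially close to uniform on $\mathbb F_2^{2n}$, and so is $q_\psi$. With probability $1-2^{-\Omega(n)}$ the $k\le 3n$ samples are linearly independent in general position, and the radical of $V^\perp$ has dimension $\le 2n-2k$ or is small. Conditioned on the test not rejecting, $x$ is essentially a random nonzero Pauli label. For Haar $\psi$ we have $\E|\bra\psi W_x\ket\psi|^2=1/(2^n+1)$, so two independent measurements of $W_x$ agree with probability $\tfrac12+\tfrac12|\langle W_x\rangle|^2\le \tfrac12+2^{-\Omega(n)}$. This uses the upper bound $h\le n-2$ so that the subspaces stay nontrivial.

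The main obstacle is the completeness step. We must show rigorously that Bell difference samples from a state with $\dim S(\psi)\ge n-h$ pin down a large subspace of $S(\psi)$ with probability $\ge 5/8$ using only $O(h)$ samples, despite $q_\psi$ being non-uniform on $S(\psi)^\perp$. We would handle this with the GIKL lemma that $q_\psi$ assigns mass $\le 2^{-(n-h)}\cdot$ to any proper subspace missing a fixed direction, together with a union bound over the $2^{O(h)}$ relevant subspaces. The exact constants $12h+10$ and $19$ come out of this calculation.
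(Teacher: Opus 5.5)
There is a genuine gap, and it sits in the completeness direction.

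\textbf{The dimension threshold breaks completeness.} Your test is close to the one the paper imports from Bittel--Leone: $3h+2$ Bell-difference samples, a uniform nonidentity element of the radical of their span, and a two-copy consistency check. The paper's test has no dimension threshold. It uses any nontrivial radical and outputs a fair bit otherwise. Your added rule ``reject if the radical has dimension below $n-h$'' is fatal. The radical of $V^\perp$ is $V^\perp\cap V\subseteq V$, so its dimension is at most $3h+2$. Whenever $n>4h+2$ your test therefore rejects every input. That is exactly the regime $h=6\lceil\log^2 n\rceil$ in which the lemma is applied. For the same reason, the heuristic that this isotropic part ``contains almost all of $S(\psi)$'' cannot hold: it contains at most a $(3h+2)$-dimensional piece of $S(\psi)$.

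\textbf{Radical elements need not be stabilizers.} Even with the threshold removed, your claim that a random radical element is an actual stabilizer, so both measurements agree ``with certainty'', is false in general. Suppose $\dim S(\psi)=n-h$ and work in a Clifford frame with $\psi=\ket{0^{n-h}}\otimes\ket{\phi}$. The samples are $(s_i,y_i)$, with $s_i$ uniform on $S(\psi)$ and independent of $y_i\sim q_\phi$ on $\F_2^{2h}$. The radical of $V$ consists of elements $(s,y)$ with $y$ ranging over the radical of $\Span\{y_i\}\subseteq\F_2^{2h}$. This is nonzero whenever $\dim\Span\{y_i\}$ is odd, for instance, and its nonzero $y$'s are not stabilizers. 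Such an $x$ only gives agreement probability $\tfrac12\bigl(1+\bra{\psi}W_x\ket{\psi}^2\bigr)$. Completeness therefore needs a lower bound on $\E_x\bra{\psi}W_x\ket{\psi}^2$ over the chosen $x$.

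\textbf{The missing ingredient.} What makes $O(h)$ samples suffice is the following combination:
\begin{itemize}
\item The identity $\Pr_{a\sim q_\psi}[\langle a,x\rangle=0]=\tfrac12\bigl(1+\bra{\psi}W_x\ket{\psi}^4\bigr)$.
\item The observation that, for $x\in S(\psi)^\perp$, both $\bra{\psi}W_x\ket{\psi}^2$ and commutation with the samples depend only on the coset $x+S(\psi)$.
\item A union bound over the at most $4^h$ such cosets. A coset with $\bra{\psi}W_x\ket{\psi}^2<\tau$ survives all $3h+2$ commutation constraints with probability at most $4^h\bigl((1+\tau^2)/2\bigr)^{3h+2}$.
\item Nontriviality of the radical, which follows separately. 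The space $\{c:\sum_i c_iy_i=0\}$ has dimension at least $h+2$, and the $s_i$ are independent and uniform on a space of dimension $n-h\geq 2$.
\end{itemize}
Your proposed substitute does not supply this. The $q_\psi$-mass of a hyperplane $x^\perp$ is at least $1/2$, not $2^{-(n-h)}$. A union bound over general subspaces of $\F_2^{2h}$ would also involve $2^{\Theta(h^2)}$ terms, which $3h+2$ samples cannot beat.

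\textbf{Soundness.} Your argument treats $x$ as a random Pauli independent of $\psi$, but $x$ is computed from samples of the same $\psi$. The clean fix is a uniform bound. With probability $1-2^{-\Omega(n)}$, a Haar state has $\bra{\psi}W_x\ket{\psi}^2\leq 2^{-\Omega(n)}$ simultaneously for all nonidentity $x$.
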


This is the Bell-difference test in \cite[Theorem~2 and Supplement,
proof of Theorem~4]{BittelLeone26}. It takes $3h+2$ Bell-difference
samples and computes the subspace of their span commuting with all
samples. If the subspace is nontrivial, it selects a uniform nonidentity
Pauli from it, measures that Pauli on two fresh copies, and accepts
if the outcomes agree; otherwise it outputs a fair bit. This gives
exactly $4(3h+2)+2$ copies, with binary linear algebra for the
classical processing.

\begin{proof}[Proof of \cref{thm:intro-walk}]
Set $p=1/m$ and
\[
 M=\mathcal M_{\nu_n}^{(t)}
  =(1-p)\mathcal M_{\chi_n}^{(t)}
       +p\mathcal M_{\xi_n}^{(t)}.
\]
All these moment operators fix the Haar-invariant subspace. We first
show that $M$ has a gap of order $1/m$ outside that subspace.
For $v$ orthogonal to the image of $\mathcal M_{\HaarU}^{(t)}$, write
$v=x+y$, where
\[
 x=\mathcal M_{\overline{\mathcal{C}}_n}^{(t)}v,\qquad
 y=\bigl(I-\mathcal M_{\overline{\mathcal{C}}_n}^{(t)}\bigr)v.
\]
Since $I-\mathcal M_{\xi_n}^{(t)}$ is positive semidefinite and has
norm at most two, \cref{lem:walk-homeopathy} gives
\[
 \begin{aligned}
 \langle v,(I-\mathcal M_{\xi_n}^{(t)})v\rangle
 &\geq\tfrac12\langle x,(I-\mathcal M_{\xi_n}^{(t)})x\rangle
       -\langle y,(I-\mathcal M_{\xi_n}^{(t)})y\rangle\\
 &\geq\tfrac{a}{2}\|x\|^2-2\|y\|^2.
 \end{aligned}
\]
The first inequality follows by applying
$\|u+w\|^2\geq\frac12\|u\|^2-\|w\|^2$ to
$(I-\mathcal M_{\xi_n}^{(t)})^{1/2}x$ and
$(I-\mathcal M_{\xi_n}^{(t)})^{1/2}y$. Combining it with
\eqref{eq:walk-clifford-inputs} gives
\begin{equation}
 \langle v,(I-M)v\rangle
 \geq\tfrac{a p}{2}\|x\|^2
       +\left((1-p)\frac{\gamma}{n}-2p\right)\|y\|^2
 \geq\tfrac{ ap}{2}\|v\|^2
 \label{eq:walk-moment-gap}
\end{equation}
for sufficiently large $n$, since $p=O_t(n^{-2})$.

On the space $\ker\mathcal M_{\overline{\mathcal{C}}_n}^{(t)}$, the Rayleigh quotient of
$M$ is at most $1-(1-p)\gamma/n$. 
$\ker\mathcal M_{\overline{\mathcal{C}}_n}^{(t)}$ has codimension at most $D_t$.
Then, by the min-max principle at most $D_t$ eigenvalues exceed $1-(1-p)\gamma/n$, which gives $M\succeq-(1-(1-p)\gamma/n)I$, so every negative eigenvalue has absolute
value at most $(1-(1-p)\gamma/n)$.

Consider one length-$m$ segment and subtract its Haar moment, obtaining
$B=M^m-\mathcal M_{\HaarU}^{(t)}$. Equation~\eqref{eq:walk-moment-gap}
and the preceding bounds imply that all eigenvalues of $B$ have
absolute value at most $\rho:=e^{-a/2}<1$ for large $n$.
Except for at most $D_t$ eigenvalues, they have absolute value at most
\begin{equation}
 \eta_n=(1-(1-p)\gamma/n)^m\leq\exp\left(-\frac{\gamma m}{2n}\right).
 \label{eq:walk-fast-modes}
\end{equation}
The Haar eigenvalues of $B$ are zero. Choose the integer $A_t$ large
enough that, with $d=2^n$, $d^{2t}\eta_n\leq e^{-n}$.

Now fix any adaptive $t$-query algorithm $\mathcal A^U$ and measurement
$(M_{\mathcal A},I-M_{\mathcal A})$, and let $\Lambda_{\mathcal A}$ be the
matrix from \cref{lem:walk-query-functional}. Since
$B^j=M^{mj}-\mathcal M_{\HaarU}^{(t)}$, the scalar $a_j=\Tr\!\left[\Lambda_{\mathcal A}^{\dagger}B^j\right]\in [-1,1]$. 
Split its spectral expansion as $a_j=u_j+v_j$,
where $u_j$ contains the contributions of the at most $D_t$
eigenvalues of $B$ larger than $\eta_n$ in absolute value.
By
\cref{lem:walk-query-functional},
\[
 |v_j|\leq d^{2t}\eta_n^j\leq e^{-n}\qquad(j\geq1).
\]
Thus the first $D_t$ values of $u_j$ have absolute value at most two.
Moreover, $u_j$ is a sum of at most $D_t$ terms $b_i\lambda_i^j$
with $|\lambda_i|\leq\rho_t$. Applying
\cref{lem:walk-scalar-decay} gives, for $j\geq D_t$,
\begin{equation}
 |a_j|\leq
 2D_t2^{D_t}\binom{j+D_t-2}{D_t-1}\rho^{j-D_t}+e^{-n}.
 \label{eq:walk-amplification}
\end{equation}
We can now choose $j=3\lceil\log^2 n\rceil$. Since $D_t$ is fixed and
$\rho_t<1$, the polynomial factor in $j$ is dominated by the
exponential decay. Hence $|a_{T/m}|\leq e^{-2\kappa_t\log^2 n}$
for a constant $\kappa_t>0$ and sufficiently large $n$. The algorithm
and measurement were arbitrary, so
\[
 \da t(\nu_n^{*T},\HaarU)\leq e^{-2\kappa_t\log^2 n}.
\]
To distinguish the walk, let $J$ be the number of insertions among
the $T$ updates. Then $J\sim\operatorname{Bin}(T,1/m)$,
$\E[J]=T/m$, and the Chernoff bound gives
\begin{equation}
 \Pr[J>2T/m]\leq(e/4)^{T/m}=e^{-\Omega(\log^2 n)}.
 \label{eq:walk-injection-tail}
\end{equation}
Outside this event there are at most
$h=2T/m=6\lceil\log^2 n\rceil$ single qubit non-Clifford gates. 
At a single qubit Z rotation gate on qubit $i$, the
stabilizers commuting with $Z_i$ remain stabilizers. Commutation with
$Z_i$ imposes at most one binary linear constraint on the stabilizer
labels, so this subspace has codimension at most one. Using $\dim S(\ket{0^n})=n$ gives
\begin{equation}
 \dim S(U\ket{0^n})\geq n-h.
 \label{eq:walk-stabilizer-dimension}
\end{equation}
Prepare
$12h+10=O(\log^2 n)$ copies of $U\ket{0^n}$ by
parallel forward queries and apply \cref{lem:walk-bell-test}.
The walk acceptance is at least $5/8-e^{-\Omega(\log^2 n)}$, while
Haar acceptance is at most $1/2+2^{-\Omega(n)}$.

\end{proof}

\section{Separating higher-order unitary designs from pseudorandomness}
\label{sec:phase}
In this section we will prove~\cref{thm:phase}, which formally separates the notion of pseudorandom unitaries from the statistical notion of polynomial-order unitary designs. 

\subsection{The FB ensemble}
\label{sec:phase-construction}
Our construction in \cref{thm:intro-phase} is based on a product $U=FB$ of two operators:
a block-diagonal operator $B$ that fixes the subspace $\mathrm{span}\{\ket{+}^{\otimes n}\}$ while randomizing
its orthogonal complement, and a diagonal phase function operator $F$ that maps
$\ket{+}^{\otimes n}$ to a phase state.

For the remainder of the section, we let $d=2^n$ denote the dimension of the Hilbert space.

\paragraph{The phase operator.} We begin by describing the phases. Identify the computational basis with the finite field $\F_d$, using a
fixed binary basis, that is, a basis of $\F_d$ as a vector space over $\F_2$.
This representation can be chosen uniformly in
polynomial time by constructing a degree-$n$ irreducible polynomial over
$\F_2$ and using its polynomial basis.\footnote{
Theorem~2.2 in~\cite{shoup1990new} gives a deterministic algorithm with running time
polynomial in the degree and the prime characteristic, which is
polynomial in $n$ here because the characteristic is fixed at two.}
For a uniformly random seed
$a=(a_0,\ldots,a_{2t-1})\in\F_d^{2t}$, define
\begin{equation}
 p_a(X)=\sum_{j=0}^{2t-1}a_jX^j,\qquad
 f_a(x)=\mathrm{tr}_{\F_d/\F_2}(p_a(x)),\qquad
 F_a=\sum_{x\in\F_d}(-1)^{f_a(x)}\proj{x}.
 \label{eq:phase-family}
\end{equation}
Here $\mathrm{tr}_{\F_d/\F_2}(z)=\sum_{j=0}^{n-1}z^{2^j}$ is the
absolute field trace. The seed has $2nt$ bits, and
$\mathcal F_{n,t}$ denotes the distribution of $F_a$.

\begin{lemma}[Statistical independence of the phases]
\label{lem:phase-independence}
Suppose $2t\leq d$. The signs of $F_a$ on any set of at most $2t$
distinct basis labels are independent uniform elements of $\{+1,-1\}$.
For every $t\geq1$, given $a$, the unitary $F_a$ has an exact circuit of size $\poly(n,t)$
using clean work qubits. Moreover, its classical description is computable in
$\poly(n,t)$ time.
\end{lemma}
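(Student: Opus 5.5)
The proof splits into two parts: a statistical claim (independence of the signs) and an algorithmic claim (efficient circuit and classical description).

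\emph{Independence.} Fix distinct labels $x_1,\dots,x_k\in\F_d$ with $k\leq 2t\leq d$. The map $a\mapsto (p_a(x_1),\dots,p_a(x_k))$ from $\F_d^{2t}$ to $\F_d^k$ is $\F_d$-linear. Its matrix is the $k\times 2t$ Vandermonde matrix $(x_i^j)$, whose first $k$ columns form an invertible Vandermonde block because the $x_i$ are distinct. So the map is surjective, and every fiber has the same size $d^{2t-k}$. For uniform $a$, the tuple $(p_a(x_i))_i$ is therefore uniform on $\F_d^k$, which makes the values independent and uniform in $\F_d$.

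Next apply the trace. The map $\mathrm{tr}_{\F_d/\F_2}:\F_d\to\F_2$ is $\F_2$-linear and not identically zero: as a polynomial of degree $2^{n-1}<d$ it cannot vanish on all of $\F_d$. Hence it is surjective with fibers of size $d/2$, and it sends the uniform distribution on $\F_d$ to the uniform distribution on $\F_2$. Applying it coordinatewise to independent uniform field elements gives independent uniform bits $f_a(x_i)$. The signs $(-1)^{f_a(x_i)}$ are then independent and uniform in $\{\pm1\}$.

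\emph{Efficiency.} Theorem~2.2 of~\cite{shoup1990new} gives a degree-$n$ irreducible polynomial over $\F_2$ in $\poly(n)$ time. With it, multiplication in $\F_d$ in the polynomial basis is a classical reversible circuit of size $O(n^2)$ using clean ancillas (schoolbook multiplication followed by reduction). The field trace is $\F_2$-linear, so $\mathrm{tr}(z)=\langle \tau,z\rangle$ for a fixed vector $\tau\in\F_2^n$. We compute $\tau$ classically by evaluating the trace on each basis element, using $n$ repeated squarings per element, in $\poly(n)$ time; the classical description of $F_a$ is then $a$ together with the modulus polynomial and $\tau$.

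The quantum circuit for $F_a$ works as follows on input $\ket{x}$:
\begin{enumerate}
\item Evaluate $p_a(x)$ into a clean register by Horner's rule. This uses $2t$ reversible multiply-and-add-constant steps, each multiplying by the quantum value $x$ and then XOR-ing in $a_j$.
\item Compute the parity $\langle\tau,p_a(x)\rangle$ into one ancilla with CNOTs.
\item Apply $Z$ to that ancilla.
\item Uncompute steps 2 and 1 in reverse order.
\end{enumerate}
The ancillas return to $\ket{0}$, and the net effect is $\ket{x}\mapsto(-1)^{f_a(x)}\ket{x}$. Each Horner step multiplies by the quantum value $x$ into a fresh register, so the steps are reversible by construction. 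This needs $O(tn)$ work qubits and $O(tn^2)$ gates, i.e.\ size $\poly(n,t)$.

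The only slightly delicate point is the nonvanishing of the trace, which gives uniformity of each bit. It is handled by the degree argument above. All other steps are routine reversible arithmetic.
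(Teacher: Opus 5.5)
Your proposal is correct and follows essentially the same route as the paper: surjectivity of the Vandermonde evaluation map gives independent uniform field values, the nonzero $\F_2$-linear trace (nonvanishing by the degree-$2^{n-1}<d$ argument) maps these to independent uniform bits, and Horner's rule with a reversible trace computation, a $Z$ on the output bit, and uncomputation implements $F_a$. Your version only adds more explicit resource counts and the precomputed trace vector $\tau$.
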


\begin{proof}
For distinct $x_1,\ldots,x_j$ with $j\leq 2t$, consider the map
\[
 a\longmapsto(p_a(x_1),\ldots,p_a(x_j))
\]
from $\F_d^{2t}$ to $\F_d^j$. Its first $j$ columns form a
Vandermonde matrix with determinant
$\prod_{u<v}(x_v-x_u)\ne0$. It is therefore surjective, and each output
has exactly $d^{2t-j}$ preimages. A uniform seed consequently produces
independent uniform field values at the prescribed labels.
The absolute trace is an $\F_2$-linear map into $\F_2$. It is nonzero:
its defining polynomial has degree $2^{n-1}<d$ and leading coefficient
one, and so cannot vanish on every element of $\F_d$. It follows that
its two fibers both have size $d/2$. Applying this map to the
independent field values proves sign-independence.

Next, we discuss implementations. Horner's rule uses $O(t)$ field operations,
each having Boolean complexity $\poly(n)$ in the fixed polynomial
basis. The trace is binary linear. Reversibly compute $f_a(x)$
using CNOT and Toffoli gates, apply $Z$ to its output bit, and uncompute.
This implements
$\ket{x}\ket0\mapsto(-1)^{f_a(x)}\ket{x}\ket0$
with the stated resources.
\end{proof}

The second factor will be an explicit random circuit fixing
$\ket{+}^{\otimes n}$. Write $N=d-1$.
As an idealized version, let $B_{\infty}$ denote the Haar ensemble on
$\mathrm U\bigl((\ket{+}^{\otimes n})^\perp\bigr)$,
with each unitary extended by the identity on
$\Span\{\ket{+}^{\otimes n}\}$. We use the same identity-extension
convention for special unitaries on this orthogonal complement. Our concrete approach is to approximate the moments of $B_{\infty}$ using polynomial-depth
circuits.

We first analyze $F_aB_\infty$ and then replace the ideal $B_{\infty}$ by
the finite circuit.

\subsection{The $FB_\infty$ ensemble is an approximate design}

We show that $FB_\infty$ fools adaptive queries by controlling
collisions among computational-basis labels. The argument compares
the output of any fixed query algorithm with a common reference state:
both our ensemble and Haar measure contain this state as a positive
component, with the missing weight bounded by a collision probability.
We follow the positive-operator approach of~\cite[Secs.~IV.2--IV.3]{CuiEtAl2025}.

For an ensemble $\nu$ on $\operatorname{U}(d)$, write
\[
    Q_\nu^{(q)}
    =\mathbb E_{C\sim\nu}(C^\dagger)^{\otimes q}
       \Pi_{\mathrm{dist}(d,q)}C^{\otimes q},
    \qquad
    \eta_q(\nu)=\|I-Q_\nu^{(q)}\|_\infty.
\]
$\eta_q(\nu)$ can be interpreted as the largest average collision probability over
all inputs on $q$ copies, and
$(1-\eta_q(\nu))I\preceq Q_\nu^{(q)}\preceq I$.
The corresponding bounds needed for $B_\infty$ and Haar measure are
proved in~\cref{appendix:collision}.

Write $(d)_q=d(d-1)\cdots(d-q+1)$, and
\[
    |\Omega_h\rangle=d^{-q/2}\sum_{x\in[d]^q}|x\rangle_X|x\rangle_Y.
\]
Let $R(\sigma)$ permute the $q$ tensor factors according to
$\sigma\in S_q$, and set
\[
    J_0:=\sum_{\sigma\in S_q}R_X(\sigma)\otimes R_Y(\sigma).
\]
We denote the distinct-subspace projectors on $X$ and $Y$ by
$\Pi_X$ and $\Pi_Y$, respectively.

\begin{lemma}[Positive representation of adaptive queries]
\label{lem:adaptive-positive-representation}
Fix an adaptive $q$-query algorithm, where $1\leq q\leq d$.
There is a positive linear map $\Lambda$, depending only on the
algorithm, such that its averaged output state under an ensemble
$\xi$ is $\rho_\xi=\Lambda(J_\xi)$, where
\[
    J_\xi=d^{2q}\mathbb E_{U\sim\xi}
       (U_X^\dagger)^{\otimes q}|\Omega_{d^q}\rangle\langle\Omega_{d^q}|
       U_X^{\otimes q}.
\]
\end{lemma}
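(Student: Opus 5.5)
The plan is to build $\Lambda$ by gate teleportation through the maximally entangled state, writing each oracle call as a projection onto $|\Omega\rangle$ and pushing the conditioning into a positive map. For each query $i\in[q]$ introduce a pair of $n$-qubit registers $X_i,Y_i$. First, for a single copy, use the standard identity
\[
  U = d\,\bigl(I\otimes\langle\Omega_d|_{Y_i Z}\bigr)\bigl((U^{\dagger})^{\top}\cdots\bigr),
\]
in its transpose-trick form: for any operator $W$ acting on the query register $Z$, we have $U_Z=d\,(\langle\Omega_d|_{ZY_i}\otimes I_{X_i})\,(I_Z\otimes |\cdot\rangle)$ with $U$ moved onto $X_i$ of $|\Omega_d\rangle_{X_iY_i}$. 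I will fix the convention so that the resource state is exactly $(U_X^\dagger)^{\otimes q}|\Omega_{d^q}\rangle$, matching the definition of $J_\xi$. Concretely, I will use $(M\otimes I)|\Omega\rangle=(I\otimes M^{\top})|\Omega\rangle$ to place $U^\dagger$ on $X_i$ and then use the transpose on $Y_i$ side where needed. This is a bookkeeping step, and only the placement of daggers and transposes must be checked.

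Next, define the algorithm's unnormalized output as an operator in $|\Phi_U\rangle:=(U_X^\dagger)^{\otimes q}|\Omega_{d^q}\rangle$. Replace the $i$-th oracle call by the following sequence. Swap the query register into $X_i$ (or $Y_i$), and then postselect the pair (query register, appropriate half of the $i$-th resource) onto $\langle\Omega_d|$ with the factor $d$. Because postselection onto $\langle \Omega_d|$ teleports the register through the resource, one obtains an operator identity
\[
  |\mathcal A^U_q\rangle = d^{q}\, K\,\bigl(|0^{n+m}\rangle\otimes|\Phi_U\rangle\bigr),
\]
where $K$ is a fixed linear operator built from $A_0,\dots,A_q$, swaps, and the bras $\langle\Omega_d|$. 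Crucially, $K$ is independent of $U$; this is the property on which the whole construction rests. Applying the identity at each call in order, the intermediate unitaries $A_i$ commute past the resource registers not yet used.

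Then set $\Lambda(Y):=K\,(|0\rangle\langle0|\otimes Y)\,K^\dagger$. This map is completely positive, hence positive, and it is linear and depends only on the algorithm. Averaging over $U\sim\xi$ gives $\rho_\xi=\mathbb E\,|\mathcal A^U_q\rangle\langle\mathcal A^U_q| = d^{2q}\,\mathbb E\,\Lambda(|\Phi_U\rangle\langle\Phi_U|)=\Lambda(J_\xi)$ by linearity.

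The main obstacle is the ordering and adaptivity. With sequential teleportation, later $A_i$ act on registers produced by earlier teleportations, so I must verify that $K$ can be written as a single product in which every $\langle\Omega|$ contraction acts after the corresponding $A_{i-1}$ and before $A_i$. This works because the resource registers are untouched until used. I will also check that the normalization is $d$ per call, giving $d^{2q}$ overall, and that the convention $U^\dagger$ on $X$ (rather than $U$) is consistent. If it is not, I will rename $X\leftrightarrow Y$ and use $(U^\dagger)^{\top}=\overline U$, absorbing a fixed transpose into $K$.
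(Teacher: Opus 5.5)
Your overall strategy of gate teleportation through maximally entangled states, with a $U$-independent operator carrying the whole algorithm, matches the paper's. The gap is that your central identity $\ket{\mathcal A^U_q}=d^q K(\ket{0^{n+m}}\otimes\ket{\Phi_U})$ is false for every fixed linear $K$. The amplitudes of $\ket{\Phi_U}=(U_X^\dagger)^{\otimes q}\ket{\Omega_{d^q}}$ are degree-$q$ polynomials in the entries $\overline{U_{ab}}$, whereas those of $\ket{\mathcal A^U_q}$ are degree-$q$ polynomials in the $U_{ab}$. Replacing $U$ by $e^{i\theta}U$ multiplies the left side by $e^{iq\theta}$ and the right side by $e^{-iq\theta}$, which forces $\ket{\mathcal A^U_q}=0$. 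Concretely, $U^\dagger_{X_i}\ket{\Omega_d}_{X_iY_i}=\overline U_{Y_i}\ket{\Omega_d}_{X_iY_i}$, so postselecting the query register onto $\bra{\Omega_d}$ with either half of the resource teleports $U^\dagger$ or $\overline U$, never $U$. Your fallback (renaming $X\leftrightarrow Y$ and absorbing a transpose into $K$) cannot repair this. What is missing is complex conjugation, which is antilinear and so cannot be part of a linear $K$.

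The repair is to let the resource enter as a bra rather than a ket. Run the algorithm without the oracle, keeping each query input in $X_j$ and continuing with the half $Z_j$ of a fresh $\ket{\Omega_d}_{Y_jZ_j}$. This gives an oracle-independent state $\ket{\Psi}_{WXY}$. The identity $(\bra{\Omega_d}_{XY}\otimes I_Z)(U_X\otimes I_{YZ})(\ket{\chi}_X\otimes\ket{\Omega_d}_{YZ})=d^{-1}U\ket{\chi}_Z$ holds, and later gates never touch $X_jY_j$. Together these give $\ket{\mathcal A^U_q}=d^q(I_W\otimes\bra{\Phi_U})\ket{\Psi}$, which is antilinear in $\ket{\Phi_U}$, as it must be. Hence $\Lambda(Z)=\Tr_{XY}[(I_W\otimes Z)\proj{\Psi}]$ satisfies $\rho_\xi=\Lambda(J_\xi)$, and this is exactly the paper's proof. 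In your form it reads $\Lambda(Y)=K'\,Y^{\top}K'^{\dagger}$ for a fixed linear $K'$; equivalently, you teleport through $\overline{\ket{\Phi_U}}=U_Y^{\otimes q}\ket{\Omega_{d^q}}$ and transpose. Because of the transpose this map is positive but need not be completely positive, so you must drop the claim of complete positivity. Positivity is all the lemma asserts and all that \cref{lem:adaptive-collision-bound} uses.
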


The proof of~\cref{lem:adaptive-positive-representation} is standard and can be found in~\cite{CuiEtAl2025}.
We give a proof consistent with the notation in
Appendix~\ref{app:adaptive-positive-representation}.

We will also use the following reference state
\[
    \rho_*:=\Lambda(J_*),
    \qquad
    J_*:=\frac{d^q}{(d)_q}J_0(I_X\otimes\Pi_Y).
\]
$\rho_*$ is a density operator depending only on the algorithm and can be viewed as the Gaussian average restricted to the distinct subspace. 
Moreover, the Gaussian average is related to the frame operator of the permutation basis, which approximately equals the Haar projector due to the approximate orthogonality of permutations~\cite{harrow2023approximate}. 
The strategy of~\cref{lem:adaptive-collision-bound} is then to reduce the analysis to this Gaussian reference state and then exploit its explicit form.
We include a proof that $\rho_*$ is a density operator in~\cref{app:adaptive-positive-representation}. 
The proof is also contained in different notation in~\cite{CuiEtAl2025}.

Define
\begin{equation}
 Q_\nu^{(q)}=\underset{C\sim\nu}{\E}(C^\dagger)^{\otimes q}
       \Pi_{\mathrm{dist}(d,q)}C^{\otimes q},\qquad
 \eta_q(\nu)=\|I-Q_\nu^{(q)}\|_\infty.
 \label{eq:collision-def}
\end{equation}
The quantity $\eta_q(\nu)$ is the maximum average collision
probability over all joint inputs, including entangled inputs. Since
$0\preceq Q_\nu^{(q)}\preceq I$, it lies in $[0,1]$ and satisfies
\begin{equation}
 (1-\eta_q(\nu))I\preceq Q_\nu^{(q)}\preceq I.
 \label{eq:collision-order}
\end{equation}
For the Haar measure and polynomial $q$, $\eta_q$ is well known to be exponentially small (see e.g.~\cite{10756150}) and the proof for $B_{\infty}$ is an elementary extension of the same argument.
We include a quantitative proof of these bounds in in~\cref{appendix:collision}.

\begin{lemma}[Indistinguishability from collision bounds]
\label{lem:adaptive-collision-bound}
Let $1\leq q\leq d$, let $F$ be a diagonal ensemble with
$2q$-wise independent signs, and let $P$ be a uniform
computational-basis permutation. Then, for every ensemble $\nu$,
\[
    \delta_q(F\ast P\ast\nu,\mu_H)
    \leq\eta_q(\nu)+\eta_q(\mu_H).
\]
\end{lemma}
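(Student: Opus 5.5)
The plan is to follow the positive-operator strategy of~\cite{CuiEtAl2025}. We show that for both ensembles $\xi\in\{F\ast P\ast\nu,\ \mu_H\}$ the averaged output state dominates a fixed multiple of the reference state:
\[
  \rho_\xi\succeq(1-\eta_q)\,\rho_*,
\]
with $\eta_q=\eta_q(\nu)$ or $\eta_q=\eta_q(\mu_H)$ respectively. Granting this, write $\rho_\xi=(1-\eta_q)\rho_*+\sigma_\xi$ with $\sigma_\xi\succeq0$ and $\Tr\sigma_\xi=\eta_q$. Then
\[
  \|\rho_{F\ast P\ast\nu}-\rho_{\mu_H}\|_1\le \bigl(\eta_q(\nu)-\eta_q(\mu_H)\bigr)\Tr\rho_*+\Tr\sigma_{F\ast P\ast\nu}+\Tr\sigma_{\mu_H}\le 2\bigl(\eta_q(\nu)+\eta_q(\mu_H)\bigr),
\]
and the measurement advantage is half of this, which gives the claim. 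Since $\Lambda$ from \cref{lem:adaptive-positive-representation} is a positive map, it suffices to prove the operator inequality $J_\xi\succeq(1-\eta_q)J_*$ before applying $\Lambda$. The Haar case reduces to the same computation, because by invariance $\mu_H=F\ast P\ast\mu_H$.

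Take $U=FPC$, so $U^\dagger=C^\dagger P^\dagger F^\dagger$. Then
\[
  J_\xi=d^{2q}\,\E_C\,(C_X^\dagger)^{\otimes q}\Bigl[\E_{F,P}\,(P^\dagger F^\dagger)_X^{\otimes q}\ketbra{\Omega_{d^q}}{\Omega_{d^q}}(FP)_X^{\otimes q}\Bigr]C_X^{\otimes q}.
\]
First average over $F$. The vector $F_X^{\otimes q}\ket{\Omega}$ equals $d^{-q/2}\sum_{\vec x}\bigl(\prod_i s(x_i)\bigr)\ket{\vec x}_X\ket{\vec x}_Y$. Because the signs are $2q$-wise independent, the average of $\prod_i s(x_i)s(x'_i)$ is $1$ exactly when every label appears an even number of times in $(\vec x,\vec x')$, and $0$ otherwise. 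Restricted to distinct $\vec x$ and distinct $\vec x'$, this condition says precisely that $\vec x'$ is a permutation of $\vec x$.

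The key observation is that the phase average acts as a pinching. The operator $\ketbra{\Omega}{\Omega}$ is PSD, and the average over all sign patterns is an average of diagonal unitary conjugations, so it is PSD and block-diagonal with respect to the decomposition of $\vec x$ into ``distinct'' versus ``non-distinct'' tuples. This is because the distinct/non-distinct split is preserved by the pairing structure: when $\vec x$ is distinct, any $\vec x'$ that survives the average must consist of the same labels, hence must also be distinct. Dropping the PSD non-distinct block therefore gives
\[
  \E_F[\cdots]\ \succeq\ d^{-q}\sum_{\vec x\ \mathrm{distinct}}\ \sum_{\sigma\in S_q}\ketbra{\vec x,\vec x}{\sigma\vec x,\sigma\vec x}=d^{-q}J_0(\Pi_X\otimes\Pi_Y).
\]
The last expression is invariant under $P$, so the average over $P$ leaves it unchanged; $P$ only serves to make the statement match the ensemble. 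I will double-check this pinching claim carefully, since it is where the exact cross terms, which would otherwise force a square-root loss, are discarded. This is the main subtle step.

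Next, conjugate by $C_X$. Since $R_X(\sigma)\otimes R_Y(\sigma)$ commutes with $V^{\otimes q}\otimes W^{\otimes q}$ for all $V,W$, and $\Pi$ is permutation invariant, we get
\[
  J_\xi\succeq d^{q}\,J_0\bigl(Q_\nu^{(q)}\otimes\Pi_Y\bigr).
\]
Here $J_0/q!$ is a projector that commutes with $Q_\nu^{(q)}\otimes\Pi_Y$ and with $I\otimes\Pi_Y$, and these two operators commute with each other. From \eqref{eq:collision-order} we have $Q_\nu^{(q)}\otimes\Pi_Y\succeq(1-\eta_q(\nu))\,I\otimes\Pi_Y$. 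Sandwiching this inequality with $J_0^{1/2}$, which commutes with both sides, yields
\[
  J_\xi\succeq(1-\eta_q(\nu))\,d^qJ_0(I\otimes\Pi_Y).
\]
The remaining task is to match this to $(1-\eta_q(\nu))J_*$. This requires checking the normalization constant $d^q/(d)_q$ in $J_*$ against the constants from $|\Omega\rangle$ and $J_\xi$. I expect it to reduce to comparing with the $\Tr\rho_*=1$ normalization from Appendix~\ref{app:adaptive-positive-representation}; if the constants do not match exactly, I would instead define the comparison operator directly as $d^qJ_0(I\otimes\Pi_Y)$ and rescale.
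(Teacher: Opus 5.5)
Your overall structure matches the paper's: pinch with the sign average, conjugate by $C_X$, use $Q_\nu^{(q)}\succeq(1-\eta_q(\nu))I$, push through the positive map $\Lambda$, and finish with the trace-distance step. The pinching step you flagged is also fine. The gap is the identity that follows it, together with your claim that the average over $P$ does nothing. After pinching, the operator is
\[
 d^{-q}\sum_{\vec x\ \mathrm{distinct}}\ \sum_{\sigma\in S_q}\ketbra{\vec x}{\sigma\vec x}_X\otimes\ketbra{\vec x}{\sigma\vec x}_Y ,
\]
in which the $X$ and $Y$ labels coincide. By contrast, $J_0(\Pi_X\otimes\Pi_Y)=\sum_\sigma R_X(\sigma)\Pi_X\otimes R_Y(\sigma)\Pi_Y$ has independent distinct labels on the two registers. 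These are not equal: their traces are $d^{-q}(d)_q$ and $d^{-q}(d)_q^2$. Moreover, the first is not invariant under $P$, since $P$ acts on $X$ only.

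The $P$-average is exactly what the paper's proof uses. $P$ acts componentwise, so it commutes with position permutations, and for each distinct $\vec x$ the tuple $P^{-1}\vec x$ is uniform over the $(d)_q$ distinct tuples. Averaging therefore replaces $\ketbra{\vec x}{\sigma\vec x}_X$ by $(d)_q^{-1}\sum_{\vec y\ \mathrm{distinct}}\ketbra{\vec y}{\sigma\vec y}_X$. This decouples $X$ from $Y$ and gives
\[
 J_{F\ast P}\succeq\frac{d^q}{(d)_q}\,J_0(\Pi_X\otimes\Pi_Y).
\]
Your remaining steps then yield $J_{F\ast P\ast\nu}\succeq\frac{d^q}{(d)_q}J_0(Q_\nu^{(q)}\otimes\Pi_Y)\succeq(1-\eta_q(\nu))J_*$ with $J_*$ exactly as defined, so no rescaling is needed.

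Your fallback does not repair this. The inequality you actually derive, $J_\xi\succeq(1-\eta_q(\nu))\,d^qJ_0(I_X\otimes\Pi_Y)$, is false. The normalization computation in \cref{app:adaptive-positive-representation} gives $\Tr\Lambda\bigl(d^qJ_0(I_X\otimes\Pi_Y)\bigr)=(d)_q$, so applying $\Lambda$ would force $1=\Tr\rho_\xi\geq(1-\eta_q(\nu))(d)_q$. Without the $P$-average there is no valid bound of the form $J_\xi\succeq c\,J_0(I_X\otimes\Pi_Y)$ to rescale, because the labels stay correlated.

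More fundamentally, no argument in which $P$ is inert can work, because the statement is false without $P$. Take $\nu$ uniform on the Clifford group. It is a $2$-design, so the argument of \cref{lem:collision} gives $\eta_q(\nu)\leq q(q-1)/(d+1)$. Yet $FC\ket{0^n}$ has computational-basis distribution uniform on the support of the stabilizer state $C\ket{0^n}$. With constant probability that support is a proper affine subspace. So measuring $n+O(1)$ query outputs and testing whether the samples lie in a proper affine subspace distinguishes $F\ast\nu$ from Haar with constant advantage.
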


\begin{proof}
Fix a query algorithm and use~\cref{lem:adaptive-positive-representation}.
For a fixed distinct tuple $x$, the tuple $P^{-1}x$ is uniform
over the $(d)_q$ distinct tuples. Hence
\begin{equation}
    J_{F\ast P}(\Pi_X\otimes\Pi_Y)
    =\frac{d^q}{(d)_q}J_0(\Pi_X\otimes\Pi_Y).
\end{equation}
Conjugating by $(C_X^\dagger)^{\otimes q}$ and averaging over
$C\sim\nu$ yields
\begin{align}
    J_{F\ast P\ast\nu}
    &\succeq (I_X\otimes \Pi_Y) J_{F\ast P\ast\nu} (I_X\otimes \Pi_Y) \\
    &=\frac{d^q}{(d)_q}J_0(Q_\nu^{(q)}\otimes\Pi_Y)\\
    &\succeq(1-\eta_q(\nu))J_*.
\end{align}
Here, the last inequality holds by~\eqref{eq:collision-order}.

Applying the positive map $\Lambda$ gives
$\rho_{F\ast P\ast\nu}\succeq(1-\eta_q(\nu))\rho_*$.
Since both states have trace one, this implies
\[
    \frac12\|\rho_{F\ast P\ast\nu}-\rho_*\|_1
    \leq\eta_q(\nu).
\]
Taking $C$ Haar random gives the same comparison with error
$\eta_q(\mu_H)$, because $FPC$ is then Haar random.
The lemma then follows from applying the triangle inequality. 
\end{proof}

We now apply the criterion to the fixed-vector ensemble.

\begin{lemma}[Indisinguishability for $FB_\infty$]
\label{lem:ideal-fb-design}
For $1\leq q\leq t$ and $2t\leq d$, the phase ensemble
$F_{n,t}$ from~\cref{sec:phase-construction} satisfies
\[
    \delta_q(F_{n,t}\ast B_\infty,\mu_H)
    \leq\frac{q(q-1)}{d}+\frac{q(q-1)}{d+1}
    \leq\frac{2q(q-1)}{d}.
\]
\end{lemma}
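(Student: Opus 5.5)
The proof reduces to \cref{lem:adaptive-collision-bound} by absorbing a uniform permutation into $B_\infty$. Every computational-basis permutation $P$ fixes $\ket{+}^{\otimes n}$, so $P$ is block diagonal with respect to $\Span\{\ket{+}^{\otimes n}\}\oplus(\ket{+}^{\otimes n})^\perp$. Its restriction to the complement is a unitary on that space. By left invariance of the Haar measure on $\mathrm U((\ket{+}^{\otimes n})^\perp)$, $PB_\infty$ has the same distribution as $B_\infty$ for each fixed $P$, and hence also for $P$ uniform and independent. Therefore
\[
F_{n,t}\ast B_\infty = F_{n,t}\ast P\ast B_\infty
\]
as ensembles. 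By \cref{lem:phase-independence}, the signs of $F_{n,t}$ are $2t$-wise independent (using $2t\leq d$), hence $2q$-wise independent for $q\leq t$. Applying \cref{lem:adaptive-collision-bound} with $\nu=B_\infty$ gives
\[
\delta_q(F_{n,t}\ast B_\infty,\mu_H)\leq\eta_q(B_\infty)+\eta_q(\mu_H).
\]

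It remains to show $\eta_q(\mu_H)\leq q(q-1)/(d+1)$ and $\eta_q(B_\infty)\leq q(q-1)/d$; these are the bounds deferred to \cref{appendix:collision}. Both use a union bound over pairs. We have
\[
I-\Pi_{\mathrm{dist}(d,q)}\preceq\sum_{i<j}\Pi^{\mathrm{eq}}_{ij},
\qquad
\Pi^{\mathrm{eq}}_{ij}=\sum_x\proj{x}_i\otimes\proj{x}_j\otimes I_{\text{rest}}.
\]
It therefore suffices to bound $\|\E_C (C^\dagger)^{\otimes 2}\Pi^{\mathrm{eq}}C^{\otimes 2}\|_\infty$ on two copies by $2/(d+1)$ for Haar and by $2/d$ for $B_\infty$; summing over the $\binom q2$ pairs then gives the claim. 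For Haar, the twirled operator commutes with $V^{\otimes 2}$, so it equals $\alpha\Pi_{\mathrm{sym}}+\beta\Pi_{\mathrm{anti}}$. Since $\Pi^{\mathrm{eq}}$ is supported on the symmetric subspace, $\beta=0$, and matching traces gives $\alpha=d/\binom{d+1}{2}=2/(d+1)$.

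The main obstacle is $B_\infty$, which is only invariant under the subgroup $\mathrm U(N)\oplus 1$. I would decompose $\C^d\otimes\C^d$ under this subgroup as
\[
(\C\ket{+}\oplus V)^{\otimes 2}=\C\ket{++}\oplus\bigl(\ket{+}\otimes V\bigr)\oplus\bigl(V\otimes\ket{+}\bigr)\oplus V^{\otimes 2},
\]
with $V=(\ket{+}^{\otimes n})^\perp$. The twirled operator is block-structured on this decomposition: a scalar on the trivial part, an operator on the span of the two copies of $V$, and $\alpha'\Pi_{\mathrm{sym}(V)}+\beta'\Pi_{\mathrm{anti}(V)}$ on $V^{\otimes2}$. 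I would compute each coefficient from the corresponding trace or overlap of $\Pi^{\mathrm{eq}}$. Two facts make this tractable: $\bra{++}\Pi^{\mathrm{eq}}\ket{++}=1/d$, and $\Pi^{\mathrm{eq}}$ annihilates the antisymmetric part. The answer should be $O(1/N)$ in each block, and the goal is to check that every block is at most $2/d$.

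As a cleaner alternative, I would bound the quadratic form directly. For a fixed input, apply a uniform $P$ first, which does not change the distribution. Conditioned on the projection onto $\ket{+}^{\otimes n}$, the $V$-component is Haar random in $V$. The collision probability is then bounded using the fact that the $\ket{+}$ component has flat computational-basis amplitudes of size $1/d$, giving a bound of order $2/d$. The tightest constant, $q(q-1)/d$, would come from the explicit block computation.
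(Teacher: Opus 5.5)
Your proposal is correct and follows the paper's proof: absorb a uniform basis permutation into $B_\infty$ because it fixes $\ket{+}^{\otimes n}$, use the $2q$-wise independence of the signs, and apply \cref{lem:adaptive-collision-bound} together with the collision bounds of \cref{lem:collision}. Your sketch of those bounds (pairwise union bound, symmetric twirl for Haar, block decomposition under $1\oplus\mathrm{U}(N)$ for $B_\infty$) is also the paper's appendix argument. There, the cross block you left unchecked equals $\frac1d\bigl(\begin{smallmatrix}1&1\\1&1\end{smallmatrix}\bigr)\otimes I$, since the projections $\ket{k_x}$ of $\ket{x}$ onto $(\ket{+}^{\otimes n})^\perp$ satisfy $\sum_x\proj{k_x}=I$, and this is the block that attains $2/d$.
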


\begin{proof}
Every computational-basis permutation fixes $|+\rangle^{\otimes n}$,
so $PB_\infty$ has the same distribution as $B_\infty$.
The phases are $2q$-wise independent by~\cref{lem:phase-independence}.
Apply~\cref{lem:adaptive-collision-bound} and the collision
bounds of~\cref{lem:collision}.
\end{proof}

\subsection{Learning the planted phase state}
\label{sec:learning}
Here, we will prove the following lemma:
\begin{lemma}\label{lemma:efficient-distinguisher}
    Any ensemble of the form $F B$ with $F$ is a random phase drawn as $F_a$ in~\cref{sec:phase-construction} and $B$ leaving the $\ket{+^n}$ state invariant can be distinguished from Haar in polynomial time using parallel queries.
\end{lemma}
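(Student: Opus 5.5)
The distinguisher queries $U$ in parallel on $m=O(nt)$ copies of $\ket{+}^{\otimes n}$. Since $B\ket{+}^{\otimes n}=\ket{+}^{\otimes n}$, each query returns exactly the binary phase state $\ket{\psi_{f_a}}=F_a\ket{+}^{\otimes n}$. The plan has two phases. First, we learn the seed $a$ up to an additive constant in $f_a$ from $m_1=O(nt)$ copies. Second, we test the candidate on $m_2=O(1)$ fresh copies. The test unapplies the candidate phase and measures in the Hadamard basis, accepting if the outcome is $0^n$. Under the $FB$ ensemble, any correct candidate (up to a global sign) makes every test accept. Under Haar, the output state is Haar random independent of the learning stage's randomness, conditional on the learned classical string. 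Each test then accepts with probability $1/d$, so acceptance is negligible.

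For learning, I would follow the derivative-measurement idea of Arunachalam--Bravyi--Dutt--Yoder, specialized to two copies. Take $\ket{\psi_f}\otimes\ket{\psi_f}$, apply $\ket{x}\ket{y}\mapsto\ket{x}\ket{x\oplus y}$, and measure the second register to obtain a uniform $s$. The first register collapses to $\propto\sum_x(-1)^{f(x)+f(x+s)}\ket{x}$. Measuring in the computational basis gives nothing, so instead I would directly measure the two-dimensional subspaces $\{\ket{x},\ket{x+s}\}$. Concretely, map the state to $\frac{1}{\sqrt2}\bigl((-1)^{f(x)}\ket{x}+(-1)^{f(x+s)}\ket{x+s}\bigr)$ for uniformly random $x$. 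This can be done by computing the canonical representative $\min(x,x+s)$ into an ancilla, measuring it, and then measuring the remaining qubit of freedom in the $\pm$ basis. The result is a sample $(x,y=x+s,\,b=f(x)+f(y))$ with $x\neq y$ uniform. Since $f_a(x)=\mathrm{tr}(p_a(x))$ is $\F_2$-linear in the $2nt$ seed bits, each sample gives the linear equation
\[
 \sum_{j}\mathrm{tr}\bigl(a_j(x^j+y^j)\bigr)=b.
\]
This uses two copies per equation. After $m_1/2$ samples, Gaussian elimination over $\F_2$ yields some solution $a'$.

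The main obstacle is showing that $O(nt)$ random equations pin down $f_a$ up to an additive constant with high probability. The kernel of the system is $\{c:\ f_c(x)=f_c(y)\ \text{for all sampled pairs}\}$. For a fixed $c$ with $f_c$ nonconstant, a random pair satisfies $f_c(x)=f_c(y)$ with probability $\frac12+\frac{\hat\epsilon_c^2}{2}+O(1/d)$, where
\[
 \hat\epsilon_c=d^{-1}\sum_x(-1)^{f_c(x)}.
\]
Weil's bound for additive character sums of polynomials of degree $<2t$ gives $|\hat\epsilon_c|\le(2t-2)/\sqrt d$ whenever $\mathrm{tr}(p_c(X))$ is not of the form $\mathrm{tr}(g^2+g)+\text{const}$. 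So I would first reduce $p_c$ modulo the trace kernel, using the relation $\mathrm{tr}(z^2)=\mathrm{tr}(z)$ to eliminate even-degree monomials. After this reduction, nonconstant $f_c$ has a nontrivial character sum bound, and each equation kills $c$ with probability at least $1/2-o(1)$.

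A union bound over the at most $2^{2nt}$ seeds then needs $(0.51)^{m_1/2}2^{2nt}\ll 1$, so $m_1=O(nt)$ suffices. With high probability, every solution $a'$ satisfies $f_{a'}=f_a+\text{const}$.

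It remains to control the distinguishing gap, which needs $t\le 2^{n/4-4}$ so that Weil's bound is nontrivial. The $FB$ ensemble is accepted with probability $1-o(1)$, and Haar with probability at most $d^{-m_2}$. This gives constant advantage using $O(nt)$ parallel forward queries and $\poly(n,t)$ time, with the field arithmetic implemented as in \cref{lem:phase-independence}.
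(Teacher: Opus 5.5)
There is a genuine gap in your learning primitive: the measurement you describe returns no information about $f$. After the two-copy step, the surviving register is $\propto\sum_x(-1)^{g_s(x)}\ket{x}$ with $g_s(x)=f(x)+f(x+s)$, and $g_s(x+s)=g_s(x)$. So on every coset $\{x,x+s\}$ the state is $\pm\tfrac{1}{\sqrt2}(\ket{x}+\ket{x+s})$. Your coset measurement followed by the $\pm$ measurement therefore gives the same outcome with certainty.

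The state $\tfrac{1}{\sqrt2}\bigl((-1)^{f(x)}\ket{x}+(-1)^{f(x+s)}\ket{x+s}\bigr)$ that you want comes only from applying the coset measurement to an undifferentiated copy of $\ket{\psi_f}$. The two-copy step has already consumed both copies, so your count of two copies per equation does not work either.

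The repair is to drop the two-copy step entirely:
\begin{enumerate}
\item sample a nonzero direction $s$ classically;
\item change binary coordinates so that $s$ is a basis vector;
\item measure the other $n-1$ coordinates;
\item measure the last qubit in the $X$ basis.
\end{enumerate}
This is exactly the paper's \cref{lem:derivative-sample}. It uses one copy per equation and returns a uniform unordered distinct pair together with $f(x)+f(y)$.

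With this fix, the rest of your learning argument matches the paper's: removing even exponents via Frobenius, the odd-degree Weil bound, the agreement probability $\tfrac12\bigl(1+\tfrac{d\mu_g^2-1}{d-1}\bigr)$, and a union bound over $2^{2nt}$ seeds. Incidentally, your condition ``not of the form $\mathrm{tr}(g^2+g)+\mathrm{const}$'' is vacuous, since $\mathrm{tr}(g^2+g)=0$. What the Weil bound actually needs is odd degree after the reduction.

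Your Haar-side soundness argument is also wrong as stated. Every parallel query returns the same state $\ket{\phi}=U\ket{+}^{\otimes n}$, so the test copies are perfectly correlated with the learning copies. Conditioned on the learned seed, $\ket{\phi}$ is no longer Haar random, and you cannot conclude that each test accepts with probability $1/d$.

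A correct argument runs as follows. Let $\{E_{\hat a}\}$ be the learner's POVM on $m_1$ copies. The probability of outputting $\hat a$ and passing all $m_2$ tests is $\sum_{\hat a}\E_\phi\Tr\bigl[(E_{\hat a}\otimes(\proj{\psi_{\hat a}})^{\otimes m_2})(\proj{\phi})^{\otimes(m_1+m_2)}\bigr]$. Use $\E_\phi(\proj{\phi})^{\otimes k}=\Pi_{\mathrm{sym}}^{(k)}/\binom{d+k-1}{k}$ and $\Pi_{\mathrm{sym}}^{(m_1+m_2)}\preceq\Pi_{\mathrm{sym}}^{(m_1)}\otimes\Pi_{\mathrm{sym}}^{(m_2)}$. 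Together these bound the acceptance probability by $\binom{d+m_1-1}{m_1}/\binom{d+m_1+m_2-1}{m_1+m_2}$. For $m_2=1$ this equals $(m_1+1)/(d+m_1)$, which is negligible for $m_1=O(nt)$.

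Alternatively, take a union bound over the $2^{2nt}$ candidate seeds. A Haar-random state has squared overlap at least $\tfrac12$ with any fixed state only with probability $2^{-(d-1)}$.

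The paper itself only sketches this final step (``testing it on fresh copies''). Still, your stated justification is false and needs to be replaced by an argument of this kind.
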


Every forward query on $\ket{+}^{\otimes n}$ produces the same phase
state. 
We learn this state by extracting relative phases, each of which gives a linear constraint on its seed. 
\subsubsection{Relative-phase measurements and recovery}

We use the derivative measurement of Arunachalam, Bravyi, Dutt,
and Yoder~\cite[Definition~1]{arunachalam2023optimalalgorithmslearningquantum},
with a change of binary coordinates to randomize its direction.

\begin{lemma}[Derivative sampling]
\label{lem:derivative-sample}
Let $d=2^n$, $n\geq1$, and $f:\F_d\to\F_2$. For every integer
$J\geq1$, a quantum procedure using at most one copy of
$\ket{\psi_f}=d^{-1/2}\sum_x(-1)^{f(x)}\ket{x}$ and
$\poly(n,J)$ operations either aborts, with probability $d^{-J}$,
or returns a uniform unordered distinct pair $\{x,y\}$ and the
bit $f(x)+f(y)$.
\end{lemma}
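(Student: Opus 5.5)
The plan is to adapt the derivative measurement of Arunachalam--Bravyi--Dutt--Yoder, but to apply it in randomized binary coordinates so that the direction is uniform. The measurement uses a single copy of $\ket{\psi_f}$ and an ancilla in the uniform superposition $d^{-1/2}\sum_{h}\ket{h}$ over nonzero-or-zero shifts $h\in\F_2^n$.

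First, sample a uniformly random invertible $\F_2$-linear map $L\in\mathrm{GL}_n(\F_2)$ and a uniformly random offset $c$. Apply the reversible affine relabelling $x\mapsto L^{-1}(x+c)$ to $\ket{\psi_f}$, which is efficiently computable with CNOTs in $\poly(n)$. This yields $\ket{\psi_g}$ with $g(z)=f(Lz+c)$.

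Second, prepare $\ket{\psi_g}\otimes\ket{\psi_g}$? That would use two copies, so instead I use the single-copy route via Bell-basis-type sampling. Measure the state in the computational basis? That would lose the phase. The correct single-copy procedure is the one in \cite[Definition~1]{arunachalam2023optimalalgorithmslearningquantum}:
\begin{itemize}
\item[(i)] choose a uniformly random nonzero direction $h$;
\item[(ii)] measure the observable that projects onto $\Span\{\ket z,\ket{z+h}\}$ by computing $z\bmod\langle h\rangle$ (the coset representative) into an ancilla, which collapses the state to $(\pm\ket{z}\pm\ket{z+h})/\sqrt2$ for a uniformly random coset $\{z,z+h\}$;
\item[(iii)] measure the remaining qubit in the $\ket\pm$ basis, after mapping the coset to one qubit by a CNOT pattern determined by $h$.
\end{itemize}
The $\pm$ outcome equals $g(z)+g(z+h)$ deterministically. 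Mapping back gives the pair $\{x,y\}=\{Lz+c,\,L(z+h)+c\}$ and the bit $f(x)+f(y)$.

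Third, I verify uniformity. For a fixed $h\neq0$ and a uniform coset, the pair $\{z,z+h\}$ is uniform among pairs with difference $h$. Averaging over uniform $h\neq0$ makes the unordered pair uniform over all $\binom d2$ distinct pairs, since every distinct pair has exactly one difference and each coset has size two. With $h$ uniform, the random $L$ is not even needed. I keep it only as the paper's stated ``change of coordinates'' so that the fixed-direction variant of ABDY (which uses its own randomness) still works.

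The abort probability $d^{-J}$ arises if $h$ is sampled by repeatedly drawing uniform $h\in\F_2^n$ and rejecting $h=0$. Using at most $J$ attempts and aborting if all give $0$ happens with probability exactly $d^{-J}$ and costs $\poly(n,J)$ operations; no copy is consumed before $h$ is fixed. Conditioned on not aborting, $h$ is uniform nonzero.

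The main obstacle is step (ii)--(iii): implementing the coset measurement and checking the phase readout exactly. My approach is to choose an index $i$ with $h_i=1$ and apply CNOTs from qubit $i$ onto every $j$ with $h_j=1$, $j\neq i$. This maps $\{z,z+h\}$ to a pair differing only in bit $i$. I then measure all other qubits in the computational basis and measure qubit $i$ in the Hadamard basis, and outcome $-$ occurs exactly when the two signs differ.
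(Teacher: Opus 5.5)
Your proposal is correct and matches the paper's proof. The shared steps are: draw $h$ at most $J$ times with an abort, apply an $\F_2$-linear CNOT change of coordinates sending $h$ to a single standard basis vector, measure the other $n-1$ qubits in the computational basis and the remaining qubit in the $X$ basis, and get uniformity from the unique direction $x+y$ together with the flat amplitudes (each coset has probability $2/d$). Your fan-out from a pivot qubit $i$ with $h_i=1$ is just a concrete instance of the paper's basis extension $h,h_2,\ldots,h_n$.

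You should delete the random affine relabelling (take $L=I$, $c=0$) and the unused ancilla superposition. Only $h$ needs to be random. Keeping $L$ would also force exact uniform sampling from $\mathrm{GL}_n(\F_2)$, whose size is not a power of two; that needs its own rejection step and would perturb the exact abort probability $d^{-J}$.
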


\begin{proof}
Sample $J$ independent uniform $\F_d$ elements and take the first
nonzero one as $h$, and abort if all are zero, which happens with probability $d^{-J}$.
Since $h\neq0$, it extends to a binary basis $h,h_2,\ldots,h_n$ of
$\F_d$. Every $x\in\F_d$ is then uniquely of the form $x=bh+y_z$ with
$b\in\F_2$, $z=(z_2,\ldots,z_n)\in\F_2^{n-1}$, and
$y_z=\sum_{i=2}^n z_ih_i$. Gaussian elimination finds the coordinate
map $bh+y_z\mapsto(b,z)$, which is invertible and $\F_2$-linear on
$n$-bit strings and is therefore implemented by $O(n^2)$ CNOT gates.
After this change of coordinates,
measure the last $n-1$ qubits in the computational basis. Each outcome
$z$ has probability $2/d$, and the remaining qubit is
\[
 \frac{(-1)^{f(y_z)}\ket0+(-1)^{f(y_z+h)}\ket1}{\sqrt2}.
\]
Measuring it in the $X$ basis alway gives $f(y_z)+f(y_z+h)$.
Return $\{y_z,y_z+h\}$ and $f(y_z)+f(y_z+h)$. Every distinct pair
$\{x,y\}$ has exactly one direction $h=x+y$ and one coset for
that direction. Its probability is therefore
$(d-1)^{-1}(2/d)=\binom d2^{-1}$. All preprocessing takes
polynomial time and is performed before using the input copy.
\end{proof}

For the analysis, write $p_a(X)=\sum_{j=0}^{L-1}a_jX^j$ and
$f_a(x)=\mathrm{tr}_{\F_d/\F_2}(p_a(x))$, for $L=2t$. Different seeds can represent the same trace
function, and a constant difference changes the state only by a global
sign. All other differences are separated by the following estimate.

\begin{lemma}[Separation of the phase family]
\label{lem:separation}
Let $d=2^n\geq16$ and $2\leq L\leq d^{1/4}$. If
$a,a'\in\F_d^L$ and $g=f_a+f_{a'}$ is nonconstant, then
a uniform unordered distinct pair satisfies
\begin{equation*}
 \Pr[f_a(x)+f_a(y)=f_{a'}(x)+f_{a'}(y)]\leq\frac58.
\end{equation*}
\end{lemma}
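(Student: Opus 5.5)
The event in the statement is that $g(x)+g(y)=0$ for $g=f_a+f_{a'}$. Since the trace is additive, $g=f_{a-a'}=\mathrm{tr}_{\F_d/\F_2}(p_c(x))$ with $c=a-a'\in\F_d^L$ and $\deg p_c\leq L-1$. So I will bound the probability of $g(x)=g(y)$ over a uniform unordered distinct pair. Write $S=\sum_{x\in\F_d}(-1)^{g(x)}$. Counting ordered pairs including the diagonal gives $\#\{(x,y):g(x)=g(y)\}=(d^2+S^2)/2$. Removing the $d$ diagonal pairs and dividing by $d(d-1)$ gives
\[
 \Pr[g(x)=g(y)]=\frac{d^2+S^2-2d}{2d(d-1)}=\frac12+\frac{S^2-d}{2d(d-1)}.
\]
It therefore suffices to show $S^2\leq d^2/4+O(d)$, more precisely $S^2-d\leq \tfrac14 d(d-1)$.

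Step two bounds the character sum $S=\sum_x\psi(p_c(x))$ with $\psi(z)=(-1)^{\mathrm{tr}(z)}$, an additive character of $\F_d$. First I reduce $p_c$. The identity $\mathrm{tr}(z^2)=\mathrm{tr}(z)$ lets me replace each monomial $c_jx^j$ with $j$ even by $c_j^{1/2}x^{j/2}$, without changing $g$ and without increasing the degree. Iterating, I obtain a polynomial $\tilde p$ with $\deg\tilde p\leq L-1$ and with only odd-degree monomials besides the constant term. If $\tilde p$ is constant, then $g$ is constant, which is excluded. Otherwise $\tilde p$ has odd degree $r\geq1$, coprime to the characteristic. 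Weil's bound then gives $|S|\leq (r-1)\sqrt d\leq (L-2)\sqrt d$. I expect this reduction, and checking that Weil applies (degree coprime to 2, $\tilde p$ not of the form $h^2+h+\text{const}$), to be the main point needing care. The odd-degree normal form is exactly what rules out the Artin--Schreier degeneracy. If $r=1$, the sum is exactly $0$.

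Step three combines the bounds. We have $S^2\leq L^2 d\leq d^{1/2}\cdot d=d^{3/2}$, using $L\leq d^{1/4}$. Hence $\Pr\leq \tfrac12+\tfrac{d^{3/2}}{2d(d-1)}=\tfrac12+\tfrac{\sqrt d}{2(d-1)}$. For $d\geq16$ this is at most $\tfrac12+\tfrac{4}{30}<\tfrac58$. The remaining checks are routine: additivity of the trace, and that the square-root map is a bijection of $\F_d$, which makes the reduction valid.
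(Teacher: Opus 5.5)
Your argument is the same as the paper's. You reduce to an odd-degree normal form via $\mathrm{tr}(z^2)=\mathrm{tr}(z)$, apply the Weil bound $|S|\le (L-2)\sqrt d$, and pass to pairs through the exact identity $\Pr[g(x)=g(y)]=\tfrac12+\tfrac{S^2-d}{2d(d-1)}$.

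The final numerical line, however, is false as written. You claim $\tfrac12+\tfrac{4}{30}<\tfrac58$, but $\tfrac12+\tfrac{4}{30}=\tfrac{19}{30}>\tfrac58$. So your displayed chain fails exactly at the boundary case $d=16$; it only works for $d\ge 32$. The slip comes from making two losses at once: you dropped the $-d$ in the numerator and also loosened $(L-2)^2$ to $L^2$.

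Either loss alone is harmless, so the repair is one line:
\begin{itemize}
\item Keep the $-d$. This gives $\tfrac12+\tfrac{d^{3/2}-d}{2d(d-1)}=\tfrac12+\tfrac{1}{2(\sqrt d+1)}\le\tfrac12+\tfrac1{10}<\tfrac58$. This is precisely the sufficient condition $S^2-d\le\tfrac14 d(d-1)$ you stated earlier, which $S^2\le d^{3/2}$ satisfies for $d\ge 9$.
\item Alternatively, keep $L-2$, as the paper does. With $\mu_g=S/d$ you get $|\mu_g|\le (L-2)/\sqrt d\le d^{-1/4}\le\tfrac12$. Then $\Pr\le\tfrac{1+\mu_g^2}{2}\le\tfrac58$, using that $\tfrac{d\mu_g^2-1}{d-1}\le\mu_g^2$ whenever $\mu_g^2\le 1$.
\end{itemize}

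With either fix the proof is complete.
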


\begin{proof}
We use the additive-character Weil bound in characteristic two:
if $P\in\F_d[X]$ has odd degree $e$, then\footnote{This is the case
$p=2$ of Theorem~5.38 in~\cite{lidl1997finite}, applied to the
character $z\mapsto(-1)^{\mathrm{tr}(z)}$, which is nontrivial
because the trace is surjective.}
$|\sum_{x\in\F_d}(-1)^{\mathrm{tr}(P(x))}|\leq(e-1)\sqrt d$.
To meet the degree hypothesis, we first remove the even exponents
from $p_a+p_{a'}$ without changing its trace. Frobenius invariance
of the trace and $x^{2^n}=x$ give
\begin{equation*}
 \mathrm{tr}(cx^{2m})
 =\mathrm{tr}\!\left((cx^{2m})^{2^{n-1}}\right)
 =\mathrm{tr}\!\left(c^{2^{n-1}}x^m\right).
\end{equation*}
Repeatedly apply this identity to each even nonconstant exponent and
collect equal powers. The resulting polynomial $P$ has degree less
than $L$, only odd nonconstant exponents, and
$g(x)=\mathrm{tr}(P(x))$. Since $g$ is nonconstant, $P$ has
positive odd degree $e$.

The Weil bound applied to $P$ thus gives, for
$\mu_g=d^{-1}\sum_x(-1)^{g(x)}$,
\[
 |\mu_g|\leq(e-1)/\sqrt d
 \leq(L-2)/\sqrt d\leq d^{-1/4}\leq\tfrac12.
\]
It remains to pass from single values of $g$ to pairs. Since the sums
are in $\F_2$, the event
$f_a(x)+f_a(y)=f_{a'}(x)+f_{a'}(y)$ is exactly the event
$g(x)=g(y)$. Set $s_x=(-1)^{g(x)}\in\{\pm1\}$, so that
$\mathbf1_{g(x)=g(y)}=(1+s_xs_y)/2$. We sum over all ordered pairs
gives $\sum_{x,y}s_xs_y=(\sum_xs_x)^2=d^2\mu_g^2$, and each of the
$d$ diagonal terms is $s_x^2=1$, so
$\sum_{x\ne y}s_xs_y=d^2\mu_g^2-d$. Averaging over the $d(d-1)$
ordered pairs with $x\ne y$, which induces the uniform distribution on
unordered distinct pairs, yields
\[
 \Pr_{x\ne y}[g(x)=g(y)]
 =\frac12\left(1+\frac{d\mu_g^2-1}{d-1}\right)
 \leq\frac{1+\mu_g^2}{2}\leq\frac58,
\]
where the first inequality is equivalent to $\mu_g^2\leq1$ and the
second uses $|\mu_g|\leq\frac12$.
\end{proof}

\begin{proposition}[Learning up to global phase]
\label{prop:learning}
Let $d=2^n\geq16$ and $2\leq L\leq d^{1/4}$. For every
$0<\delta<1$, a uniform quantum algorithm uses at most
\begin{equation*}
 K=O(nL+\log(1/\delta))
\end{equation*}
copies of $\ket{\psi_a}$ and returns $\widehat a\in\F_d^L$
with $\ket{\psi_{\widehat a}}=\pm\ket{\psi_a}$ with probability
at least $1-\delta$. Its worst-case running time is
$\poly(n,L,\log(1/\delta))$. On arbitrary inputs it either reports
failure or returns a seed from this same family.
\end{proposition}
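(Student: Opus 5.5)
The plan has three parts. We collect $K$ derivative samples from \cref{lem:derivative-sample}, each of which is a linear equation over $\F_2$ in the bits of the seed. We then solve the resulting linear system by Gaussian elimination. Finally, we argue via \cref{lem:separation} and a union bound that every solution represents the same state up to sign.

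\textbf{Sampling and linear constraints.} Fix $J=\lceil \log(2K/\delta)/n\rceil+1$, so that all $K$ calls to \cref{lem:derivative-sample} avoid abortion except with probability at most $\delta/2$. Each successful call returns a uniform distinct pair $\{x_k,y_k\}$ together with the bit $c_k=f_a(x_k)+f_a(y_k)$. The map $a\mapsto f_a(x)+f_a(y)=\mathrm{tr}(p_a(x)+p_a(y))$ is $\F_2$-linear in the $nL$ bits of $a$, after expanding each $a_j$ in the fixed binary basis. Its coefficient vector is computable in $\poly(n,L)$ time: we compute $\mathrm{tr}(\beta_i(x^j+y^j))$ for each basis element $\beta_i$. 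This yields a system $\mathcal{S}$ of $K$ equations in $nL$ unknowns. The true seed $a$ is always a solution. We run Gaussian elimination and output any solution $\widehat a$; if the system is inconsistent, we report failure. The latter can only occur on malformed inputs, and any output is by construction a seed in $\F_d^L$. All of this takes $\poly(n,L,K)$ time.

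\textbf{Correctness via a union bound.} We say a seed $a'$ is \emph{bad} if $f_a+f_{a'}$ is nonconstant. By \cref{lem:separation}, each independent sample is satisfied by a fixed bad $a'$ with probability at most $5/8$. Hence a fixed bad $a'$ survives all $K$ equations with probability at most $(5/8)^K$. Bad seeds form at most $d^L=2^{nL}$ candidates, so
\[
\Pr[\exists\text{ bad } a'\text{ consistent with } \mathcal S]\le 2^{nL}(5/8)^K\le\delta/2,
\]
provided $K\ge (nL+\log_2(2/\delta))/\log_2(8/5)=O(nL+\log(1/\delta))$. On the complementary event, every solution $\widehat a$ satisfies $f_{\widehat a}=f_a+c$ for a constant $c$, and therefore $\ket{\psi_{\widehat a}}=\pm\ket{\psi_a}$. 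Combining the two failure events gives overall success probability at least $1-\delta$.

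\textbf{Main obstacle.} The delicate point is the union bound. It must run over all $d^L$ seeds rather than over distinct functions, and this is affordable only because the per-sample separation $5/8$ from \cref{lem:separation} is a constant independent of $n$. That is exactly where the Weil bound and the hypothesis $L\le d^{1/4}$ enter. A minor second point is the accounting for the abort probability, which is handled by the choice of $J$ above.
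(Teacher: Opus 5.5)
Your proposal is correct and takes essentially the same route as the paper. You encode derivative samples as $\F_2$-linear equations in the $nL$ seed bits, solve by Gaussian elimination, charge $\delta/2$ to aborts, and union-bound over $2^{nL}$ candidates using the $5/8$ separation of \cref{lem:separation}. Your union bound over bad seeds $a'$ is the same as the paper's bound over kernel vectors $c=a+a'$ of $W$, since $f_a+f_{a'}=f_{a+a'}$ by linearity; note only that $\log$ in your choice of $J$ must be read as $\log_2$.
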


\begin{proof}
Fix the binary basis $e_1,\ldots,e_n$ of $\F_d$ and write
$a_j=\sum_{\ell=1}^n a_{j\ell}e_\ell$. Apply
\cref{lem:derivative-sample} independently to the copies, using
$J=\lceil\log(2K/\delta)/\log d\rceil$ trials per direction and
reporting failure on any abort. The probability of an abort is at most
$Kd^{-J}\leq\delta/2$. Conditional on not aborting, the resulting
pairs remain independent and uniform.

Each sample $(x_i,y_i,b_i)$ gives the binary linear equation
\begin{equation*}
 \sum_{j=0}^{L-1}\sum_{\ell=1}^n
 a_{j\ell}\,\mathrm{tr}\!\left(e_\ell(x_i^j+y_i^j)\right)=b_i.
\end{equation*}
Form their matrix $W\in\F_2^{K\times nL}$ and solve
$W\widehat a=b$ by Gaussian elimination, identifying seeds with
their binary coordinate vectors. Reject an inconsistent system and
otherwise return any solution. Field arithmetic computes every entry
of $W$ in polynomial time.

On the promised input the solution set is $a+\ker W$. Every row
annihilates the binary subspace
$\{c\in\F_d^L:f_c\text{ is constant}\}$, whereas
\cref{lem:separation} implies that any fixed $c\in\F_d^L$ with
$f_c$ nonconstant is annihilated by all $K$ independent rows with
probability at most $(5/8)^K$. Hence
\begin{equation*}
 \Pr[\ker W\ne\{c\in\F_d^L:f_c\text{ is constant}\}]
 \leq\sum_{\substack{c\in\F_d^L\\f_c\text{ nonconstant}}}\Pr[Wc=0]
 \leq2^{nL}(5/8)^K\leq\delta/2.
\end{equation*}
Otherwise every solution is $\widehat a=a+c$ for some
$c\in\F_d^L$ with $f_c$ constant, and
$\ket{\psi_{\widehat a}}=(-1)^{f_c(0)}\ket{\psi_a}$.
Adding the abort probability gives total
failure at most $\delta$. 
\end{proof}

\subsection{Proof of~\cref{thm:phase}}
We are now in a position to replace the $B_{\infty}$ ensemble by an approximate $t$-design in the fixed-vector subpsace. This will conclude the proof of Theorem~\ref{thm:intro-phase}.
We use the machinery from~\cite{anshu2026depth}. 
For a compact group $G$ and a subset $S\subseteq G$, the
\emph{Kazhdan constant} is
\[
    \mathcal{K}(G;S)
    =\inf_{\rho}\inf_{\|v\|=1}\sup_{g\in S}\|\rho(g)v-v\|,
\]
where $\rho$ ranges over continuous unitary representations of $G$
with no nonzero invariant vectors, and $v$ lies in the representation
space. Generating sets are understood to generate $G$ topologically.
We will use multiple standard results about Kazhdan constants (compare~\cite{kassabov2007symmetric}):
\begin{lemma}[Bounding the largest eigenvalue with the Kazhdan constant {\cite[Lemma 2.17]{chen2024incompressibilityspectralgapsrandom}}]\label{lem:kazhdan_gaps_relation}
    Let $G$ be a compact group (finite or Lie) topologically generated by a finite subset $S$ which is closed under taking inverses. Then for any finite-dimensional continuous unitary representation $\rho$ of $G$, 
    \begin{equation}\label{eq:kazhdan_gap_max_eval}
        \lambda_{\max}\left(\frac{1}{\abs{S}}\sum_{g\in S} \rho(g) - \E_{\nu\sim \mu(G)} \rho(\nu)\right) \leq 1 - \frac{\mathcal{K}(G;S)^2}{2\abs{S}}, 
    \end{equation}
    where $\lambda_{\max}(\cdot)$ denotes the largest eigenvalue. If $S$ contains the identity, then $\cref{eq:kazhdan_gap_max_eval}$ holds with $\lambda_{\max}(\cdot)$ replaced with the operator norm $\norm{\cdot}$. 
\end{lemma}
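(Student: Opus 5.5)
The plan is to work directly with the averaging operator $A:=\frac{1}{|S|}\sum_{g\in S}\rho(g)$ and the projector $P:=\E_{\nu\sim\mu(G)}\rho(\nu)$ onto the $G$-invariant vectors of $\rho$. Since $S$ is inverse-closed and $\rho$ is unitary, $A$ is Hermitian. Invariance of the Haar measure gives $\rho(g)P=P\rho(g)=P$, so $A$ commutes with $P$ and $AP=P$. Hence $A-P$ acts as $0$ on $\operatorname{im}P$ and as $A$ on the orthogonal complement $V':=(\operatorname{im}P)^\perp$. Moreover $V'$ is $\rho(G)$-invariant, and by construction it contains no nonzero $G$-invariant vector. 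The task therefore reduces to bounding the Rayleigh quotient $\langle v,Av\rangle$ for unit vectors $v\in V'$. The contribution from $\operatorname{im}P$, where $A-P$ vanishes, is $0$ and is at most the claimed bound.

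The key step is the Kazhdan inequality. The subrepresentation $\rho|_{V'}$ is a continuous unitary representation of $G$ without nonzero invariant vectors. If $V'\neq 0$, the definition of $\mathcal K(G;S)$ gives, for each unit $v\in V'$, some $g\in S$ with $\|\rho(g)v-v\|\ge\mathcal K$. Using unitarity, $\|\rho(g)v-v\|^2=2-2\,\mathrm{Re}\langle v,\rho(g)v\rangle$. Since $A$ is Hermitian,
\[
\langle v,Av\rangle=\frac1{|S|}\sum_{h\in S}\mathrm{Re}\langle v,\rho(h)v\rangle
=1-\frac1{2|S|}\sum_{h\in S}\|\rho(h)v-v\|^2 .
\]
Dropping every term in the sum except the one for $g$ gives $\langle v,Av\rangle\le 1-\mathcal K^2/(2|S|)$. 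Taking the maximum over unit $v$ in both blocks yields the bound on $\lambda_{\max}$. The one subtle point is that the infimum in the definition is over representations with no invariant vectors, so it really must be applied to the restriction $\rho|_{V'}$ and not to $\rho$ itself. Finite-dimensionality ensures that $V'$ is a genuine closed subrepresentation.

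For the operator-norm version we also need a lower bound on the eigenvalues. Suppose $e\in S$, and write $A=\frac{1}{|S|}I+\frac{|S|-1}{|S|}A'$, where $A'$ is the average over the remaining elements, counted with multiplicity. Then $A'\succeq -I$, which gives $A\succeq -(1-2/|S|)I$. It remains to check that $1-2/|S|\le 1-\mathcal K^2/(2|S|)$, which is equivalent to $\mathcal K^2\le 4$. This always holds because $\|\rho(g)v-v\|\le 2$. So every eigenvalue of $A-P$ has absolute value at most $1-\mathcal K^2/(2|S|)$, which is the stated operator-norm bound.

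I expect no step to be a serious obstacle. The only care needed is with the bookkeeping: the restriction to $V'$, and the case where $S$ is a multiset or where the identity term pins down the negative spectrum.
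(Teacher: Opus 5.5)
The paper gives no proof of this lemma; it quotes it from Chen et al.\ (Lemma~2.17). Your argument is the standard proof of that result, and it is correct. You split off $\operatorname{im}P$ and use $\langle v,(I-A)v\rangle=\frac{1}{2|S|}\sum_{h\in S}\|\rho(h)v-v\|^2$ on the invariant-free complement $V'$. You then keep the single term certified by the Kazhdan constant, rightly applied to the subrepresentation $\rho|_{V'}$. For the norm version, $e\in S$ gives $A\succeq-(1-2/|S|)I$.

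One step is asserted without justification. On $\operatorname{im}P$ the operator $A-P$ has eigenvalue $0$, and the claim $0\le 1-\mathcal K(G;S)^2/(2|S|)$ requires $\mathcal K(G;S)^2\le 2|S|$. For nontrivial $G$ we have $\mathcal K(G;S)\le 2$, so this holds whenever $|S|\ge 2$. It fails for $|S|=1$. Take $G=\mathbb Z_2=\{e,g\}$ and $S=\{g\}$. Then $\mathcal K(G;S)=2$, so the right-hand side is $-1$, while the trivial representation gives left-hand side $0$. So the lemma as stated is actually false in this degenerate case.

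You should therefore record the check $\mathcal K^2\le 4\le 2|S|$, i.e.\ assume $|S|\ge2$, just as you did in the norm part. There it is automatic, since $e\in S$ and nontrivial $G$ force $|S|\ge 2$. None of this affects the paper's application, where $\hat S$ has far more than two elements.
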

Moreover, we will use the following reduction lemma:
\begin{lemma}[\cite{kassabov2007symmetric}]\label{lemma:kazhdan-by-subgroups}
    Let $H_i$ be subgroups of $G$ and $S_i\subseteq H_i$.
    Then, we have
    \begin{equation}
        \mathcal{K}\left(G;\bigcup_i S_i\right)\geq \frac{1}{2} \mathcal{K}\left(G;\bigcup_i H_i\right) \inf_i \mathcal{K}\left(H_i;S_i\right).
    \end{equation}
\end{lemma}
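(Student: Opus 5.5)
The plan is a direct argument from the definition of the Kazhdan constant. Fix a continuous unitary representation $\rho$ of $G$ with no nonzero $G$-invariant vectors and a unit vector $v$. Write $\varepsilon:=\mathcal K(G;\bigcup_i H_i)$ and $\kappa:=\inf_i\mathcal K(H_i;S_i)$. The goal is to exhibit some $s\in\bigcup_i S_i$ with $\|\rho(s)v-v\|\geq\varepsilon\kappa/2$, up to an arbitrarily small slack. Taking the infimum over $v$ and $\rho$ then proves the inequality.

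\textbf{Step 1: pick a subgroup that moves $v$.} By definition of $\varepsilon$, for every $\delta>0$ there is an index $i$ and some $h\in H_i$ with $\|\rho(h)v-v\|\geq\varepsilon-\delta$. Fix this $i$.

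\textbf{Step 2: split $v$ along $H_i$.} Let $V_0$ be the closed subspace of $H_i$-invariant vectors of $\rho$, and let $V_1=V_0^\perp$. Since $\rho|_{H_i}$ is unitary and $V_0$ is $H_i$-stable, $V_1$ is also $H_i$-stable. Hence $\rho|_{H_i}$ restricted to $V_1$ is a continuous unitary representation of $H_i$ with no nonzero invariant vectors. Write $v=v_0+v_1$ with $v_0\in V_0$ and $v_1\in V_1$. For every $g\in H_i$ we have $\rho(g)v-v=\rho(g)v_1-v_1$, so $\|\rho(g)v-v\|\leq2\|v_1\|$. Applying this to $g=h$ gives $\|v_1\|\geq(\varepsilon-\delta)/2$. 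In particular $v_1\neq0$.

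\textbf{Step 3: apply the Kazhdan constant of $(H_i,S_i)$.} Apply the definition of $\mathcal K(H_i;S_i)$ to the representation $V_1$ and the unit vector $v_1/\|v_1\|$. This yields some $s\in S_i$ with
\[
\|\rho(s)v_1-v_1\|\geq(\mathcal K(H_i;S_i)-\delta)\|v_1\|.
\]
Since $s\in H_i$ fixes $v_0$, we have $\|\rho(s)v-v\|=\|\rho(s)v_1-v_1\|$. Therefore
\[
\sup_{s\in\bigcup_jS_j}\|\rho(s)v-v\|\geq(\kappa-\delta)(\varepsilon-\delta)/2.
\]
Letting $\delta\to0$ and taking infima over $v$ and $\rho$ gives the claim.

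The only point needing care is Step 2. One must check that the orthocomplement of the $H_i$-invariant vectors really is a representation of $H_i$ without invariant vectors, and that it is continuous, so that the definition of $\mathcal K(H_i;S_i)$ applies to it. This holds because $V_1$ is a closed $H_i$-stable subspace of a unitary representation, and the restriction of a continuous representation to a subgroup remains continuous. The rest is bookkeeping with suprema, which are handled by the $\delta$-slack.
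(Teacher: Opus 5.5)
Your proof is correct. It is the standard argument for this reduction: choose an $H_i$ that moves $v$ by nearly $\mathcal K(G;\bigcup_i H_i)$, use $\|\rho(h)v-v\|\leq 2\|v_1\|$ to bound the component $v_1$ orthogonal to the $H_i$-invariant vectors from below, then apply $\mathcal K(H_i;S_i)$ to the invariant-vector-free subrepresentation on $V_1$. The paper does not prove the lemma itself and only cites Kassabov, whose proof is this same argument.
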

Finally, we use the main result from~\cite{anshu2026depth} (see also~\cite{baer2026random}):
\begin{theorem}\label{thm:expander}
There is a set of unitaries $S=\{V_1,...V_{25}\}$ on $n$ qubits such that each $V_j$ is a constant depth circuit and $S$ has a uniformly lower bounded Kazhdan constant in $\mathrm{SU}(2^n)$. 
\end{theorem}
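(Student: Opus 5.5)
We will not reprove this from scratch; the plan is to reconstruct the argument of~\cite{anshu2026depth} from the two Kazhdan-constant tools just stated. The strategy is a bounded-depth \emph{recursion}. Express $\mathrm{SU}(2^n)$ as topologically generated by a constant number $r$ of large subgroups $H_1,\dots,H_r$, each of which is (up to a fixed Clifford/permutation conjugation implementable in constant depth) a tensor product of many commuting copies of a special unitary group on a constant or geometrically smaller number of qubits. Then apply \cref{lemma:kazhdan-by-subgroups}: it suffices to show that $\mathcal K(\mathrm{SU}(2^n);\bigcup_i H_i)$ is bounded below by a constant independent of $n$, and that each $H_i$ has a constant-size generating set $S_i$ of constant-depth circuits with $\mathcal K(H_i;S_i)$ bounded below uniformly.

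\textbf{First step: the subgroup step.} We bound $\mathcal K(\mathrm{SU}(2^n);\bigcup_i H_i)$ via the angle (Ershov--Jaikin-Zapirain / Dymara--Januszkiewicz) criterion. In any representation without invariant vectors, let $V_i$ be the $H_i$-invariant subspaces. If the pairwise angles between the $V_i$ are uniformly bounded away from zero, then a vector almost fixed by all $H_i$ is almost fixed by the group, giving a constant Kazhdan bound. Concretely, we would take two or three overlapping ``brickwork'' families: for instance, $H_1$ acting blockwise on qubit pairs $(2k,2k+1)$ and $H_2$ on $(2k+1,2k+2)$, with the block size chosen so that the Haar projectors $P_{H_i}$ satisfy $\|P_{H_1}P_{H_2}-P_G\|\le 1-c$ on every irreducible representation. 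This is exactly a statement that the $t$-th moment operators of a two-layer brickwork approximate the global Haar projector with a gap independent of both $n$ and $t$. Equivalently, it is a uniform-in-$t$ spectral gap for brickwork random circuits, known from~\cite{Brand_o_2016,DBLP:journals/quantum/Haferkamp22} only with $t$-dependence. This step is therefore where the real work lies.

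\textbf{Second step: generators for each $H_i$.} For the local generators we use that $H_i\cong \mathrm{SU}(2^k)^{\times n/k}$ acts in parallel. A generating set of $H_i$ can be built from a finite Kazhdan set $S^{(k)}$ of $\mathrm{SU}(2^k)$, applied \emph{simultaneously} on all blocks, together with a constant number of ``independent-phase'' variants (e.g.\ applying the gate on even versus odd blocks, or conjugated by fixed single-qubit Paulis). The purpose of these variants is that the diagonal copy alone does not generate the product group, while the variants decouple the factors. Since every element of $S_i$ is a layer of disjoint $k$-qubit gates, it has constant depth. Taking the union over $i$ and counting gives a fixed number of generators, which we would then compress to $25$ by composing commuting layers. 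The relation from \cref{lem:kazhdan_gaps_relation} is only needed afterwards, to turn the Kazhdan bound into the spectral gap used later in the paper.

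\textbf{Main obstacle.} We expect two difficulties, both about uniformity in $n$ and in the representation (i.e.\ in $t$). The first is the angle bound between the $V_i$ over \emph{all} irreducible representations of $\mathrm{SU}(2^n)$. The second is that a diagonal generating set keeps its Kazhdan constant on the product group independently of the number $n/k$ of factors, since naively the constant degrades like $\sqrt{k/n}$. For the second, we would first try the ``product with staggered copies'' trick: a representation of $\prod_j \mathrm{SU}(2^k)$ without invariant vectors has some factor acting nontrivially, and a generator acting on that factor with identity on the others (obtainable via a constant number of conjugations) moves $v$ by an amount controlled by $\mathcal K(\mathrm{SU}(2^k);S^{(k)})$. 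If that fails, we would fall back on the recursive self-similar construction of~\cite{anshu2026depth}, where $k$ grows geometrically and constant depth is preserved by a fixed-size gadget at each level.
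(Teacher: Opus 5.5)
The paper does not prove this statement; it imports it from \cite{anshu2026depth}. So the question is whether your reconstruction works on its own, and it does not. The step that carries all the content is a lower bound on $\mathcal{K}(\mathrm{SU}(2^n);H_1\cup H_2)$ uniform in $n$ \emph{and} over all irreducible representations, and you leave it open. As you note yourself, it amounts to an $n$- and $t$-independent gap for two-layer brickwork circuits, which the design results you cite do not provide. (With three or more families, pairwise angles merely bounded away from zero would not suffice either; Ershov--Jaikin-Zapirain type criteria need near-orthogonality.) The final ``compression'' to $25$ generators by multiplying commuting layers is also invalid. Replacing $\{a,b\}$ by $\{ab\}$ changes the generating set and can destroy generation altogether, so Kazhdan constants are not inherited.

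More seriously, Step~2 cannot work with constant block size, whatever happens in Step~1. Let $H\cong K^m$ with $K=\mathrm{SU}(2^k)$ fixed and $m=n/k$, and let $S\subseteq H$ have size $s$ independent of $m$. The points $p_j=(g_j)_{g\in S}\in K^s$, $j=1,\dots,m$, lie in a fixed compact space. Hence for every $\epsilon>0$ and all large $m$ there are $j\neq j'$ with $\max_{g\in S}\|g_j-g_{j'}\|_\infty\le\epsilon$. The representation $\rho(h)=h_j\otimes\overline{h_{j'}}$ of $H$ has no nonzero invariant vectors. Yet the maximally entangled unit vector $v$, which is fixed by every $A\otimes\overline{A}$, satisfies $\|\rho(g)v-v\|\le\|g_j-g_{j'}\|_\infty\le\epsilon$ for all $g\in S$. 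Therefore $\mathcal{K}(H;S)\to0$ for \emph{every} bounded $S\subseteq H$, and \cref{lemma:kazhdan-by-subgroups} gives nothing.

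Your ``staggered copies'' repair does not escape this, for two reasons. First, reaching each factor separately needs either $m$ generators, or conjugation by qubit shifts, which lie outside $H_i$ and so violate $S_i\subseteq H_i$. Second, the inference ``some factor acts nontrivially, hence $v$ is moved by a constant'' is false. In $\bigoplus_j\rho_j$, with $\rho_j$ nontrivial only on factor $j$, the unit vector $m^{-1/2}\sum_j v_j$ is moved by at most $2m^{-1/2}$ by any single-factor generator. That is exactly the $\sqrt{k/n}$ loss you wanted to avoid.

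So the factors of the $H_i$ must grow with $n$, which leads to a recursion over scales. Iterating \cref{lemma:kazhdan-by-subgroups} naively compounds its multiplicative loss into a bound that decays with $n$. Avoiding that loss, with constant-depth generators and a fixed generator count, is precisely the content of \cite{anshu2026depth}, which your fallback only names.
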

The aim of this section will be to combine these ingredients with the following decomposition of $\mathrm{SU}\bigl((\ket{+}^{\otimes n})^\perp\bigr)$ that we prove in Appendix~\ref{appendix:decomposition}.
\begin{lemma}\label{lemma:decomp}
Let $n\geq 1$, for $i\in\{1,\ldots,n\}$ and $V\in\mathrm{SU}(2^{n-1})$, write
\begin{equation}
  \mathsf{C}_i(V)
  =|+\rangle\langle +|_i\otimes I
   +|-\rangle\langle -|_i\otimes V,
\end{equation}
where $V$ acts on all qubits other than $i$.
Every $U\in \mathrm{SU}\bigl((\ket{+}^{\otimes n})^\perp\bigr)$ admits a factorization into $2n-1$ such gates,
\begin{equation}
  U=\mathsf{C}_1(U_1)\cdots\mathsf{C}_{n-1}(U_{n-1})
    \mathsf{C}_n(U_n)
    \mathsf{C}_{n-1}(U_{n+1})\cdots\mathsf{C}_1(U_{2n-1}).
\end{equation}
\end{lemma}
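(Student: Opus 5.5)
The plan is to pass to the computational basis by conjugating with $H^{\otimes n}$. Under this change of basis $\ket{+}^{\otimes n}\mapsto\ket{0^n}$, and $\mathsf{C}_i(V)$ becomes $\ket0\!\bra0_i\otimes I+\ket1\!\bra1_i\otimes V'$ with $V'=H^{\otimes n-1}VH^{\otimes n-1}$. The statement then reads: every $U\in\mathrm{SU}(\ket{0^n}^\perp)$, extended by the identity on $\ket{0^n}$, is a palindromic product of $2n-1$ gates. Each gate is a unitary on the $n-1$ other qubits, applied only when the control qubit is $\ket1$. Every such gate fixes $\ket{0^n}$ automatically.

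I would argue by induction on the position in the palindrome, using the cosine--sine (CS) decomposition. Split $\ket{0^n}^\perp=W_1\oplus W_0'$, where $W_1=\{x_1=1\}$ has dimension $2^{n-1}$ and $W_0'=\{x_1=0,\,x\ne0^n\}$ has dimension $2^{n-1}-1$. The CS decomposition writes $U=(A\oplus B)\,\Theta\,(A'\oplus B')$ with $A,A'\in\mathrm U(W_1)$ and $B,B'\in\mathrm U(W_0')$. Here $\Theta$ is a direct sum of $2^{n-1}-1$ planar rotations, each pairing a basis vector of $W_1$ with one of $W_0'$, plus one leftover phase on $W_1$.

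The outer factors $A\oplus I$ and $A'\oplus I$ are exactly $\mathsf{C}_1(U_1)$ and $\mathsf{C}_1(U_{2n-1})$. The remaining operator $(I\oplus B)\,\Theta\,(I\oplus B')$ must be absorbed by the inner palindrome $\mathsf{C}_2\cdots\mathsf{C}_n\cdots\mathsf{C}_2$. The freedom in choosing the CS bases lets me take the pairing in $\Theta$ to be $\ket{1,y}\leftrightarrow\ket{0,y}$ for $y\ne0^{n-1}$. With that choice, $\Theta$ is a rotation on qubit $1$ controlled by the remaining register $y$. It therefore commutes with the structure of gates controlled on qubits $2,\dots,n$, since those gates may act on qubit $1$.

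The inductive claim I would aim for is the following. The palindrome $\mathsf{C}_k\cdots\mathsf{C}_n\cdots\mathsf{C}_k$ realizes every unitary that fixes all basis states with $x_k=\cdots=x_n=0$ and has suitable determinant. The base case $k=n$ is a single gate $\mathsf{C}_n(U_n)$, which realizes every unitary on $\{x_n=1\}$.

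For the step $k-1\to k$, I repeat the CS decomposition with respect to qubit $k-1$ on the subspace where not all of $x_{k-1},\dots,x_n$ vanish. The outer block lives on $\{x_{k-1}=1\}$ and is realized by $\mathsf{C}_{k-1}$. The inner block $B,B'$ and the controlled rotations live on $\{x_{k-1}=0,\ (x_k,\dots,x_n)\ne0\}$, possibly together with qubit $k-1$ mixed in. Those operators fix every state with $x_k=\cdots=x_n=0$, so they fall under the inductive hypothesis.

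Finally, the global phases and the determinant-one condition have to be tracked. Any phase left over in $\Theta$ or in $A$ can be moved into the $V$'s, because each $V\in\mathrm{SU}(2^{n-1})$ up to a phase that can be fixed at the end. The hypothesis $\det U=1$ on $\ket{0^n}^\perp$ guarantees that the final residual phase is trivial.

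The main obstacle is the step asserting that the middle operator $(I\oplus B)\Theta(I\oplus B')$ lies in the class realized by the inner palindrome. The difficulty is that $B$ acts on $W_0'$, which contains $x_1=0$ but not the vector $0^n$. The rotations in $\Theta$ couple qubit $1$ with the rest, so they must be arranged to be expressible by gates controlled on qubits $2,\dots,n$. I would first try to prove a slightly stronger inductive statement: the inner palindrome realizes all of $\mathrm{SU}$ of the span of $\{x:(x_k,\dots,x_n)\ne0\}$, on which qubits $1,\dots,k-1$ are unrestricted. If that fails, I would verify surjectivity by a dimension count on the Lie algebra, showing that the differential of the product map at a generic point spans $\mathfrak{su}$ of the complement, and then combine this with compactness and connectedness.
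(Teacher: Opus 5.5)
Your proposal is correct and follows essentially the same route as the paper. Both arguments use Hadamard conjugation, a recursive cosine--sine decomposition whose unpaired identity block sits on $\{x_k=1,\ x_{k+1}=\cdots=x_n=0\}$, and downward induction on special unitaries fixing $\{x_k=\cdots=x_n=0\}$ pointwise; your ``stronger inductive statement'' is exactly the paper's invariant, so the Lie-algebra fallback is unnecessary. The one point to sharpen is the determinant bookkeeping: as the paper does, fix $\det$ of each outer factor with a phase on a single vector in the paired part of $\{x_k=1\}$ and push its inverse into the middle factor, because a uniform phase on all of $\{x_k=1\}$ would in general break the pointwise-fixed subspace needed at the next level.
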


\paragraph{Putting everything together.} We will now complete the proof of Theorem~\ref{thm:phase}.

\begin{proof}[Proof of Theorem~\ref{thm:phase}]
Define $H_i:=\mathsf{C}_i(\mathrm{SU}(2^{n-1}))$.
 As $\mathcal{K}(G;G)\geq \sqrt{2}$~\cite{kassabov2007symmetric} for any compact group $G$, we can apply the short-product lemma~\cite{chen2024incompressibilityspectralgapsrandom} to obtain that $\mathcal{K}(\mathrm{SU}\bigl((\ket{+}^{\otimes n})^\perp\bigr);\bigcup_{j=1}^{2n-1}H_j)=\Omega(n^{-1})$ (where we identify $n+j$ with $n-j$). 
 We can then apply~\cref{lemma:kazhdan-by-subgroups} for the sets $S_j:=\mathsf{C}_j(S)$.
 As $\mathsf{C}_j(S)$ has a uniformly lower bounded Kazhdan constant in $\mathsf{C}_j(\mathrm{SU}(2^{n-1}))$ by Theorem~\ref{thm:expander}, the finite inverse-closed generating set
\[
    \hat{S}:=\{I\}\cup\bigcup_{j=1}^{n}(S_j\cup S_j^{-1})
\]
contains the identity and satisfies $\mathcal{K}(\mathrm{SU}\bigl((\ket{+}^{\otimes n})^\perp\bigr), \hat{S})=\Omega(n^{-1})$ with $|\hat{S}|=O(n)$.
 Finally, for the spectral gap of the uniform probability distribution on this generator set that yields $\Delta=\Omega(n^3)$ by~\cref{lem:kazhdan_gaps_relation}.
 Denote by $\nu_{\hat{S}}$ the uniform probability distribution over elements in $\hat{S}$.
 The gap can be easily amplified by taking $\nu_{\hat{S}}^{*r}$ to yield $\varepsilon$-approximate designs in depth $r=O(\frac{1}{\Delta}(2nt+\log(1/\varepsilon)))=O(n^3(2nt+\log(1/\varepsilon))$ (see e.g.~\cite{Brand_o_2016}).
 In particular, choosing $\varepsilon=e^{-n}$ we end up with polynomially deep circuits. 

 We can then plug $B_r:=\nu_{\hat{S}}^{*r}$ into the position of $B_{\infty}$ in $FB_{\infty}$. 
 As $FB_{\infty}$ is indistinguishable from Haar for polynomial $t$ by~\cref{lem:ideal-fb-design} and $\nu_{\hat{S}}^{*r}$ is indistinguishable from $B_{\infty}$ using $t$ copies via the above we obtain the desired construction.
 In particular, the ensemble is over circuits of polynomial depth for every fixed polynomial $t=\mathrm{poly}(n)$ it yields an approximate $t$-design. 
 However, because of~\cref{lemma:efficient-distinguisher} the ensemble can be efficiently distinguished from Haar. This means, the ensemble fails to be pseudorandom according to Definition \ref{def:PRU}. This completes the proof of~Theorem \ref{thm:phase}.
\end{proof}

\section{Pseudorandomness and Maximal Scrambling}
\label{sec:scrambling}

In this section, we explore how genuine pseudorandomness can emerge in simple quantum processes by drawing on an explicit connection to many-body physics.

Local random quantum circuits are among the simplest models of time-dependent quantum dynamics, where information initially stored in a small subsystem spreads globally through interactions with the rest of the system. This process, known as \emph{scrambling}, is closely related to \emph{thermalization}: as information spreads, it becomes increasingly difficult to recover from local observations~\cite{Hayden_Preskill_2007,Liu_2018}. For local random circuits, matching progressively higher moments of the Haar measure requires increasing \emph{circuit depth}~\cite{Brand_o_2016}, making the unitary design order a natural way to describe how randomness develops over time.

Liu, Lloyd, Zhu, and Zhu~\cite{Liu_2018}  showed that higher and higher moment matching (in the form of unitary $t$-designs) implies a saturation in the hierarchy of entanglement entropies. Eventually, this process reaches a limit: when $t= \Theta(n)$ all entanglement entropies are saturated, which is known as \emph{maximal scrambling} (see Figure~\ref{fig:scrambling-hierarchy}). At this point, the system appears effectively \emph{thermal}: initially localized information has fully dispersed and becomes inaccessible to local probes.
Maximal scrambling therefore suggests a physically motivated regime in which moment matching might be sufficient for genuine pseudorandomness. Once all these entanglement entropies have saturated, it seems plausible that an efficient observer should have little remaining structure to exploit. There is a closely related intuition behind the HMMR conjecture and the design of block ciphers: \emph{diffusion} spreads changes in a few input bits across many output bits, while sufficiently strong statistical independence limits the correlations exploited by differential and linear attacks~\cite{HMMR2005,noauthororeditor}. Connections between these attacks and scrambling have also been studied quantitatively~\cite{CHAMON2022169086}. In the quantum setting, this motivates the following question:
\begin{center}
\emph{Does maximal scrambling imply computational pseudorandomness?}
\end{center}

\begin{figure}[t]
\centering
\begin{tikzpicture}[
>=Latex,
axis/.style={line width=0.8pt},
level/.style={line width=0.8pt},
dashlevel/.style={line width=0.8pt, dashed, dash pattern=on 4pt off 4pt},
lab/.style={font=\small},
smalllab/.style={font=\footnotesize}
]

\def\xaxis{0}
\def\xright{4.2}
\def\xlabel{4.75}

\def\yone{0.35}
\def\ydash{0.70}
\def\ytwo{1.00}
\def\ythree{1.60}
\def\ylog{3.05}
\def\ytop{4.20}

\draw[axis,->] (\xaxis,0.15) -- (\xaxis,4.55);
\node[lab,above] at (\xaxis,4.55) {Design order};

\node[lab,left] at (\xaxis-0.12,\yone) {$1$};
\node[lab,left] at (\xaxis-0.12,\ytwo) {$2$};
\node[lab,left] at (\xaxis-0.12,\ythree) {$3$};
\node[lab,left] at (\xaxis-0.12,\ylog) {$t=\Theta(n)$};

\draw[level] (\xaxis,\yone) -- (0.15,\yone);

\draw[dashlevel] (\xaxis,\ydash) -- (\xright,\ydash);
\node[lab,right] at (\xlabel,\ydash) {Min-scrambling};

\draw[level] (\xaxis,\ytwo) -- (\xright,\ytwo);
\draw[level] (\xaxis,\ythree) -- (\xright,\ythree);


\node[lab] at (2.05,2.20) {$\cdots$};

\draw[level] (\xaxis,\ylog) -- (\xright,\ylog);
\node[lab,right] at (\xlabel,\ylog) {Max-scrambling};

\fill[gray!18] (\xaxis,\ylog) rectangle (\xright,\ytop);
\node[lab] at (2.05,3.65) {Haar-like entropies};

\draw[level] (\xaxis,\ylog) -- (\xright,\ylog);
\draw[level,gray!35] (\xaxis,\ytop) -- (\xright,\ytop);
\draw[level,gray!35] (\xright,\ylog) -- (\xright,\ytop);

\end{tikzpicture}
\caption{The hierarchy of scrambling and unitary design order, following~\cite{Liu_2018}.
At $t=\Theta(n)$, moment matching achieves a \emph{saturation} via near-maximal expected min-entropy across a fixed balanced crossed cut, and hence near-maximal expected entropies at every positive R\'enyi order.}
\label{fig:scrambling-hierarchy}
\end{figure}

Our results suggest caution; they show that the answer is still \emph{no}, at least in the general case. As a byproduct of our separation between unitary designs and pseudorandom unitaries in \Cref{thm:phase}, we find that the $FB$ ensemble can yield an (approximate) unitary $t$-design in the maximal scrambling regime $t=\Theta(n)$, and still admit enough structure to be efficiently distinguishable from Haar. This means that maximally scrambled systems need not, in general be pseudorandom.

To state this more precisely, let $n$ be even and divide both the input reference $AB$ and the output $CD$ into equal halves. We can associate each unitary $U$ with its normalized Choi state
\begin{equation}
\ket{U}_{ABCD}
=2^{-n/2}\sum_{x\in\bit^n}\ket{x}_{AB}\otimes U\ket{x}_{CD},
\label{eq:intro-choi-state}
\end{equation}
and let $\rho_{AC}(U)=\Tr_{BD}(\proj{U})$. Entanglement across $AC:BD$ measures how the input subsystems have been mixed among the output subsystems. The min-entropy $S_{\min}(\rho)=-\log_2\norm{\rho}_\infty$ has maximal value $n$ for this reduced state.

Using \Cref{thm:phase} and the results in~\cite{Liu_2018}, we can conclude the following\footnote{Technically, the entropy saturation results for $t=\Theta(n)$ in~\cite{Liu_2018} only hold for exact unitary $t$-designs; however, we found that the argument can be extended to our approximate designs as the error is negligibly small.}: there exists a polynomial-depth circuit ensemble ensemble $\nu = \{\nu_n\}$ that forms an approximate unitary $t$-design with $t=\Theta(n)$ and sufficiently small error such that, for sufficiently large even $n$,
\[
 \underset{U\sim\nu_n}{\E} S_{\min}\bigl(\rho_{AC}(U)\bigr)\geq n-8
\]
for every fixed balanced crossed cut $AC:BD$. Nevertheless, an efficient quantum algorithm can distinguish $\nu$ from Haar measure with constant advantage using $O(n^2)$ forward queries.

To obtain this result, it suffices to use the $FB$ construction with $t=\Theta(n)$ and choose the mixing circuit $B$ deep enough to control Choi moments of order $\sim 4n$. This gives the expected min-entropy bound above. Importantly, increasing the depth of $B$ preserves the relation $U\ket{+}^{\otimes n}=F\ket{+}^{\otimes n}$, so the same phase-learning algorithm continues to distinguish the ensemble. Thus the ensemble exhibits near-maximal entanglement entropies while preserving algebraic structure that an efficient observer can learn. This suggests some caution when using unitary designs to model scrambling in many-body physics and black-hole evolution~\cite{Hayden_Preskill_2007,Liu_2018,poremba_et_al:LIPIcs.ITCS.2026.109}, as even maximally scrambled systems can exhibit non-trivial structure which is accessible to efficient experiments. Moreover, it suggests that pseudorandomness can serve as much more reliable proxy for Haar randomness, providing a new frontier for complexity beyond maximal scrambling.

At the same time, our results suggest a new conjecture for how pseudorandomness may very well emerge in simple quantum processes. Matching moments up to $t=\Theta(n)$ has a clear physical motivation through maximal scrambling, and our separation at this order relies on a artificial non-local structured ensemble.
This leaves open a promising intermediate regime, where we retain the structure of independent \emph{local} interactions to ensure some amount of "simplicity" while also matching the statistical moments up to the maximal scrambling scale. Motivated by these principles, we therefore put forward the following new conjecture.

\begin{conjecture}[Pseudorandomness in the Maximal Scrambling Regime]
Let $\{\nu_n\}_n$ be an efficiently samplable ensemble of local quantum gates on $n$ qubits, and let $\nu_n^{*T}$ denote the ensemble obtained by independently composing $T=\poly(n)$ gates drawn from $\nu_n$.
If $\{\nu_n^{*T}\}_n$ forms an approximate unitary $t$-design for $t=\Theta(n)$ with negligible error, then $\{\nu_n^{*T}\}_n$ is a pseudorandom unitary ensemble.
\end{conjecture}

As in the quantum HMMR conjecture, each gate acts non-trivially on only a constant number of qubits, and the distribution $\nu_n$ may depend on $n$. Importantly, neither of our counterexamples applies to this setting: the near-Clifford argument establishes designs only at fixed constant order, and the $FB$ ensemble uses a separate, artificial phase circuit and a structured nonlocal mixing. 

Known results on local random quantum circuits already give polynomial-depth designs at $t=\Theta(n)$~\cite{Brand_o_2016}.
Haferkamp and Hunter-Jones~\cite{hunterjones2026idealrandomquantumcircuits} recently showed that their outcome probabilities become uniformly small at order $t=\Theta(n)$, which can be seen as further evidence for this threshold.
More generally, it suggests studying growing-order independence in local reversible circuits as a possible way to make progress beyond the classical HMMR counterexamples. We hope that the study of this connection to scrambling and entanglement saturation provides a new research direction for the Gowers conjecture in both the classical and the quantum setting.

\bibliographystyle{alpha}
\bibliography{main}

\appendix

\section{Bounds on the collision probability}\label{appendix:collision}
For $ q\leq d$, recall the definitions
\begin{equation}
 Q_\nu^{(q)}=\underset{C\sim\nu}{\E}(C^\dagger)^{\otimes q}
       \Pi_{\mathrm{dist}(d,q)}C^{\otimes q},\qquad
 \eta_q(\nu)=\|I-Q_\nu^{(q)}\|_\infty.
 \label{eq:collision-def-appendix}
\end{equation}
The quantity $\eta_q(\nu)$ is the maximum average collision
probability over all joint inputs, including entangled inputs. Since
$0\preceq Q_\nu^{(q)}\preceq I$, it lies in $[0,1]$ and satisfies
\begin{equation}
 (1-\eta_q(\nu))I\preceq Q_\nu^{(q)}\preceq I.
 \label{eq:collision-order-appendix}
\end{equation}

We will use the following standard consequence of the Schur--Weyl duality (see e.g.~\cite{harrow2013church}).

\begin{lemma}[Two-copy symmetric twirl]
\label{lem:two-copy-symmetric-twirl}
For $N\geq1$, let $\Pi_{\mathrm{sym}}$ project onto the symmetric
subspace of $(\C^N)^{\otimes2}$. If
$X=\Pi_{\mathrm{sym}}X\Pi_{\mathrm{sym}}$, then
\begin{equation}
 \underset{V\sim\HaarU}{\E}V^{\otimes2}X(V^\dagger)^{\otimes2}
 =\frac{\Tr X}{N(N+1)/2}\Pi_{\mathrm{sym}}.
 \label{eq:two-copy-symmetric-twirl}
\end{equation}
\end{lemma}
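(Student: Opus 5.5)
The plan is to reduce to the well-known description of the commutant of $V^{\otimes 2}$ and then fix the single free constant by taking traces. By Haar invariance, the left-hand side $Y:=\E_{V}V^{\otimes2}X(V^\dagger)^{\otimes2}$ commutes with $W^{\otimes2}$ for every $W\in\mathrm U(N)$. Indeed, substituting $V\mapsto WV$ and using left invariance of $\HaarU$ gives $W^{\otimes2}Y(W^\dagger)^{\otimes2}=Y$. By Schur--Weyl duality for two copies, the commutant of $\{W^{\otimes2}\}$ is spanned by the identity and the swap $\mathrm{SWAP}$. Equivalently, it is spanned by the two isotypic projectors $\Pi_{\mathrm{sym}}=(I+\mathrm{SWAP})/2$ and $\Pi_{\mathrm{anti}}=(I-\mathrm{SWAP})/2$. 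Hence $Y=\alpha\Pi_{\mathrm{sym}}+\beta\Pi_{\mathrm{anti}}$ for some scalars $\alpha,\beta$. When $N=1$ the antisymmetric space is zero and only the $\alpha$ term is present.

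Next I use the hypothesis $X=\Pi_{\mathrm{sym}}X\Pi_{\mathrm{sym}}$ to kill the antisymmetric component. Each $V^{\otimes2}$ commutes with $\mathrm{SWAP}$, and hence with $\Pi_{\mathrm{sym}}$. Therefore $\Pi_{\mathrm{sym}}Y\Pi_{\mathrm{sym}}=Y$ term by term inside the expectation. Multiplying $\alpha\Pi_{\mathrm{sym}}+\beta\Pi_{\mathrm{anti}}$ on both sides by $\Pi_{\mathrm{sym}}$ leaves $\alpha\Pi_{\mathrm{sym}}$, since $\Pi_{\mathrm{sym}}\Pi_{\mathrm{anti}}=0$. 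So $Y=\alpha\Pi_{\mathrm{sym}}$.

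Finally I determine $\alpha$ by taking traces. Conjugation by unitaries preserves trace, so $\Tr Y=\Tr X$. Also $\Tr\Pi_{\mathrm{sym}}=N(N+1)/2$, which is the dimension of the symmetric subspace and follows from $\Tr I=N^2$ and $\Tr\mathrm{SWAP}=N$. This gives $\alpha=\Tr X/(N(N+1)/2)$, which is exactly the formula claimed in the lemma.

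The only nontrivial input is the Schur--Weyl statement that the two-copy commutant is two-dimensional, which I will cite (e.g.~\cite{harrow2013church}) rather than reprove. If a self-contained argument is wanted, one can instead note that $\mathrm{U}(N)$ acts irreducibly on $\mathrm{Sym}^2(\C^N)$. Then Schur's lemma applied to the restriction of $Y$ to that subspace gives the scalar form directly. This avoids needing the full commutant, and I expect no real obstacle beyond this citation.
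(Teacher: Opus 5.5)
Your proposal is correct and matches the paper, which gives no separate proof and simply cites the lemma as a standard consequence of Schur--Weyl duality. Your argument is that standard derivation: the commutant is spanned by $I$ and $\mathrm{SWAP}$, the hypothesis $X=\Pi_{\mathrm{sym}}X\Pi_{\mathrm{sym}}$ removes the antisymmetric component, and the trace identity $\Tr\Pi_{\mathrm{sym}}=N(N+1)/2$ fixes the constant.
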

\begin{lemma}[Collision probability]
\label{lem:collision}
For $d=2^n$, $n\geq1$, and $1\leq q\leq d$, Haar measure
$\Binf$ on the group fixing $\ket{+}^{\otimes n}$ satisfies
\begin{equation}
 \eta_q(\Binf)\leq\frac{q(q-1)}d,\qquad
 \eta_q(\HaarU)\leq\frac{q(q-1)}{d+1}.
 \label{eq:collision-bound}
\end{equation}
The Haar bound also holds in every integer dimension $D\geq2$,
with $d$ replaced by $D$.
\end{lemma}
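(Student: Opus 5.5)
Since $\eta_q(\nu)=\|I-Q_\nu^{(q)}\|_\infty$ and $0\preceq Q_\nu^{(q)}\preceq I$, it suffices to show that for every unit vector $\ket{\psi}\in(\C^d)^{\otimes q}$, possibly entangled across copies, the collision weight $\bra{\psi}(I-Q_\nu^{(q)})\ket{\psi}$ is at most the claimed bound. Write $I-\Pi_{\mathrm{dist}(d,q)}\preceq\sum_{i<j}\Delta_{ij}$, where $\Delta_{ij}=\sum_x\proj{x}_i\otimes\proj{x}_j$ projects onto equal labels in copies $i,j$. This is a union bound, valid because both sides are diagonal in the computational basis and every non-distinct tuple has at least one colliding pair. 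It reduces the problem to the two-copy quantity
\[
\E_{C}\bra{\psi}(C^\dagger)^{\otimes 2}_{ij}\Delta_{ij}C^{\otimes 2}_{ij}\ket{\psi}\le \lambda_2 ,
\]
for the appropriate $\lambda_2$. Summing over the $\binom q2$ pairs then gives $\eta_q\le\binom q2\lambda_2$. To reach the claimed constants we need $\lambda_2\le 2/(d+1)$ for Haar and $\lambda_2\le 2/d$ for $\Binf$.

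\textbf{Haar case.} $\Delta$ is supported on the symmetric subspace: each $\ket{x,x}$ is symmetric. So $\Delta=\Pi_{\mathrm{sym}}\Delta\Pi_{\mathrm{sym}}$, and we may twirl by $V^\dagger$ (Haar is inverse-invariant) and apply \cref{lem:two-copy-symmetric-twirl} with $N=D$. This gives $\E_V (V^\dagger)^{\otimes 2}\Delta V^{\otimes 2}=\frac{D}{D(D+1)/2}\Pi_{\mathrm{sym}}=\frac{2}{D+1}\Pi_{\mathrm{sym}}\preceq\frac{2}{D+1}I$. The pair bound is then $\binom q2\frac{2}{D+1}=\frac{q(q-1)}{D+1}$, in any dimension $D\ge2$.

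\textbf{$\Binf$ case (the main step).} Decompose $\C^d=S\oplus S^\perp$ with $S=\Span\{\ket{+}^{\otimes n}\}$, $N=d-1$, and $C=1_S\oplus V$ with $V$ Haar on $S^\perp$. The symmetric subspace of $(\C^d)^{\otimes2}$ splits into three $C^{\otimes2}$-invariant blocks: $\Span\{\ket{+^n,+^n}\}$, the symmetrized $S\otimes S^\perp$ part (a copy of $\C^N$ on which $C^{\otimes 2}$ acts as $V$), and $\mathrm{Sym}^2(S^\perp)$. Twirling $\Delta$ kills the off-diagonal blocks because their phases are nontrivial: they transform under $V$, $V^{\otimes2}$, and $V\otimes V$ with unmatched factors, so Haar averaging annihilates them. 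On each diagonal block the averaged operator is a multiple of the identity, by \cref{lem:two-copy-symmetric-twirl} and Schur's lemma. The coefficients are the traces of $\Delta$ compressed to each block divided by the block dimensions. These can be computed from $\bra{+^n,+^n}\Delta\ket{+^n,+^n}=1/d$, from $\Tr[\Delta(\proj{+^n}\otimes I+I\otimes\proj{+^n})]=2/d\cdot d\cdot\frac1d\cdot$ (simple sums of $|\langle x|+^n\rangle|^2=1/d$), and from $\Tr\Delta=d$. Concretely, the three eigenvalues are approximately $1/d$, $\frac{2(1-1/d)}{d-1}\cdot\frac12\cdot\ldots$, and $\frac{d-O(1)}{N(N+1)/2}$. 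I expect each to be at most $2/d$, with the $\mathrm{Sym}^2(S^\perp)$ block giving the binding value: its compressed trace is $d-2\cdot\frac{1}{d}\cdot d\cdot\ldots+\frac1d\le d-1$, and divided by $\frac{(d-1)d}{2}$ this yields $\le 2/d$. The main obstacle is this careful bookkeeping of the compressed traces, specifically checking that subtracting the $S$-components leaves the trace on $\mathrm{Sym}^2(S^\perp)$ at most $d-1$. A fallback is simply to bound the largest block eigenvalue by $\frac{\Tr\Delta}{\dim}$ restricted appropriately. Summing over pairs then gives $\eta_q(\Binf)\le q(q-1)/d$.
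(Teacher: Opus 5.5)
Your reduction (the union bound $I-\Pi_{\mathrm{dist}(d,q)}\preceq\sum_{i<j}\Delta_{ij}$ followed by a two-copy twirl) and your Haar case are exactly the paper's argument. For $B_\infty$, your split of the symmetric subspace into the three charge sectors $\mathrm{Sym}^2(S)$, $S\odot S^\perp$, $\mathrm{Sym}^2(S^\perp)$, with Schur's lemma on each, is equivalent to what the paper does: it expands $\ket{x,x}$ using $\ket{k_x}=\ket{x}-d^{-1/2}\ket{+}^{\otimes n}$ and averages over the central phases $1\oplus e^{i\theta}I$. The problem is that the only step that produces the constant is left as ``bookkeeping,'' and your sketch of it is wrong where it matters. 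The binding block is $S\odot S^\perp$, not $\mathrm{Sym}^2(S^\perp)$, and on it the bound $2/d$ holds with equality. So the stray factor $\tfrac12$ in your middle-block expression, which would give $1/d$, is an error. Your fallback of dividing $\Tr\Delta=d$ by a block dimension also fails on that block: it gives $d/(d-1)$ and proves nothing.

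To finish, set $P_+=\proj{+^n}$ and $P_\perp=I-P_+$. Since $\Delta$ is supported on the symmetric subspace, these give the compressed traces on your three blocks:
\begin{itemize}
\item $\Tr[\Delta(P_+\otimes P_+)]=\sum_x|\langle x|+^n\rangle|^4=1/d$.
\item $\Tr[\Delta(P_+\otimes I+I\otimes P_+)]=2$, hence $\Tr[\Delta(P_+\otimes P_\perp+P_\perp\otimes P_+)]=2-2/d$.
\item $\Tr[\Delta(P_\perp\otimes P_\perp)]=d-2+1/d=(d-1)^2/d$.
\end{itemize}
Dividing by the block dimensions $1$, $d-1$ and $d(d-1)/2$ gives the scalars $1/d$, $2/d$ and $2(d-1)/d^2$. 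These are the same three values the paper obtains. Its middle operator is $1/d$ times the all-ones $2\times 2$ matrix tensored with $I_{S^\perp}$, and it has eigenvalue exactly $2/d$ on the symmetric combination, which is your block. The maximum is therefore $2/d$, and with this computation filled in your proof is complete.
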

\begin{proof}
Let $\Pi_{=}=\sum_x\proj{x,x}$,
$\mathcal S=\Span\{\ket{+}^{\otimes n}\}$,
and $N=d-1$. The vectors
$\ket{k_x}=\ket{x}-d^{-1/2}\ket{+}^{\otimes n}$ satisfy
$\sum_x\proj{k_x}=I_{(\ket{+}^{\otimes n})^\perp}$ and $\|k_x\|^2=1-1/d$.
Moreover,
\begin{equation}
 \ket{x,x}=\frac1d\ket{+}^{\otimes2n}
 +\frac1{\sqrt d}\bigl(\ket{+}^{\otimes n}\otimes\ket{k_x}
       +\ket{k_x}\otimes\ket{+}^{\otimes n}\bigr)
 +\ket{k_x}^{\otimes2}.
 \label{eq:collision-decomposition}
\end{equation}
The subgroup acts as $1\oplus V$ on $\mathcal S\oplus(\ket{+}^{\otimes n})^\perp$.
Averaging over its central phases $1\oplus e^{i\theta}I_{(\ket{+}^{\otimes n})^\perp}$
removes cross terms between the three displayed summands. On
$\mathcal S^{\otimes2}$ the averaged collision operator has
eigenvalue $1/d$. On
$(\mathcal S\otimes(\ket{+}^{\otimes n})^\perp)\oplus((\ket{+}^{\otimes n})^\perp\otimes\mathcal S)$,
identified with $\C^2\otimes(\ket{+}^{\otimes n})^\perp$, it is
\[
 \frac1d\begin{pmatrix}1&1\\1&1\end{pmatrix}\otimes I_{(\ket{+}^{\otimes n})^\perp},
\]
which is already invariant and has norm $2/d$.
On $\bigl((\ket{+}^{\otimes n})^\perp\bigr)^{\otimes2}$, \cref{lem:two-copy-symmetric-twirl}
gives a multiple of the symmetric projector with coefficient
\[
 \frac{\sum_x\|k_x\|^4}{N(N+1)/2}
 =\frac{2(d-1)}{d^2}\leq\frac2d.
\]
Hence
\begin{equation}
 \underset{B_\infty\sim\Binf}{\E}(B_\infty^\dagger)^{\otimes2}
          \Pi_{=}B_\infty^{\otimes2}\preceq\frac2d I.
 \label{eq:two-copy-collision}
\end{equation}
For full Haar measure, $\Pi_{=}$ has symmetric support and trace
$d$, so the same lemma gives $2\Pi_{\mathrm{sym}}/(d+1)$.
Finally,
$I-\Pi_{\mathrm{dist}(d,q)}\preceq\sum_{i<j}\Pi_{=}^{(i,j)}$,
with identities on the other copies: every nondistinct tuple has an
equal pair. Conjugating and averaging
this inequality, then summing the $\binom q2$ two-copy bounds,
proves both assertions. For $q=1$ the sum is empty.
\end{proof}
\section{Proof of~\cref{lemma:decomp}}\label{appendix:decomposition}

\begin{lemma}[Controlled decomposition of fixed-vector special unitaries]
\label{lem:su-fixed-vector-palindrome}
For $i\in\{1,\ldots,n\}$ and $V\in\operatorname{U}(2^{n-1})$, write
\[
    \mathsf{C}_i(V)=|+\rangle\langle +|_i\otimes I
           +|-\rangle\langle -|_i\otimes V,
\]
where $V$ acts on all qubits other than $i$.
Every $U\in\operatorname{SU}\bigl((\ket{+}^{\otimes n})^\perp\bigr)$ admits a factorization
\[
    U=\mathsf{C}_1(L_1)\cdots \mathsf{C}_{n-1}(L_{n-1})\mathsf{C}_n(V_n)
      \mathsf{C}_{n-1}(R_{n-1})\cdots \mathsf{C}_1(R_1),
\]
with all $L_k,R_k,V_n\in\operatorname{SU}(2^{n-1})$.
\end{lemma}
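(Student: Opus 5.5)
Throughout, conjugate by $H^{\otimes n}$. This maps $\ket{+}^{\otimes n}$ to $\ket{0^n}$, and it maps $\mathsf{C}_i(V)$ to $\proj{0}_i\otimes I+\proj{1}_i\otimes V'$ with $V'=H^{\otimes(n-1)}VH^{\otimes(n-1)}$. So it suffices to factor every special unitary on $\mathcal H_n:=\ket{0^n}^\perp$ into gates that act as $I$ when qubit $i$ is $0$ and as an arbitrary $V'\in\operatorname{SU}(2^{n-1})$ when qubit $i$ is $1$. The palindrome suggests an induction on $n$. The base case $n=1$ is trivial, since $\mathcal H_1=\Span\{\ket1\}$ and $\operatorname{SU}(1)$ is trivial.

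For the inductive step, split $\mathcal H_n=A\oplus B$ with
\[
A=\ket1_1\otimes\C^{2^{n-1}},\qquad B=\ket0_1\otimes\ket{0^{n-1}}^\perp .
\]
A gate $\mathsf{C}_1(V)$ acts as $V$ on $A$ and as $I$ on $B$. Conversely, a gate $\mathsf{C}_j(V)$ with $j\geq2$ and $V=\proj0_1\otimes V_0+\proj1_1\otimes V_1$ acts on $B$ exactly as the $(n-1)$-qubit gate $\mathsf{C}_j^{(n-1)}(V_0)$. By the induction hypothesis, these gates can realise any element of $\operatorname{SU}(B)$ through a palindrome $\mathsf{C}_2\cdots\mathsf{C}_n\cdots\mathsf{C}_2$. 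Meanwhile the free $V_1$ parts act on $A$ in a way that I will later absorb.

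The first concrete step is a cosine--sine (KAK) decomposition of $U$ relative to $A\oplus B$:
\[
U=(L_A\oplus L_B)\,\Sigma\,(R_A\oplus R_B).
\]
Here $\Sigma$ is a direct sum of $2\times2$ rotations pairing an orthonormal basis of $B$ with $2^{n-1}-1$ vectors of $A$, and it fixes one leftover vector of $A$. The outer $A$-blocks $L_A,R_A$ will become $\mathsf{C}_1(L_1)$ and $\mathsf{C}_1(R_1)$. The determinant condition $\det U=1$ lets me push phases around so that all factors are special. The $B$-blocks $L_B,R_B$ are then pushed inward and merged into the inner palindrome.

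The second step is to show that the middle operator
\[
\mathsf{C}_1(L_1)^{-1}\,U\,\mathsf{C}_1(R_1)^{-1}
\]
can be written as $\mathsf{C}_2(L_2)\cdots\mathsf{C}_n(V_n)\cdots\mathsf{C}_2(R_2)$. To do this, I would choose the rotation pairing in $\Sigma$ so that the vector $\ket1_1\ket y$ is paired with $\ket0_1\ket y$ for $y\neq0^{n-1}$; the freedom in choosing $L_A,R_A$ permits this alignment. Then $\Sigma$ acts on qubit $1$ only, coherently controlled by the remaining register being $\neq0^{n-1}$. The next task is to write this operator, together with $L_B,R_B$, in terms of gates $\mathsf{C}_j$ for $j\geq2$. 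These gates act on qubit $1$ freely through $V_0,V_1$, provided some qubit $j\geq2$ is $1$, which is exactly the condition $y\neq0^{n-1}$. I would realise the qubit-$1$ rotation $\Sigma$ inside the innermost gate $\mathsf{C}_n(V_n)$, combined with the recursive palindrome on qubits $2,\dots,n$. In effect, I would strengthen the induction hypothesis to allow an arbitrary $y$-dependent single-qubit action on an extra spectator qubit.

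I expect this second step to be the main obstacle. The difficulty is making $\Sigma$, which rotates between the $A$ and $B$ sectors, compatible with gates whose controls live on qubits $j\geq2$. One must also check that the uncontrolled actions $V_1$ induced on $A$ can be cancelled or absorbed into $L_1,R_1$ without breaking the palindrome. If the direct alignment fails, my fallback is a Lie-algebraic argument. I would show that the Lie algebras of the subgroups $\mathsf{C}_i(\operatorname{SU}(2^{n-1}))$ satisfy a Cartan-type decomposition $\mathfrak g=\mathfrak k_1\oplus\mathfrak p_1$, with $\mathfrak k_1$ generated by $\mathsf{C}_2,\dots,\mathsf{C}_n$. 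Then $G=K_1\exp(\mathfrak a)K_1$ yields the double-coset factorisation, and this is iterated.
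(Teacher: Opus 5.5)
There is a genuine gap. Your first step is exactly the paper's first step: after conjugating by $H^{\otimes n}$, you take a cosine--sine decomposition with respect to qubit $1$, with the leftover vector $\ket{1}\ket{0^{n-1}}$ and the pairing $\ket{1}\ket{y}\leftrightarrow\ket{0}\ket{y}$, and push the $B$-blocks inward. But the step you flag as the main obstacle is the whole content of the lemma, and the plan you give for it fails.

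Suppose the inner palindrome is the one supplied by your $(n-1)$-qubit induction hypothesis, so its gates are block-diagonal in qubit $1$, and $\Sigma$ is inserted into $\mathsf{C}_n(V_n)$. Then the block $\bra{1}_1(\cdot)\ket{0}_1$ of the product has rank at most that of $\bra{1}_1\mathsf{C}_n(V_n)\ket{0}_1=\proj{1}_n\otimes\bra{1}_1V_n\ket{0}_1$, which is at most $2^{n-2}$. The corresponding block of $\mathsf{C}_1(L_1)^{-1}U\,\mathsf{C}_1(R_1)^{-1}$ is $L_1^{-1}\bra{1}_1U\ket{0}_1$, which for generic $U$ has rank $2^{n-1}-1$. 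For $n\geq3$ these ranks are incompatible. The Lie-algebraic fallback does not rescue this. Writing $G=K_1\exp(\mathfrak a)K_1$ with $K_1$ generated by $\mathsf{C}_2,\dots,\mathsf{C}_n$ puts those gates outermost and $\mathsf{C}_1$ in the middle, which is the reverse of the nesting the statement requires.

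The missing idea is to identify the middle operator correctly and strengthen the statement to match. $\Sigma$ is the identity on $\ket{1}\ket{0^{n-1}}$, and every factor fixes $\ket{0^n}$. So the middle operator fixes $F_2=\C^2\otimes\ket{0^{n-1}}$ pointwise and is an arbitrary special unitary on $F_2^\perp$. Qubit $1$ is then a spectator on which the later gates must act nontrivially. No block-diagonal restriction is needed, and there are no ``$V_1$ parts'' to cancel.

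A single extra spectator qubit is not closed under recursion: at level $k$ there are $k-1$ spectators. The right hypothesis is that any $M_k\in\mathrm{SU}(2^n)$ fixing $F_k=(\C^2)^{\otimes(k-1)}\otimes\ket{0^{n-k+1}}$ pointwise equals $\mathsf{C}_k(L_k)\cdots\mathsf{C}_n(V_n)\cdots\mathsf{C}_k(R_k)$. The inductive step is your first step, applied to qubit $k$:
\begin{itemize}
\item Split $F_k^\perp$ into the part with $x_k=1$, of dimension $2^{n-1}$, and the part with $x_k=0$ and some later bit nonzero, of dimension $2^{n-1}-2^{k-1}$.
\item Place the $2^{k-1}$-dimensional identity block of the CS decomposition on $E_k=(\C^2)^{\otimes(k-1)}\otimes\ket{1}\ket{0^{n-k}}$.
\item Absorb the $x_k=0$ blocks into the middle factor, which then fixes $F_k\oplus E_k=F_{k+1}$ pointwise.
\item Correct determinants with a phase on a vector with $x_k=1$ orthogonal to $E_k$. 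Putting it on $E_k$ would break the fixed-subspace property.
\end{itemize}
At $k=n$, fixing $F_n$ pointwise means exactly $M_n=\mathsf{C}_n(V_n)$ with $\det V_n=1$. This is the paper's proof.
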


\begin{proof}
We first work in the computational basis, writing
\[
    \mathsf{C}^{(Z)}_i(V)=|0\rangle\langle0|_i\otimes I
                         +|1\rangle\langle1|_i\otimes V.
\]
Define
\[
    F_k=(\mathbb C^2)^{\otimes(k-1)}
        \otimes\operatorname{span}\{|0^{n-k+1}\rangle\}.
\]
We show that any $M_k\in\operatorname{SU}(2^n)$ fixing $F_k$
pointwise can be written as
\[
    M_k=\mathsf{C}^{(Z)}_k(L_k)M_{k+1}\mathsf{C}^{(Z)}_k(R_k),
\]
where $L_k,R_k\in\operatorname{SU}(2^{n-1})$ and
$M_{k+1}\in\operatorname{SU}(2^n)$ fixes $F_{k+1}$ pointwise.

For $k<n$, let $A_k$ be the subspace with qubit $k$ equal to $1$,
and let $D_k$ be the subspace with qubit $k$ equal to $0$ and
$(x_{k+1},\ldots,x_n)\neq 0^{n-k}$. Then
$F_k^\perp=A_k\oplus D_k$. Write $A_k=E_k\oplus A'_k$, where
\[
    E_k=(\mathbb C^2)^{\otimes(k-1)}
        \otimes\operatorname{span}\{|1\rangle|0^{n-k}\rangle\}.
\]
We have $\dim E_k=2^{k-1}$ and
$\dim A'_k=\dim D_k=2^{n-1}-2^{k-1}>0$.
The cosine--sine decomposition~\cite{golub2013matrix}, with the identity block placed on
$E_k$, gives
\[
    M_k\big|_{F_k^\perp}
    =(\widetilde L_k\oplus P_k)
    \begin{pmatrix}
        I_{E_k}&0&0\\
        0&C&-S\\
        0&S&C
    \end{pmatrix}
    (\widetilde R_k\oplus Q_k),
\]
in the order $E_k\oplus A'_k\oplus D_k$, where $C,S$ are diagonal
and $C^2+S^2=I$. Absorbing $P_k,Q_k$ into the middle factor and
extending by the identity on $F_k$ yields
\[
    M_k=\mathsf{C}^{(Z)}_k(\widetilde L_k)N_{k+1}\mathsf{C}^{(Z)}_k(\widetilde R_k),
\]
where $N_{k+1}$ fixes $F_k\oplus E_k=F_{k+1}$ pointwise.

We now enforce determinant one. Since $A'_k\neq\{0\}$, choose
unitaries $X_k,Y_k$ on $A_k$, acting identically on $E_k$, such that
\[
    \det X_k=(\det\widetilde L_k)^{-1},
    \qquad
    \det Y_k=(\det\widetilde R_k)^{-1}.
\]
For example, each correction can be a phase on one fixed unit vector
in $A'_k$, with the identity on its orthogonal complement. Set
\[
    L_k=\widetilde L_kX_k,\qquad
    R_k=Y_k\widetilde R_k,\qquad
    M_{k+1}=\mathsf{C}^{(Z)}_k(X_k^\dagger)N_{k+1}\mathsf{C}^{(Z)}_k(Y_k^\dagger).
\]
Then $M_k=\mathsf{C}^{(Z)}_k(L_k)M_{k+1}\mathsf{C}^{(Z)}_k(R_k)$ and
$\det L_k=\det R_k=1$. Both corrections fix $F_k\oplus E_k$,
so $M_{k+1}$ still fixes $F_{k+1}$ pointwise. Moreover,
$\det \mathsf{C}^{(Z)}_k(V)=\det V$, and hence $\det M_{k+1}=\det M_k=1$.
Starting with $M_1=H^{\otimes n}UH^{\otimes n}$, which fixes $F_1$,
and iterating gives the computational-basis factorization.

Conjugating by $H^{\otimes n}$ gives the claimed factorization, since
\[
    H^{\otimes n}\mathsf{C}^{(Z)}_i(V)H^{\otimes n}
    =\mathsf{C}_i\bigl(H^{\otimes(n-1)}VH^{\otimes(n-1)}\bigr),
\]
and conjugation preserves $\operatorname{SU}(2^{n-1})$.
\end{proof}

\section{Proof of~\cref{lem:adaptive-positive-representation}}
\label{app:adaptive-positive-representation}
Here, we prove the following restatement of~\cref{lem:adaptive-positive-representation}:
    Fix an adaptive $q$-query algorithm, where $1\leq q\leq d$.
There is a positive linear map $\Lambda$, depending only on the
algorithm, such that its averaged output state under an ensemble
$\xi$ is $\rho_\xi=\Lambda(J_\xi)$, where
\[
    J_\xi=d^{2q}\mathbb E_{U\sim\xi}
       (U_X^\dagger)^{\otimes q}|\Omega_{d^q}\rangle\langle\Omega_{d^q}|
       U_X^{\otimes q}.
\]
\begin{proof}[Proof of~\cref{lem:adaptive-positive-representation}]
Purify the query algorithm. At query $j$, retain its intended input
as $X_j$, introduce a fresh Bell pair $Y_jZ_j$ in
$|\Omega_d\rangle$, and continue the algorithm using $Z_j$ as the
oracle output. Retaining all $X_j,Y_j$ produces a normalized,
oracle-independent state $|\Psi\rangle_{WXY}$.
Gate teleportation gives
\[
    (\langle\Omega_d|_{XY}\otimes I_Z)
    (U_X\otimes I_{YZ})
    (|\chi\rangle_X\otimes|\Omega_d\rangle_{YZ})
    =d^{-1}U|\chi\rangle_Z,
\]
also when the input is entangled with a workspace. Contracting
all retained pairs thus reproduces the algorithm's output with
amplitude $d^{-q}$. Consequently,
\[
    \Lambda(Z)=\operatorname{Tr}_{XY}
       [(I_W\otimes Z)|\Psi\rangle\langle\Psi|]
\]
satisfies $\rho_\xi=\Lambda(J_\xi)$.
\end{proof}

Last, we show that $\rho_*=\Lambda(J_*)$ is a state:
First, $\Lambda$ is clearly positive and it remains to show that $\Lambda(J_*)$ is normalized.
Let $\omega_j$ be the reduction of $|\Psi\rangle\langle\Psi|$
to the first $j$ retained pairs. Causality gives
\[
    \omega_j=\tau_j\otimes I_{Y_j}/d,
    \qquad
    \operatorname{Tr}_{X_j}\tau_j=\omega_{j-1},
    \qquad\omega_0=1.
\]
Indeed, $Y_j$ is initially maximally mixed and independent of
the previously retained registers and $X_j$; later operations
act only on their complement.

Define
\[
    T_j=\sum_{\sigma\in S_j}\operatorname{Tr}
       [(R_{X_{\leq j}}(\sigma)\otimes R_{Y_{\leq j}}(\sigma))
        (I\otimes\Pi_{\mathrm{dist}(d,j)})\omega_j],
    \qquad T_0=1.
\]
When tracing out $Y_j$, a term with $\sigma(j)\neq j$ vanishes:
its diagonal entries would require two labels of a distinct tuple
to coincide. If $\sigma(j)=j$, there are $d-j+1$ choices for the
last label. The factor $I_{Y_j}/d$ therefore contributes
$(d-j+1)/d$. The surviving permutations act trivially on $X_j$,
so tracing $X_j$ replaces $\tau_j$ by $\omega_{j-1}$. Hence
\[
    T_j=\frac{d-j+1}{d}T_{j-1},
    \qquad
    \operatorname{Tr}\Lambda[J_0(I_X\otimes\Pi_Y)]
       =T_q=\frac{(d)_q}{d^q}.
\]
Thus $\operatorname{Tr}\Lambda(J_*)=1$.
\end{document}